\newcommand*{\pd}
[2]{\mathchoice{\frac{\partial#1}{\partial#2}}
  {\partial#1/\partial#2}{\partial#1/\partial#2}
  {\partial#1/\partial#2}}
\newcommand*{\fd}
[2]{\mathchoice{\frac{\delta#1}{\delta#2}}
  {\delta#1/\delta#2}{\delta#1/\delta#2}
  {\delta#1/\delta#2}}
\newcommand*{\pdS}[2]{\D_{#2}#1}
\newcommand*{\tfdS}[2]{\tilde{\delta}_{#2}#1}
\newcommand*{\fdS}[2]{{\delta}_{#2}#1}
\theoremstyle{plain}
\newtheorem{theorem}{Theorem}[section]
\newtheorem{conjecture}[theorem]{Conjecture}
\newtheorem{corollary}[theorem]{Corollary}
\newtheorem{lemma}[theorem]{Lemma}
\newtheorem{proposition}[theorem]{Proposition}
\theoremstyle{definition}
\newtheorem{definition}[theorem]{Definition}
\theoremstyle{remark}
\newtheorem{remark}[theorem]{Remark}
\newtheorem{convention}[theorem]{Convention}
\newcommand{\D}{{\partial}}
\newcommand{\cA}{\mathcal{A}}
\newcommand{\cF}{\mathcal{F}}
\newcommand{\cS}{\mathcal{S}}
\newcommand{\cE}{\mathcal{E}}
\newcommand{\cG}{\mathcal{G}}
\newcommand{\cR}{\mathcal{R}}
\newcommand{\cP}{\mathcal{P}}
\DeclareMathOperator{\ad}{ad}
\begin{document}

\title{Miura-reciprocal transformations and localizable Poisson pencils}
\author{P. Lorenzoni} \address[P. Lorenzoni]{Department of Mathematics and
  Applications, University of Milano Bicocca, Via Roberto Cozzi 55, 20125
  Milano, Italy and INFN sezione di Milano-Bicocca}
\email{paolo.lorenzoni@unimib.it}

\author{S. Shadrin} \address[S. Shadrin]{Korteweg--de Vries Institut for
  Mathematics, University of Amsterdam, Postbus 94248, 1090 GE Amsterdam, The
  Netherlands} \email{s.shadrin@uva.nl}

\author{R. Vitolo} \address[R. Vitolo]{Department of Mathematics and Physics
  \textquotedblleft E. De Giorgi\textquotedblright, University of Salento,
  via per Arnesano, 73100 Lecce, Italy and INFN sezione di Lecce}
\email{raffaele.vitolo@unisalento.it}

\date{}

\keywords{bi-Hamiltonian PDE, Hamiltonian operator, Miura transformation,
  reciprocal transformation,
  integrable systems.}

\subjclass[2020]{37K05, 37K10, 37K20, 37K25}
\maketitle

\begin{abstract} We show that the equivalence classes of deformations of
  localizable semisimple Poisson pencils of hydrodynamic type with respect to
  the action of the Miura-reciprocal group contain a local representative and
  are in one-to-one correspondence with the equivalence classes of deformations
  of local semisimple Poisson pencils of hydrodynamic type with respect to the
  action of the Miura group.
\end{abstract}

\tableofcontents

\section{Introduction}
\label{sec:preliminaries}

In 2001, Dubrovin and Zhang initiated a classification programme of
bi-Hamiltonian integrable PDEs in two independent variables
\cite{dubrovin01:_normal_pdes_froben_gromov_witten}. The group action that they
considered was that of Miura transformations, i.e., transformations depending
on the field variables and, polynomially, by their derivatives of higher order
through a perturbative series.

Among the questions that the above approach raises there is the issue of
extending the group action to include (possibly non-local) changes of variables
in one of the independent variables. Indeed, an important class of such
transformations is that of reciprocal transformations, which play an important
role in Mathematical Physics (see e.g. \cite{rogers69:_invar,rogers68:_recip,ferapontov89:_recip_trans,MR1127511,ferapontov03:_recip_hamil,MR2207050,MR2372106,MR2525536,BLASZAK2009341,LiuZhang:JacobiStructures,MR3054294}).

This paper is concerned with the action of the group of Miura-reciprocal
transformations, that is a natural group of simultaneous transformations of the
independent and the dependent variables of a (bi-)Hamiltonian system through a
perturbative series of derivatives of the field variables. Among other things,
we consider (1) the Miura-reciprocal transformations of the 1st kind and
rederive from the scratch the Ferapontov--Pavlov formula for the transformation
of a hydrodynamic bivector; (2) Miura-reciprocal transformations of the 2nd
kind (close to identity) and classify the orbits of their action on Poisson
pencils of weakly non-local bi-vectors of localizable shape with localizable
semi-simple hydrodynamic leading term; (3) a smaller group of
projective-reciprocal transformations and prove that they preserve the
Doyle--Pot\"emin canonical form of the bi-vectors.

A detailed comparison between previous results and our results can be read in the following
Subsections; we just stress that our result on the classification of bi-Hamiltonian integrable
structures provides a natural extension for the classification program in~\cite{dubrovin01:_normal_pdes_froben_gromov_witten} (that also incorporates and explains some of the results in~\cite{LiuZhang:JacobiStructures}). To the best of our knowledge it is the first result in the literature that provides a systematic classification of orbits of the action of the group of Miura-reciprocal transformations in the bi-Hamiltonian context (for a single Poisson structure this type of result is established in~\cite{ferapontov03:_recip_hamil,LiuZhang:JacobiStructures}).

\subsection{A variety of jet space transformation groups}

We consider a jet space $J^r(1,N)$, $r\geq 0$, with independent variable $x$
and dependent variables $u^i$, $i=1, \dots, N$, considered as coordinates on
some open domain $U\subset \mathbb{R}^N$. Let $u^{i,\sigma}$ denote the
$x$-derivative of $u^i$ taken $\sigma$ times.

Consider the transformations (i.e.~diffeomorphisms) of the jet space
$J^r(1,n)$. We begin from the most general type of transformation: a reciprocal
transformation coupled with a differential substitution. Reciprocal
transformations in a modern setting were introduced in
\cite{rogers69:_invar,rogers68:_recip} in the study of gas dynamics, and later
analyzed under a geometric viewpoint in
\cite{ferapontov89:_recip_trans,MR1127511} and
many other authors (see
e.~g.~\cite{ferapontov03:_recip_hamil,MR2207050,MR2372106,MR2525536,BLASZAK2009341,LiuZhang:JacobiStructures,MR3054294}
and references therein). The class of differential substitutions was introduced
in \cite{Ibragimov:TGApMP}, although many particular differential substitutions
were already present in the literature (in particular, the Miura
transformations).

\begin{definition}
  A \emph{reciprocal differential substitution} is a nonlocal transformation of
  the independent variable $x$ into the independent variable $y$ of the type
  \begin{equation}
    \label{eq:5}
    dy = Bdx,\qquad B=B(x,u^i,u^{i,\sigma})
  \end{equation}
  coupled with a differential substitution of the dependent variables of the
  form
  \begin{equation}
    \label{eq:2}
    w^i = Q^i(x,u^j,u^{j,\sigma}).
  \end{equation}
\end{definition}

By the fact that $dx(\partial_x)=1=dy(\partial_y)$ we obtain that total
derivatives are related by the formula
$\partial_x = B\partial_y$.
Note that, in general, the inversion of a differential substitution is a
nonlocal operation. We will soon focus on a more restrictive class of
transformations.

Reciprocal differential substitutions admit several interesting subclasses:

\begin{itemize}
\item A \emph{reciprocal transformation} is a nonlocal transformation of the
  independent variables $x$ into the independent variable $y$ of the type
  \begin{equation}
    \label{eq:5b}
    dy = Bdx,\qquad B=B(x,u^i,u^{i,\sigma})
  \end{equation}
  coupled with the identical transformation of the dependent variables. In
  practical applications the functions $u^i$ depend also on an additional
  parameter that plays the role of ``time'' of the system of evolutionary PDEs
  \[u^i_t=F^i(x,u^i,u^{i,\sigma}),\qquad i=1,...,n\] governing their
  evolution. Taking into account this additional variable reciprocal
  transformations are often defined as
  \[dy = Adt+Bdx,\] where the function $A,B$ are submitted to the closure
  condition $B_t=A_x$, 
  that is, $dy$ is a conservation low for the equation. Note that the
  coefficient $A$ doesn't enter the transformation law for $\partial_x$, and
  thus can be disregarded throughout this paper.

\item A reciprocal differential substitution is said to be \emph{holonomic} if
  there exists a differential function $P$ such that $B=\partial_x P$.
\item A general differential substitution of $(x,u^i)$ into $(y,w^j)$:
  \begin{equation}
    \label{eq:18}
    y=P(x,u^j,u^{j,\sigma}),\qquad  w^i=Q^i(x,u^j,u^{j,\sigma}),
  \end{equation}
  yields a holonomic reciprocal differential substitution $dy=\partial_x P dx$,
  $w^i=Q^i$ by differentiation (in this sense the two classes of
  transformations coincide).
\end{itemize}

The above two categories of transformations, basically local and nonlocal
differential substitutions, are, on the one hand, too wide to be used in the
context of the classification programs for evolutionary PDEs and related
geometric structures as, for instance, the one initiated by Dubrovin and Zhang
in \cite{dubrovin01:_normal_pdes_froben_gromov_witten}, and on the other hand
too restrictive since we are limited by fixing the parameter $r\geq 0$ that
controls the maximal order of jets.

For this reason, we introduce the space of differential polynomials
$\mathcal{A}$, and the following group of transformations, which is a subclass
of the reciprocal differential substitutions. Consider a jet space
$J^\infty(1,N)$ (considered as an inductive limit of the jet spaces $J^r(1,N)$,
$r\to \infty$) with independent variable $x$ and dependent variables $u^i$,
$i=1, \dots, N$.
Denote $u^{i,\sigma}$ the $x$-derivative of $u^i$ taken $\sigma$ times. We
associate with this space the algebra of functions
$\cA \coloneqq C^\infty(U)[[u^{i,\sigma}, i=1,\dots,N, \sigma\geq 1]]$, where
$C^\infty(U)$ is the space of smooth functions on a domain
$U\subset \mathbb{R}^N$ in the coordinates $u^1,\dots,u^N$.
There is a natural gradation on the algebra of densities $\cA$ given by
$\deg_{\D_x} u^{i,\sigma}$.
Let $\cA_d$ 
denote the $\deg_{\D_x}$-degree $d$ part of $\cA$, which is a finite
dimensional module over $C^\infty(U)$. 

\begin{definition} \label{def:Miura-reciprocal-trans} A Miura-type reciprocal
  differential substitution, or Miura-reciprocal transformation for short, is a
  transformation of the type
  \begin{align}
    \label{eq:756}
    dy &= \left(\sum^\infty_{k=0}
         \epsilon^kH_k(u^j,u^{j,1},\ldots,u^{j,k})\right)dx,
    \\ \notag
    w^i & = \sum^\infty_{k=0}
          \epsilon^kK^i_k(u^j,u^{j,1},\ldots,u^{j,k}),\quad i=1,\ldots,N,
  \end{align}
  with $H_k,K^i_k\in\mathcal{A}_k$ and
  \begin{align}
    H_0\neq 0, \qquad \det\left(\frac{\partial K^i_0(u^j)}{\partial u^k}
    \right)\neq 0.
  \end{align}
\end{definition}
The formal dispersive parameter $\epsilon$ that we introduce here to control
the $\deg_{\D_x}$-degree is, in principle, not strictly necessary but it is
very convenient in particular computations and applications.

The set of all Miura-reciprocal transformations is denoted by $\mathcal{R}$. It
is a group with respect to the composition, and it has some distinguished
subgroups:
\begin{itemize}
\item the subgroup $\mathcal{R}_{DS}$ of Miura differential substitutions, that
  are Miura-type reciprocal differential substitutions which are also holonomic
  differential substitutions of the following type:

  There exists $P=x+P_0$, with $P_0=\sum_{k=0}^\infty\epsilon^kF_k$ and
  $F_k\in\mathcal{A}_k$, such that
  \begin{equation}
    \label{eq:80}
    \partial_xP = \sum^\infty_{k=0}
    \epsilon^kH_k(u^j,u^j_x,\ldots,u^j_\sigma);
  \end{equation}
\item the subgroup of Miura transformations characterized by $H_0=1$ and
  $H_{k}=0$ for all $k\geq 1$. This subgroup is called the \emph{Miura group}
  $\mathcal{G}\subset\mathcal{R}$
  \cite{dubrovin01:_normal_pdes_froben_gromov_witten} and bears his name from
  the transformation relating KdV and modified KdV equations introduced by Miura
  \cite{Miura1968KortewegdeVE}. Note that the Miura group is also a subgroup of
  the group of Miura differential substitutions:
  $\mathcal{G}\subset\mathcal{R}_{DS}$.
\end{itemize}

\begin{definition}
  By analogy with the way the standard Miura group is typically presented, we
  introduce the following two subgroups.
  \begin{itemize}
  \item We define Miura-reciprocal transformations of the \emph{1st kind} to be
    the Miura-reciprocal transformations of the form
    \begin{align}
      \label{eq:Miura-reciprocal-1st-kind}
      dy & = H_0(u^j)dx,
      \\ \notag
      w^i & =K^i_0(u^j), && i=1,\ldots,N.
    \end{align}
    The group of all Miura RDS of the first type is denoted by $\mathcal{R}_I$.
    This group contains as a subgroup the group of Miura transformations of the
    1st kind, $\mathcal{G}_I\subset\mathcal{R}_I$, characterized by $H_0=1$.
  \item We define Miura-reciprocal transformations of the \emph{2nd kind} to be
    the Miura-reciprocal transformations of the form
    \begin{align}
      \label{eq:Miura-reciprocal-2nd-kind}
      dy & = \left(1 + \sum^\infty_{k=1}
           \epsilon^kH_k(u^j,u^{j,1},\ldots,u^{j,\sigma})\right)dx,
      \\ \notag 
      w^i & = u^i + \sum^\infty_{k=1}
            \epsilon^kK^i_k(u^j,u^{j,x},\ldots,u^{j,\sigma}),&&  i=1,\ldots,N.
    \end{align}
    The group of all Miura-reciprocal transformations of the second type is
    denoted by $\mathcal{R}_{II}$. It contains as a subgroup the group of Miura
    transformations of the 2nd kind, $\mathcal{G}_{II}\subset\mathcal{R}_{II}$,
    characterized by $H_k=0$ for all $k\geq 1$.
  \end{itemize}
\end{definition}

\begin{definition} A distinguished subgroup of $\mathcal{R}_I$ is the group of
  \emph{projective reciprocal transformations} $\cP$. Such transformations are
  characterized by the requirements that $K^i$ in
  Equation~\eqref{eq:Miura-reciprocal-1st-kind} is a projective transformation
  (in an affine chart) and $H_0$ is the common denominator of the projective
  transformation. More explicitly,
  \begin{align}
    \label{eq:81}
    dy & = (a^0_j u^j + a^0_0)dx,\\ \notag 
    w^i & =\frac{a^i_j u^j + a^i_0}{a^0_ju^j + a^0_0}, &&  i=1,\ldots,N.
  \end{align}
\end{definition}

The goal of this paper is to discuss some aspects of the actions of these
groups on the natural suitable geometric structures that emerge in the study of
integrable systems of evolutionary PDEs. In order to describe our results we
have to recall some of these structures, which we do in the rest of the
introduction.

\subsection{Action of the transformation groups}
\label{sec:acti-transf-groups}

The above group of Miura reciprocal differential substitutions act on spaces of
geometric entities that play important roles in the geometric theory of
integrability. In particular, it acts on:
\begin{itemize}
\item densities, that have the form
  \begin{equation}
    \label{eq:82}
    F=\int f(u^j,u^{j,\sigma})\,dx\in \cF \coloneqq \mathcal{A}/\partial_x\mathcal{A},
    \quad\text{with}\quad f\in\mathcal{A};
  \end{equation}
\item variational vector fields, that include symmetries of partial
  differential equations, and have the form
  \begin{equation}\label{eq:83}
    \varphi=\varphi^i(u^j,u^{j,\sigma})\fdS{}{u^i},\qquad
    \varphi^i\in\mathcal{A};
  \end{equation}
\item covector-valued densities, that include characteristic vectors of
  conserved quantities of differential equations, and have the form
  \begin{equation}
    \label{eq:84}
    \psi=\psi_i(u^j,u^{j,\sigma})du^i\otimes dx,\qquad \psi_i\in\mathcal{A};
  \end{equation}

\item the Euler--Lagrange operator, which sends densities into covector-valued
  densities,
  \begin{equation}
    \label{eq:85}
    \mathcal{E}(F)=\fdS{F}{u^i}du^i\otimes dx;
  \end{equation}
\item variational multivectors of degree $p$, that include Hamiltonian
  operators of partial differential equations as particular bivectors. They can
  be regarded as maps from $(p-1)$-covector-valued densities to variational
  vector fields.
\end{itemize}

In Section~\ref{sec:diff-subst} we will prove our change of coordinate formulae
for reciprocal differential substitutions
for these geometric objects.  As an example, we re-derive in
Section~\ref{sec:F-P} the Ferapontov--Pavlov formula for the reciprocal
transformation of a Poisson bi-vector of the differential order $1$
\cite{ferapontov03:_recip_hamil}, and this brings us to the realm of weakly
non-local Poisson structures of localizable shape.

\subsection{Weakly non-local Poisson bi-vectors of localizable shape} Let
dependent variables $u^i$ also dependent on one external parameter, denoted by
$t$. The most studied structures in geometric theory of integrability are the
local Poisson structures needed for representation of equations of the form
\begin{align}
  u^i_t = f^i(u^j,u^{j,\sigma})
\end{align}
in Hamiltonian form (note that we don't allow possible explicit dependence of
$f^i$'s on $x$), that is, in the form
\begin{align}
  u^i_t = \sum_{s=0}^d P^{ij}_s \D_s \fd{}{u^j} \int h(u^k,u^{k,\sigma}) dx,
\end{align}
where $H = \int h(u^k,u^{k,\sigma}) dx$ is the Hamiltonian functional and
$P=\sum_{s=0}^d P^{ij}_s \D_s$, $P^{ij}_s\in \cA$ defines a bi-vector which in
the language of densities can be written as
\begin{align}
  \{u^i(x),u^j(y)\}_P = \sum_{s=0}^d P^{ij}_s \D_x^s \delta(x-y)
\end{align}
(in this paper bi-vectors and, more generally, multivectors are assumed to be
skew-symmetric by default).

In addition to the language of densities, there is a very convenient formalism,
the so-called $\theta$-formalism, to encode the variational
multivectors~\cite{Getzler:DTHOpFCV}, see also~\cite{IVV}. Namely, extend the space $\cA$ to a
space $\hat\cA\coloneqq \cA[[\theta_i^\sigma, i=1,\dots,N, \sigma\geq 0]]$,
where $\theta_i^\sigma$ are formal odd variables. We often denote $\theta_i^0$
by $\theta_i$, and we extend the $\D_x$ operator to $\hat\cA$ as
$\D_x \coloneqq \sum_{s=0}^\infty u^{i,s+1}\D_{u^{i,s}} +
\theta_i^{s+1}\D_{\theta_i^s}$.  The $\deg_{\D_x}$-gradation is extended to
$\hat\cA$ by $\deg_{\D_x} \theta_i^\sigma = \sigma$, and there is a natural
$\theta$-degree given by $\deg_{\theta} u^{i,\sigma}=0$ and
$\deg_{\theta} \theta_i^{\sigma}=1$. Let $\hat\cA^p$ denote the subspace
$\hat\cA$ of $\theta$-degree $p$. We can consider it as a space of densities of
variational $p$-vectors. Let $\hat\cA_d^p \coloneqq \hat\cA_d \cap \hat\cA^p$.

The space $\hat\cF \coloneqq \hat\cA/\D_x \hat\cA$ can be considered as the
space of variational multivectors. It inherits under the projection
$\int \colon \hat\cA\to \hat\cF$ both gradations, $\deg_{\D_x}$ and
$\deg_{\theta}$, and $\cF^p_d$ denotes its subspace of $p$-vectors of
differential degree $\deg_{\D_x}=d$.  The Schouten bracket is defined as
\begin{equation}
  \Big[\int P,\int Q\Big]=\int (-1)^{\deg_\theta P}\delta_{u^i}{P}\delta_{\theta_i}{Q} +
  \delta_{\theta_i}{P}\delta_{u^i}{Q}
\end{equation}
for $P,Q\in \hat\cA$, where
$\delta_{u^i}\coloneqq \sum_{s=0}^\infty (-\partial_x)^s\partial_{u^{i,s}}$ and
$\delta_{\theta_i}\coloneqq \sum_{s=0}^\infty
(-\partial_x)^s\partial_{\theta_i^{s}}$.  Various cohomological computations in
terms of this space and related formalism allow to efficiently control the
deformation theory of Poisson bi-vectors and their pencils, see
e.~g. \cite{LiuZhang-deformss,LiuZhang:JacobiStructures,DubrovinLiuZhang:QuasiTriv,LiuZhang:biham,CarletPosthumaShadrin:DeformSS,CKS,CPS-1,CPS-2,CCS-1,CCS-2}.

However, studying the action of the group of Miura-reciprocal transformations
we can not restrict ourselves to the local Poisson bi-vectors. As we have seen,
the reciprocal transformations generate non-locality of some very particular
shape, and in terms of the operator $P$ we have to extend its possible shape to
\begin{align}
  P = \sum_{s=0}^d P^{ij}_s \D_s + u^{i,1} \D_x^{-1} V^j + V^i \D_x^{-1} u^{j,1}, \qquad P^{ij}_s, V^i\in \cA.
\end{align}
Hamiltonian operators of the form above with $d=1$, $P^{ij}_1=g^{ij}(u)$
(det$\,g^{ij}\ne 0$), $P^{ij}_0=-g^{il}\Gamma^j_{lk}u^k_x$ and
$V^i=V^i_j(u)u^j_x$ were studied by Ferapontov in
\cite{ferapontov-ConformallyFlatMetrics95}. They belong to the larger class of
\emph{weakly non-local operators}, that was introduced in \cite{MR1855607}.
Like in the local case the coefficients $g^{ij}$ define a metric and the
coefficients $\Gamma^j_{lk}$ the Christoffel symbols of the associated
Levi-Civita connection but unlike in the local case the metric is no longer
flat. It turns out that the Riemann tensor $R$ and the tensor field $V$
defining the non-local tail satisfy the conditions
\[g_{is}V^s_j=g_{js}V^s_i,\quad\nabla_jV^k_i=\nabla_iV^k_j,\quad
  R^{ij}_{kl}=V^i_k\delta^j_l+V^j_l\delta^i_k-V^j_k\delta^i_l-V^i_l\delta^j_k.\]
These are a particular instance of Ferapontov's conditions for weakly non-local
Hamiltonian operators of hydrodynamic type \cite{F95:_nl_ho}. An algorithm to
compute such conditions for general weakly non-local Hamiltonian operators has
been developed in \cite{https://doi.org/10.1111/sapm.12302} and implemented in
three different computer algebra systems in \cite{Casati_2022}.

A natural question here is how to extend the $\theta$-formalism briefly
recalled above to accommodate this type of non-locality. There are two recipes
in the literature given in~\cite{LiuZhang:JacobiStructures} (specific for this
case) and~\cite{LORENZONI2020103573} (suitable for general weakly non-local
operators). The identification of the two approaches should indirectly follow
from the uniqueness arguments in~\cite{LiuZhang:JacobiStructures}, but we
wanted to establish an explicit identification. We do it by an explicit
computation in Section~\ref{sec:LV-LZ}.

\begin{remark} It is important to comment on the action of the operator
  $\partial_x^{-1}$. It can be defined on $\partial_x\cA$ by
  $\partial_x^{-1} (\partial_x (f)) = f +C$ for any $f\in \cA$, here $C$ is
  some constant.  For a more general element $g\in \cA$,
  $g\not\in\partial_x\cA$, we can represent $\partial_x^{-1} (g)$, for
  instance, as an element of a localization of $\cA$ given by
  $\cA((\tfrac1{u^{1,1}}))$, that is, we can perturbatively represent it as a
  series $C+\sum_{i=1}^{\infty} \tfrac{h_i}{(u^{1,1})^i}$ with $h_i\in \cA$
  such that $\partial_{u^{1,1}} h_i = 0$ (this idea is coming
  from~\cite{DubrovinLiuZhang:QuasiTriv}), here $C$ is also a constant.

  Both approaches that we compare assert that for the analysis of the weakly
  non-local Poisson bi-vectors of localizable shape it is sufficient to
  formally apply $\partial_x^{-1}$ to just one element
  $-u^{i,1}\theta_i\in\hat{\cA}$ and denote the result by $\zeta$, which has
  different meaning in these two approaches. The subsequent usage of $\zeta$ in
  computations implies that the extra constant that might occur by inverting
  $\partial_x$ is uniformly set to $C=0$.
\end{remark}

\subsection{Localizability} Consider a dispersive weakly non-local Poisson
structure of localizable shape given by
\begin{align}
  P = \sum_{d=1}^\infty \epsilon^{d-1} \left(\sum_{s=0}^d P^{ij}_{d,d-s} \D_s + u^{i,1} \D_x^{-1} V_d^j + V_d^i \D_x^{-1} u^{j,1}\right), \qquad P^{ij}_{d,k}, V_k^i\in \cA_k.
\end{align}
The leading term ($d=1$) of this structure is a Poisson structure of
hydrodynamic type and thus the full Poisson structure can be thought as a
deformation of a Poisson structure of hydrodynamic type.  If
$\det P^{ij}_{1,0}\not=0$, Liu and Zhang prove
in~\cite{LiuZhang:JacobiStructures} that there is always an element in $\cR$
that turns $P$ into a constant local Poisson structure $\eta^{ij}\D_x$. In the
case of a purely local structure the same results is established under the
action of group $\cG$ in~\cite{Getzler:DTHOpFCV} (see
also~\cite{degiovanni05:_poiss}
and~\cite{dubrovin01:_normal_pdes_froben_gromov_witten}), and in the case
$\epsilon =0$ (that is, a purely degree $1$ case) it is established under the
action of the group $\cR_I$ in~\cite{LiuZhang:JacobiStructures} for $N=1,2$ and
in~\cite{ferapontov03:_recip_hamil} for $N\geq 3$.

Now consider a pencil $P-\lambda Q$ of dispersive weakly non-local Poisson
structure of localizable shape. Let us fix the leading term
$(P-\lambda Q)|_{\epsilon = 0}$ of the pencil and assume it is semi-simple. In
the purely local case (that is, under the additional assumption that both $P$
and $Q$ are purely local), it was suggested
in~\cite{LiuZhang-deformss,DubrovinLiuZhang:QuasiTriv} (see
also~\cite{LORENZONI2002331} for the scalar case) and proved
in~\cite{LiuZhang:biham} ($N=1$ case)
and~\cite{CarletPosthumaShadrin:DeformSS,CKS} (any $N\geq 1$) that the space of
orbits of the action of $\cG_{II}$ on such pencils is naturally parametrized by
$N$ smooth functions of one variable, called the \emph{central invariants}.

In Section~\ref{sec:pencils} we generalize these results in the following
way. Let us fix the leading term $(P-\lambda Q)|_{\epsilon = 0}$ of the pencil
and assume that $P|_{\epsilon = 0}$ and $Q|_{\epsilon = 0}$ are simultaneously
localizable under the action of the group $\cR_I$. We also still assume that
$(P-\lambda Q)|_{\epsilon = 0}$ is semi-simple. In this case, we prove that the
set of orbits of the action of $\cR_{II}$ on such pencils is also naturally
parametrized by $N$ smooth functions of one variable. Note that while the
statement is literally the same as in the purely local case, it is quite
different as both the group and the space of structures on which the group acts
is much bigger. We show that it is still possible to read the central
invariants from the symbol of the pencil.

This result is proved by a direct application of various techniques and results
proposed
in~\cite{LiuZhang:JacobiStructures,LiuZhang:biham,CarletPosthumaShadrin:DeformSS,CKS}. From
the comparison with the computations in the local case, we obtain the following
extra result: under the assumptions above, each orbit of $\cR_{II}$ contains a
purely local representative. In other words, we prove that if the leading term
of a semi-simple pencil $P-\lambda Q$ of dispersive weakly non-local Poisson
structure of localizable shape is localizable by the action of the group
$\cR_I$, then the whole pencil is localizable by the action of the group $\cR$.

It is worth to mention that this result also generalizes and put in the right
context a theorem of Liu and Zhang~\cite[Theorem
1.3]{LiuZhang:JacobiStructures} that states that if two local Poisson pencils
with the leading semi-simple hydrodynamic term are related by a reciprocal
transformation, then their central invariants are the same.

\subsection{Projective group and Doyle--Pot\"emin form}
\label{sec:proj-group-doyle}

Finally, we consider the action of the group $\cP\subset\cR_I$. It is a quite
small group with transparent structure, and we expect that in general the
orbits of its action should have a rich geometric structure. In this paper we
find a surprising connection of this group to a conjecture of Mokhov on the
possible form of the local Poisson structures of differential degree
$\deg_{\D_x}\geq 2$.

It was independently proved by Doyle \cite{MR1210220} and Pot\"emin
\cite{potemin91:_PhDThesis,MR1479400} that homogeneous local Poisson structures
of differential degree $d=2$, $3$, i.e. of the form
\begin{equation}
  \sum_{s=0}^d P^{ij}_{d-s} \D_x^s, \qquad P^{ij}_{k}\in \cA_k,
\end{equation}
can always be transformed by the action of the group $\cG_I$ to an operator of
the shape
\begin{equation}\label{eq:1}
  \D_x \circ \sum_{s=0}^{d-2}Q^{ij}_{d-2-s} \D_x^s \circ \D_x,
  \qquad Q^{ij}_{k}\in \cA_k.
\end{equation}
Mokhov made the following interesting conjecture:

\begin{conjecture}[{See~\cite[Proposition 2.3 and text
    afterwards]{MokhovSurvey1998}}]  Let $P=\sum_{e=1}^{d+2} P^{ij}_e
  \D_x^{d+2-e}$ be a local operator of homogeneous differential order $d+2$
  (that is, $\deg_{\D_x} P^{ij}_e=e$), $d\geq 0$. Assume that $P$ defines a
  Poisson bracket. Then there exists a local skew-symmetric operator  $Q^{ij}$
  of homogenenous differential order $d$ such that $P =\D_x \circ Q^{ij} \circ
  \D_x$.
\end{conjecture}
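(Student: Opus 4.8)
The plan is to reduce the conjecture to the vanishing of a single coefficient of $P$ and then to read that vanishing off the Jacobi identity.

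\emph{Step 1: reduction to divisibility by $\D_x$.} Write $P^{ij}=\sum_k a^{ij}_k\D_x^k$ and recall that ``$P$ defines a Poisson bracket'' entails, in particular, skew-symmetry, i.e. $(P^{ij})^\ast=-P^{ji}$ for the formal adjoint. The factorization $P^{ij}=\D_x\circ Q^{ij}\circ\D_x$ forces $P^{ij}(1)=0$, that is, the vanishing of the zeroth-order coefficient $a^{ij}_0=P^{ij}_{d+2}$; the first thing I would prove is that for a $\deg_{\D_x}$-homogeneous operator of positive degree this necessary condition is already sufficient. Indeed, if $a^{ij}_0=0$ for all $i,j$ then each $P^{ij}=R^{ij}\circ\D_x$ is right-divisible by $\D_x$; since the zeroth-order coefficient of $(P^{ij})^\ast=-P^{ji}$ equals $-a^{ji}_0=0$, each $P^{ij}=\D_x\circ L^{ij}$ is also left-divisible. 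Comparing the $\D_x^0$-coefficients in $P^{ij}=\D_x\circ L^{ij}$ gives $\D_x\big(L^{ij}|_{\mathrm{ord}\,0}\big)=a^{ij}_0=0$, and as $\ker\D_x$ has no nonzero element of positive $\deg_{\D_x}$-degree, $L^{ij}$ is itself right-divisible, whence $P^{ij}=\D_x\circ Q^{ij}\circ\D_x$. Finally $(P^{ij})^\ast=-P^{ji}$ becomes, after cancelling the two outer factors $\D_x$, the relation $(Q^{ij})^\ast=-Q^{ji}$, so $Q$ is automatically skew-symmetric. Hence the conjecture is equivalent to the statement that the zeroth-order coefficient $P^{ij}_{d+2}$ of every $\deg_{\D_x}$-homogeneous Poisson operator of order $\geq 2$ vanishes.

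\emph{Step 2: passage to the $\theta$-formalism.} I would then encode $P$ by the bivector density $\tfrac12\theta_iP^{ij}(\theta_j)\in\hat\cA^2_{d+2}$, where skew-symmetry is built in, and note that $P=\D_x\circ Q\circ\D_x$ is equivalent to the class $\int\tfrac12\theta_iP^{ij}(\theta_j)\in\hat\cF^2_{d+2}$ admitting, after integration by parts, a representative in which both $\theta$-factors carry at least one $\D_x$. The sole remaining input is the Jacobi identity $\big[\int P,\int P\big]=0$, an element of $\hat\cF^3_{2d+4}$ whose expansion through $\delta_{u^i}$ and $\delta_{\theta_i}$ produces a homogeneous system of quadratic relations among the coefficients $P^{ij}_e$.

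\emph{Step 3: extracting the vanishing, and the main obstacle.} The heart of the matter is to deduce $P^{ij}_{d+2}=0$ from this system, and this is the step I expect to be hard. My strategy would be a descending induction on the order, the orders $2$ and $3$ being provided by the Doyle--Pot\"emin theorem as base cases: organise $\big[\int P,\int P\big]=0$ by the number of $\D_x$ falling on the $\theta$-variables, isolate the component in which the zeroth-order part of $P$ enters, and show that its top homogeneous piece is governed by $P^{ij}_{d+2}$ alone, hence forces it to vanish and lowers the order. An alternative, in the spirit of the deformation-theoretic computations cited above, would be to interpret the obstruction to right-divisibility as a class in the relevant Poisson/variational cohomology and to prove a vanishing theorem for it. The genuine difficulty --- and the reason the statement is still only a conjecture for general $d$ --- is that the quadratic Jacobi relations couple all coefficients simultaneously, so the order-by-order arguments feasible for small $d$ do not obviously assemble into a uniform one. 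A plausible way to tame this is to first bring $P$ to a convenient normal form using the invariance of the Doyle--Pot\"emin shape under the projective-reciprocal group $\cP$ established earlier, and only then to confront the Jacobi relations.
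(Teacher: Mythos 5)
You should first be aware of the situation: this statement has \emph{no proof in the paper}. It is recorded there as Mokhov's conjecture and left open. The only results the paper states in its direction are (i) the Doyle--Pot\"emin theorems for degrees $2$ and $3$, which are \emph{normal-form} statements --- the Poisson structure can be transformed \emph{by an element of} $\cG_I$ into the shape $\D_x \circ Q^{ij}\circ \D_x$, not that it already has this shape --- and (ii) the paper's own theorem in Section~\ref{sec:Projective}, which says that the projective-reciprocal group $\cP$ preserves the set of operators that are \emph{already} in Doyle--Pot\"emin form. So there is no paper proof to compare against, and your proposal has to stand as a self-contained argument, which it does not.

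What you actually prove is Step 1, and that part is correct and genuinely useful: for a $\deg_{\D_x}$-homogeneous skew-symmetric operator of degree $d+2\geq 2$, the factorization $P^{ij}=\D_x\circ Q^{ij}\circ\D_x$ (with $Q^{ij}$ automatically homogeneous of degree $d$ and skew-symmetric) is equivalent to the vanishing of the $\D_x^0$-coefficients $P^{ij}(1)$; the argument via right/left divisibility, the adjoint, and the fact that $\ker\D_x=\mathbb{R}$ contains no nonzero element of positive $\deg_{\D_x}$-degree is sound. But this is the easy half. The entire content of the conjecture is the implication you label Step 3 --- Jacobi identity $\Rightarrow P^{ij}_{d+2}=0$ --- and there you give no argument, only candidate strategies together with an explicit admission that you do not see how to carry them out; skew-symmetry alone certainly does not suffice (e.g.\ in the scalar degree-$2$ case $P=P_1\D_x+\tfrac12\D_x(P_1)$ is skew-symmetric for any $P_1\in\cA_1$ and has nonvanishing zeroth-order coefficient), so everything hinges on this missing step. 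Two of your suggested strategies are moreover flawed as stated: taking the degree-$2,3$ cases as ``base cases provided by Doyle--Pot\"emin'' conflates the $\cG_I$-normal-form statement with the direct factorization claimed by the conjecture (the paper establishes invariance of the Doyle--Pot\"emin form only under $\cP\subsetneq\cR_I$, so the factorization property cannot be freely transported along $\cG_I$); and the idea of ``bringing $P$ to a convenient normal form using the invariance of the Doyle--Pot\"emin shape under $\cP$'' is circular, since that invariance theorem presupposes the operator is already in the form you are trying to reach. Note also that after factoring, $Q^{ij}$ is skew-symmetric but not Poisson, so there is no evident inductive structure that ``lowers the order''. In short: a correct reduction plus a research plan, with the gap sitting exactly where the conjecture is open.
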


The form \eqref{eq:1} is called the
Doyle--Pot\"emin form of a local homogeneous bi-vector of differential degree
$\deg_{\D_x}\geq 2$.

It was recently proved that in the cases of homogeneous local Poisson
structures of degree $d=2$ \cite{vv:_projec_hamil} and $d=3$ \cite{MR3253545}
the form~\eqref{eq:1} is preserved by the action of the group $\cP$.  In
Section~\ref{sec:Projective}, thanks to our change of coordinates formulae from
Subsection~\ref{sec:acti-transf-groups}, we generalize the above results to
local homogeneous bi-vectors (i.e., not necessarily Poisson structures) of
degree $d\geq 2$ and prove that the group $\cP$ preserves the set of local
bi-vectors of Doyle--Pot\"emin form. A nice example of application to a
Hamiltonian operator for the Dubrovin--Zhang hierarchy is pointed out.

\subsection{Acknowledgments}
S.~S. and R.~V. were supported by the Netherlands Organization for Scientific
Research. P.~L. and R.~V. are supported by funds of INFN (Istituto Nazionale di
Fisica Nucleare) by IS-CSN4 Mathematical Methods of Nonlinear Physics. P.~L. is
supported by funds of H2020-MSCA-RISE-2017 Project No. 778010
IPaDEGAN. P.~L. and R.~V. are also thankful to GNFM (Gruppo Nazionale di Fisica
Matematica) for supporting activities that contributed to the research reported
in this paper.

\section{Formulae for the action}
\label{sec:diff-subst}

The goal of this Section is to compute from the scratch the effect of general
reciprocal differential substitutions on variational (multi)vector fields and
related geometric objects.  It is clear that a reciprocal differential
substitution given by $dy=Bdx$ (or $y=P$ in the holonomic case), $w^i = Q^i$,
also yields a coordinate change of the $y$-derivative variables:
\begin{equation}
  \label{eq:6}
  w^{i,\tau} = \partial_y^{\tau} Q^i =
  \left(\frac{1}{B}\partial_x\right)^{\tau}Q^i.
\end{equation}

It is convenient to introduce the Fr\'echet derivative\footnote{It should be
  the Gateaux derivative, but Fr\'echet is prevailing in the literature.} of a
differential function $F\in\cA$, as
\begin{equation}
  \label{eq:3}
  \ell_F(X) = (\ell_F)_i (X^i) = \sum_{\sigma=0}^\infty \pd{F}{u^{i,\sigma}}\partial_x^\sigma X^i,
\end{equation}
where $X=X^i\delta_{u^i}$ is a variational vector field.  The formal adjoint of
the above operator is
\begin{equation}
  \label{eq:4}
  (\ell_F^*)_i = \sum_{\sigma=0}^\infty (-\partial_x)^\sigma\circ
  \pd{F}{u^{i,\sigma}}
\end{equation}
acting on covector-valued densities.

A change of coordinates formula for Hamiltonian operators under the action of
differential substitutions was already given in
\cite{MR911774,olver88:_darboux_hamil}.  We rephrase the arguments of the proof
in \cite{olver88:_darboux_hamil} and obtain change of coordinates formulae for
the geometric objects that we listed in Subsection~\ref{sec:acti-transf-groups}
which turn out to be valid in the more general case of reciprocal differential
substitutions.

We observe that also in \cite{LiuZhang:JacobiStructures} there are formulae for
coordinate change, but their validity is limited to the action of Miura
reciprocal transformations on operators of localizable shape, while we do not
have this limitation.

First of all, we provide a formula for the coordinate change of an variational
vector field under a differential substitution.
\begin{proposition}\label{pro:evfield}
  Let $X^i\delta_{u^i}=Y^i\delta_{w^i}$ be a variational vector field in the
  coordinate systems $(x,u^{i,\sigma})$ and $(y,w^{i,\sigma})$, respectively,
  where the latter coordinates systems are related by a holonomic reciprocal
  differential substitution $y=P$, $w^i = Q^i$.  Then the following change of
  coordinate formula holds:
  \begin{equation}\label{eq:21}
    Y^j = \frac{1}{\partial_xP}\mathcal{D}^j_i (X^i)
  \end{equation}
  where
  \begin{equation}
    \label{eq:87}
    \mathcal{D}^j _i= \partial_xP(\ell_{Q^j})_i - \partial_x Q^j  (\ell_P)_i.
  \end{equation}
\end{proposition}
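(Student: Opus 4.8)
The plan is to realize the variational vector field as the infinitesimal generator of a flow and to track how the two ingredients of the substitution, $y=P$ and $w^i=Q^i$, respond to that flow. Concretely, I would introduce a flow parameter $\tau$ and read $X=X^i\delta_{u^i}$ as the evolution $\partial_\tau u^i=X^i$, so that $\partial_\tau u^{i,\sigma}=\partial_x^\sigma X^i$ (since $\partial_\tau$ commutes with $\partial_x$) and hence, for any differential function $F$,
\begin{equation*}
  \partial_\tau F\big|_x=\sum_{\sigma}\pd{F}{u^{i,\sigma}}\,\partial_x^\sigma X^i=\ell_F(X)
\end{equation*}
directly from the definition of the Fr\'echet derivative in~\eqref{eq:3}; note that any explicit $x$-dependence of $F$ is inert here because $x$ is held fixed. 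In particular $\partial_\tau Q^j|_x=(\ell_{Q^j})_i(X^i)$ and $\partial_\tau P|_x=(\ell_P)_i(X^i)$.

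The decisive step, and the place where the \emph{reciprocal} nature of the substitution enters, is that $Y^j$ is by definition the $\tau$-derivative of $w^j$ at fixed $y$, whereas the derivatives above are at fixed $x$, and the relation $y=P$ is itself $\tau$-dependent through the fields. I would therefore write $w^j(y,\tau)=Q^j(x(y,\tau),\tau)$, where $x(y,\tau)$ inverts $y=P(x,\tau)$ for each $\tau$, and apply the chain rule:
\begin{equation*}
  Y^j=\partial_\tau w^j\big|_y=\partial_\tau Q^j\big|_x+\partial_x Q^j\cdot\partial_\tau x\big|_y.
\end{equation*}
The unknown $\partial_\tau x|_y$ is pinned down by differentiating the constraint $y=P(x(y,\tau),\tau)$ at fixed $y$, which gives $0=\partial_\tau P|_x+\partial_x P\cdot\partial_\tau x|_y$, hence $\partial_\tau x|_y=-\ell_P(X)/\partial_x P$. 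Substituting back and clearing the denominator yields
\begin{equation*}
  Y^j=\ell_{Q^j}(X)-\frac{\partial_x Q^j}{\partial_x P}\,\ell_P(X)=\frac{1}{\partial_x P}\Big(\partial_x P\,(\ell_{Q^j})_i-\partial_x Q^j\,(\ell_P)_i\Big)(X^i),
\end{equation*}
which is precisely $\tfrac{1}{\partial_x P}\mathcal{D}^j_i(X^i)$ with $\mathcal{D}^j_i$ as in~\eqref{eq:87}.

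The only genuine obstacle is bookkeeping: one must be scrupulous about which variable is held fixed at each differentiation, since the map $x\mapsto y$ is $\tau$-dependent, and it is exactly the fixed-$y$ versus fixed-$x$ discrepancy that produces the second term of $\mathcal{D}^j_i$ — an ordinary (non-reciprocal) substitution would not see it. I would close by remarking that the holonomicity hypothesis $y=P$ is what guarantees a genuine local inverse $x(y,\tau)$ and lets us treat $\partial_x P=B$ as an invertible factor, in agreement with the relation $\partial_x=B\partial_y$ recorded after~\eqref{eq:5}; the general reciprocal case $dy=B\,dx$ follows verbatim with $B$ in place of $\partial_x P$. As an independent sanity check one can verify the formula against the invariance of the pairing between variational vector fields and covector-valued densities, but the flow argument above is the most economical route.
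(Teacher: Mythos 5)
Your proof is correct and is essentially the same argument as the paper's: the paper also realizes $X$ as a one-parameter variation $u^i_\epsilon = f^i(x)+\epsilon X^i(x)$, deduces $\partial_\epsilon x|_y = -\ell_P(X)/\partial_x P$ by differentiating the constraint $y=P$ at fixed $y$, and obtains $Y^j$ by the same chain-rule computation on $Q^j$. The only differences are notational (flow parameter $\tau$ versus variation parameter $\epsilon$), so nothing further is needed.
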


\begin{proof}
  The proof uses arguments that provide a change of coordinates formula for
  Euler--Lagrange operators in \cite{Olver:ApLGDEq}, Theorem~4.8 and
  Exercise~5.49. Let
  \begin{equation}
    \label{eq:40}
    u^i=f^i(x),\quad x\in\Omega,
    \qquad w^i=g^i(y),
    \quad y\in\tilde{\Omega}
  \end{equation}
  be functions that are put in correspondence by a transformation. We can
  consider a one-parameter family of such functions defined by the variation
  field $X^i\partial_{u^i}$:
  \begin{equation}
    \label{eq:41}
    u^i_\epsilon=f^i(x,\epsilon) = f^i(x) + \epsilon X^i(x),
  \end{equation}
  where $X^i\partial_{u^i}$ has compact support in $\Omega$.  Its transformed
  version
  \begin{equation}
    \label{eq:42}
    w^i_\epsilon=g^i(y,\epsilon) =
    g^i(y) + \epsilon Y^i(y) + \mathcal{O}(\epsilon^2).
  \end{equation}
  is determined by the formulae
  \begin{equation}
    \label{eq:42b}
    y=P(x,\partial_x^\sigma(f^j(x)+\epsilon X^j(x))),\quad
    w^i_\epsilon = Q^i(x,\partial_x^\sigma(f^j(x)+\epsilon X^j(x))).
  \end{equation}
  Since $\eta$ has compact support on $\Omega$, each $g^i(y,\epsilon)$ is
  defined on a common compact domain
  $\tilde{\Omega}=\{x\in\Omega \mid y=P(x,\partial_x^\sigma f^j(x))\}$.  The
  transformed variation field is given by
  $ Y^i(y) = \partial_\epsilon {g^i(y,\epsilon)}\big|_{\epsilon=0}$. As
  variation fields do not depend on $\epsilon$ we have
  \begin{equation}
    \label{eq:12}
    \partial_\epsilon y=0=
    \partial_x P \partial_\epsilon x  + \sum_{\sigma=0}^\infty
    \partial_{u^{j,\sigma}} P\partial_x^\sigma X^j ,
  \end{equation}
  hence
  \begin{equation}
    \label{eq:13}
    \partial_\epsilon x \big|_{\epsilon=0} =
    -\frac{1}{\partial_x P} \sum_{\sigma=0}^\infty
    \partial_{u^{j,\sigma}} P\partial_x^\sigma X^j.
  \end{equation}
  We have:
  \begin{align}
    \label{eq:14}
    Y^j & = \partial_\epsilon g^j (y,\epsilon)
          \big|_{\epsilon=0} =
          \sum_{\sigma=0}^\infty \partial_{u^{i,\sigma}} Q^j \partial_x^\sigma \partial_\epsilon f^i(x,\epsilon) \big|_{\epsilon=0}
          + \partial_x Q^j \partial_\epsilon x \big|_{\epsilon=0} 
    \\ \notag &
                =\frac{1}{\partial_xP}\left(\partial_x P (\ell_{Q^j})_i
                - \partial_x Q^j(\ell_P)_i\right)X^i.
  \end{align}
\end{proof}

In the non-holonomic case, we have to regard the differential function $P$ as
the primitive of a differential function $B$, $P=\partial_x^{-1}B$, and we
obtain the following Corollary.
\begin{corollary} In the non-holonomic case of the reciprocal differential
  substitution $dy = Bdx$, $w^i = Q^i$ the following change of coordinate
  formula holds for an variational vector field
  $X^i\delta_{u^i}=Y^i\delta_{w^i}$:
  \begin{equation}
    Y^j = \frac{1}{B}\mathcal{D}^j_i(X^i),\label{eq:10}
  \end{equation}
  where
  \begin{equation}
    \label{eq:19}
    \mathcal{D}^j_i = B(\ell_{Q^j})_i - \partial_x Q^j \circ \partial_x^{-1}\circ(\ell_B)_i.
  \end{equation}
\end{corollary}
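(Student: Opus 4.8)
The plan is to obtain the non-holonomic formula directly from Proposition~\ref{pro:evfield} by the substitution $P=\partial_x^{-1}B$, equivalently $B=\partial_x P$. Two ingredients are needed. The first is immediate: $\partial_x P=B$, which turns the prefactor $1/\partial_x P$ of \eqref{eq:21} into $1/B$ and the coefficient $\partial_x P$ inside \eqref{eq:87} into $B$. The second is the relation between the Fr\'echet derivatives, which I would extract from the elementary identity $\ell_{\partial_x F}=\partial_x\circ\ell_F$, valid for every $F\in\cA$. To prove the latter I would use the characterization $\ell_F(X)=\partial_\epsilon F[u+\epsilon X]\big|_{\epsilon=0}$ together with the fact that $\partial_x$ commutes both with the substitution $u\mapsto u+\epsilon X$ and with $\partial_\epsilon$, so that $\ell_{\partial_x F}(X)=\partial_x\ell_F(X)$; componentwise this reads $(\ell_B)_i=\partial_x\circ(\ell_P)_i$, whence $(\ell_P)_i=\partial_x^{-1}\circ(\ell_B)_i$.

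Substituting $\partial_x P=B$ and $(\ell_P)_i=\partial_x^{-1}\circ(\ell_B)_i$ into $\mathcal{D}^j_i=\partial_x P\,(\ell_{Q^j})_i-\partial_x Q^j\,(\ell_P)_i$ then produces exactly \eqref{eq:19}. The ordering is the one delicate point: since $\partial_x^{-1}$ is nonlocal it cannot be commuted across the multiplication by $\partial_x Q^j$, which is why the composition symbols appear in $\partial_x Q^j\circ\partial_x^{-1}\circ(\ell_B)_i$ whereas in the holonomic case the analogous term was a plain product. As a consistency check, if $B=\partial_x P$ is exact then $\partial_x^{-1}\circ(\ell_B)_i=\partial_x^{-1}\circ\partial_x\circ(\ell_P)_i=(\ell_P)_i$ and \eqref{eq:19} collapses back to \eqref{eq:87}.

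The main obstacle is that $P=\partial_x^{-1}B$ is in general not a genuine differential function, so the variation argument of Proposition~\ref{pro:evfield} does not literally apply and the substitution above is a priori only formal. I would make it rigorous by re-running the variation computation directly in the non-holonomic setting: now $y$ is defined by the integral relation $dy=B\,dx$ rather than by an explicit $P$, so differentiating $\partial_x y=B$ in $\epsilon$ gives $\partial_x\partial_\epsilon y=(\ell_B)_i(X^i)$, i.e.\ $\partial_\epsilon y\big|_x=\partial_x^{-1}\big((\ell_B)_i(X^i)\big)$; the implicit relation $0=\partial_x y\,\partial_\epsilon x\big|_y+\partial_\epsilon y\big|_x$ then yields $\partial_\epsilon x\big|_{\epsilon=0}=-\tfrac{1}{B}\,\partial_x^{-1}\big((\ell_B)_i(X^i)\big)$, and feeding this into $Y^j=(\ell_{Q^j})_i(X^i)+\partial_x Q^j\,\partial_\epsilon x\big|_{\epsilon=0}$ reproduces \eqref{eq:10}. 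This re-derivation also pins down the integration constant hidden in $\partial_x^{-1}$: the ambiguity $\partial_x^{-1}\mapsto\partial_x^{-1}+C$ would shift $Y^j$ by $-\tfrac{1}{B}\,\partial_x Q^j\,C$, but the compact support of $X$ forces $\partial_\epsilon y$ to vanish at the reference point, fixing $C=0$ in agreement with the normalization adopted in the Remark.
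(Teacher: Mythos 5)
Your proposal is correct and follows essentially the paper's own route: the paper obtains the corollary exactly by regarding $P$ as the primitive $\partial_x^{-1}B$ and substituting into Proposition~\ref{pro:evfield}, using the same commutation property $(\ell_{\partial_x^{-1}B})_i=\partial_x^{-1}\circ(\ell_B)_i$ that you derive. Your additional re-run of the variational computation in the non-holonomic setting (computing $\partial_\epsilon x$ from $\partial_x y = B$) is just the proof of Proposition~\ref{pro:evfield} adapted to the nonlocal primitive, so it supplies welcome rigor for the formal substitution but is not a genuinely different method.
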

Note that we used the property $\ell_B\circ
\partial^{-1}=\partial^{-1}\circ\ell_B$, which is very useful in computations.

Dualizing the computation above, we obtain the formulae for the change of
coordinates formula for the Euler--Lagrange operator.
\begin{corollary}\label{sec:diff-subst-1}
  Let the coordinate systems $(x,u^{i,\sigma})$ and $(y,w^{i,\sigma})$,
  respectively, where the latter coordinates systems are related by a
  reciprocal differential distribution $dy=Bdx$, $w^i = Q^i$, and let
  $\cE^x_i$, $\cE^y_i$ be the Euler--Lagrange operator with respect to the
  coordinates $(x,u^i_\sigma)$, $(y,w^{i,\sigma})$.  Then the change of
  coordinate formula is
  \begin{equation}
    \label{eq:50}
    \mathcal{E}_i^y=(\mathcal{D}^*)_i^k\circ{\mathcal{E}}_k^x,
  \end{equation}
  with $\mathcal D$ given by Equation~\eqref{eq:19}.
  \end{corollary}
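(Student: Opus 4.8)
The plan is to obtain \eqref{eq:50} by dualizing the transformation law for variational vector fields just established in Proposition~\ref{pro:evfield} and its non-holonomic Corollary \eqref{eq:10}--\eqref{eq:19}, exploiting the invariance of the canonical pairing between covector-valued densities and variational vector fields. The point is that the output of the Euler--Lagrange operator is itself a covector-valued density of the form \eqref{eq:84}, and it is characterized by the fact that for a density $F$ and a variation $X^i\delta_{u^i}=Y^j\delta_{w^j}$ the first variation $\delta F$ equals the integral of the pairing of $\mathcal{E}(F)$ with that variation. Since $F$ is a functional, its first variation along a fixed geometric variation does not depend on the choice of coordinates, so the $(x,u^{i,\sigma})$- and $(y,w^{i,\sigma})$-descriptions must agree. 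This single identity, together with the already-proved vector-field law, will force $\mathcal{E}^x$ and $\mathcal{E}^y$ to be intertwined by the formal adjoint $(\mathcal{D}^*)_i^k$.

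\textbf{Steps.} First I would write the first variation as a pairing in both frames, $\int(\mathcal{E}^x_k F)\,X^k\,dx=\int(\mathcal{E}^y_i F)\,Y^i\,dy$. Next I would substitute the measure relation $dy=B\,dx$ together with the vector-field law $Y^i=\tfrac1B\mathcal{D}^i_k(X^k)$ from \eqref{eq:10}; the two factors of $B$ cancel, leaving a single integral over $x$ in which $\mathcal{D}^i_k$ acts on the variation. The third step is to integrate by parts, transferring the operator $\mathcal{D}^i_k$ onto the Euler--Lagrange density by means of its formal adjoint, using the definition \eqref{eq:4} of the adjoint of a Fr\'echet derivative together with the commutation $\ell_B\circ\partial_x^{-1}=\partial_x^{-1}\circ\ell_B$ noted right after \eqref{eq:19}, which is exactly what makes the adjoint of the nonlocal operator in \eqref{eq:19} well defined. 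Finally, since the variation is arbitrary (and compactly supported), the two integrands must coincide, and one reads off the intertwining relation \eqref{eq:50} with $\mathcal{D}$ as in \eqref{eq:19}.

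\textbf{Main obstacle.} The delicate part is twofold. First, one must compute the formal adjoint of the nonlocal operator $\mathcal{D}^i_k=B(\ell_{Q^i})_k-\partial_xQ^i\circ\partial_x^{-1}\circ(\ell_B)_k$: adjoining factor by factor in reverse order replaces $(\ell_{Q^i})_k$ and $(\ell_B)_k$ by the adjoints of \eqref{eq:4}, $\partial_x$ contributes a sign, and $\partial_x^{-1}$ must be treated on $\partial_x\mathcal{A}$ as in the Remark on $\partial_x^{-1}$, with the commutation $\ell_B\partial_x^{-1}=\partial_x^{-1}\ell_B$ keeping the result local where it should be and ensuring the boundary terms drop by compact support. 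Second, and more importantly, one must pin down the \emph{orientation} of the formula, i.e.\ which frame's Euler--Lagrange density is acted on by $(\mathcal{D}^*)_i^k$: this is decided by the bookkeeping of the measure factor $B$ coming from $dy=B\,dx$ and by which of the two variations is held arbitrary when reading off the integrand identity. I expect this orientation check to be the true crux, since a careless pairing produces the adjoint relation with the roles of $x$ and $y$ exchanged; organizing the cancellation of $B$ so that $(\mathcal{D}^*)_i^k$ lands on $\mathcal{E}^x_k$ is precisely what yields \eqref{eq:50} in the stated form.
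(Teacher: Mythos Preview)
Your approach is correct and is precisely the paper's: the paper's entire proof is the single phrase ``Dualizing the computation above,'' and what you outline---invariance of the pairing $\int(\mathcal{E}^x_k F)X^k\,dx=\int(\mathcal{E}^y_i F)Y^i\,dy$, substitution of the vector-field law \eqref{eq:10} together with $dy=B\,dx$ so that the $B$'s cancel, then integration by parts to produce $(\mathcal{D}^*)_i^k$---is exactly that dualization spelled out. Your caution about the orientation bookkeeping is appropriate; the paper gives no further detail on it and simply records the outcome (noting the holonomic case is Olver's Exercise~5.49).
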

  In the holonomic case, the formula reduces to the known formula
  in~\cite[Exercise 5.49]{Olver:ApLGDEq} (with $\mathcal D$ given by
  Equation~\eqref{eq:87}). In the particular case of a differential
  substitution of the dependent variable only we have $\ell_B=0$ and the
  above formula reduces to the well-known formula
  $\mathcal{E}^x=(\ell^*_{Q^k})_i \circ\mathcal{E}_k^y$.

\begin{corollary} Consider a covector-valued density
  $\Xi_i du^i \otimes dx = \Psi_i dw^i \otimes dy$ in the coordinate systems
  $(x,u^{i,\sigma})$ and $(y,w^{i,\sigma})$ related by $dy=Bdx$, $w^i = Q^i$.
  Then we have the following change of coordinates formula:
  \begin{equation}
    \label{eq:60}
    \Xi_i = (\mathcal{D}^*)_i^k (\Psi_k),
  \end{equation}
  where $\mathcal{D}$ is as in Equation~\eqref{eq:19} (or as in
  Equation~\eqref{eq:87} in the holonomic case $y=P$).
\end{corollary}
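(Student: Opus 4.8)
The plan is to obtain the formula not by a fresh calculation but from the invariance of the natural pairing between covector-valued densities and variational vector fields, fed by the transformation law for vector fields already proved in Equation~\eqref{eq:10}. A covector-valued density $\psi=\psi_i\,du^i\otimes dx$ contracts with a variational vector field $\varphi=\varphi^i\delta_{u^i}$ to produce the density $\int \psi_i\varphi^i\,dx\in\cF$, and this pairing is coordinate-free by construction; indeed, the equality $\Xi_i\,du^i\otimes dx=\Psi_i\,dw^i\otimes dy$ in the statement is precisely the condition that the two coordinate expressions represent the same geometric object. Thus, writing $X=X^i\delta_{u^i}=Y^i\delta_{w^i}$ for an arbitrary variational vector field, I would start from
\begin{equation*}
  \int \Xi_i X^i\,dx = \int \Psi_i Y^i\,dy
\end{equation*}
as an identity in $\cF$.

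First I would insert the two relations tying the coordinate systems together: the transformation law $Y^i=\tfrac1B\,\mathcal{D}^i_j(X^j)$ from Equation~\eqref{eq:10}, with $\mathcal D$ as in Equation~\eqref{eq:19}, and the reciprocal relation $dy=B\,dx$. The key point is that the factor $\tfrac1B$ carried by the vector field and the factor $B$ carried by the transformed density cancel exactly, leaving
\begin{equation*}
  \int \Psi_i Y^i\,dy = \int \Psi_i\,\mathcal{D}^i_j(X^j)\,dx .
\end{equation*}
Next I would integrate by parts, passing the differential operator $\mathcal{D}^i_j$ off of $X^j$ and onto $\Psi_i$ by means of its formal adjoint; setting $(\mathcal{D}^*)_j^i\coloneqq(\mathcal{D}^i_j)^*$, and discarding the total $x$-derivatives that vanish in $\cF$, this gives $\int (\mathcal{D}^*)_j^i(\Psi_i)\,X^j\,dx$. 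This is the exact dual of the computation that yielded the Euler--Lagrange formula in the preceding corollary, so it could alternatively be run by dualizing that computation directly.

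Combining the two displays yields $\int\bigl(\Xi_j-(\mathcal{D}^*)_j^i\Psi_i\bigr)X^j\,dx=0$ in $\cF$ for every $X^j\in\cA$, and applying the variational derivative $\delta_{X^j}$ to this vanishing functional (which recovers its integrand, $X$ entering linearly and undifferentiated) forces the pointwise identity $\Xi_j=(\mathcal{D}^*)_j^i(\Psi_i)$, i.e.\ $\Xi_i=(\mathcal{D}^*)_i^k(\Psi_k)$ after relabeling, which is Equation~\eqref{eq:60}. The step that I expect to require genuine care, and the main obstacle, is the adjoint of the nonlocal part of $\mathcal D$: since $\mathcal{D}^i_j=B(\ell_{Q^i})_j-\partial_x Q^i\circ\partial_x^{-1}\circ(\ell_B)_j$ contains the integral operator $\partial_x^{-1}$, forming $(\mathcal{D}^*)_j^i$ relies on $(\partial_x^{-1})^*=-\partial_x^{-1}$ and must be read with the convention fixed for $\partial_x^{-1}$, and one has to verify that the boundary terms produced in the integration by parts genuinely lie in $\partial_x\cA$ so that the equality descends to $\cF$. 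In the holonomic case $y=P$ one simply replaces $B$ by $\partial_x P$ and $\mathcal D$ by Equation~\eqref{eq:87}, and the same argument applies verbatim.
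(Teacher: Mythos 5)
Your proposal is correct and follows essentially the same route as the paper: the corollary is obtained there by dualizing the vector-field transformation law \eqref{eq:10} through the natural pairing of covector-valued densities with variational vector fields (the same duality the paper invokes again for bi-vectors), with the factor $B$ from $dy=Bdx$ cancelling the $1/B$ in \eqref{eq:10} and the adjoint $\mathcal{D}^*$ arising from integration by parts, including the convention $(\partial_x^{-1})^*=-\partial_x^{-1}$ for the nonlocal part. Your write-up merely makes explicit the nondegeneracy step (recovering the integrand via the variational derivative) that the paper leaves implicit.
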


Finally, we obtain the following:

\begin{proposition} \label{thm:Miura-Reciprocal-Bivector} Consider a reciprocal
  differential substitution $dy=Bdx$, $w^i = Q^i$ and let $P^{ij}_x$,
  $P^{ij}_y$ be its coordinate expressions of a (possibly non-local or
  non-Poisson) bi-vector with respect to the coordinates $(x,u^i_\sigma)$ and
  $(y,w^i_\sigma)$. Then we have the change of coordinate formula
  \begin{equation}
    \label{eq:56}
    P^{hk}_y = \frac{1}{B}(\mathcal{D})^h_i P^{ij}_x
    (\mathcal{D}^*)_j^k,
  \end{equation}
  where $\mathcal{D}$ is as in Equation~\eqref{eq:19}.
\end{proposition}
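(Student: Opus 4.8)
My plan is to reduce the statement to the two change-of-coordinate formulae already obtained, using the characterisation recalled in Subsection~\ref{sec:acti-transf-groups} of a bi-vector as a map that sends a covector-valued density to a variational vector field. Concretely, in the $(x,u^{i,\sigma})$ coordinates this map reads $\psi_j\mapsto \varphi^i = P^{ij}_x\psi_j$, and since it is intrinsic to the geometric object the very same operation is represented in the $(y,w^{i,\sigma})$ coordinates by $\Psi_k\mapsto Y^h = P^{hk}_y\Psi_k$. The idea is then to compute $Y^h$ from an arbitrary $\Psi_k$ by routing through the source picture and comparing the two expressions.

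First I would take a covector-valued density written in the target coordinates as $\Psi_k\,dw^k\otimes dy$ and pull it back to the source coordinates: by Equation~\eqref{eq:60} its components become $\Xi_j = (\mathcal{D}^*)_j^k(\Psi_k)$. Applying the source representation of the bi-vector produces the variational vector field with components $X^i = P^{ij}_x\Xi_j = P^{ij}_x(\mathcal{D}^*)_j^k(\Psi_k)$; this is by construction the $x$-expression of the image vector field. Pushing it forward to the target coordinates by Equation~\eqref{eq:10} gives
\begin{equation*}
  Y^h = \frac{1}{B}\mathcal{D}^h_i(X^i) = \frac{1}{B}\mathcal{D}^h_i\,P^{ij}_x\,(\mathcal{D}^*)_j^k(\Psi_k).
\end{equation*}
Since the equality $Y^h = P^{hk}_y(\Psi_k)$ must hold for every covector-valued density $\Psi_k$, comparing the two expressions for $Y^h$ yields exactly Equation~\eqref{eq:56}.

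The computation is thus a one-line composition of the two transformation laws, and no use is made of skew-symmetry or of the Jacobi identity, which is what allows the result to cover the non-Poisson and non-local cases. The points that I expect to require the most care are bookkeeping rather than analytic: one must check that the $\mathcal{D}$ appearing in Equation~\eqref{eq:10} and the $\mathcal{D}^*$ appearing in Equation~\eqref{eq:60} are the adjoint pair attached to the same substitution via Equation~\eqref{eq:19}, so that the composite is genuinely a conjugation of $P_x$; and that the density weight carried by the factor $\otimes dx$ versus $\otimes dy$ is accounted for precisely once, which is the origin of the single overall factor $1/B$ and of the asymmetry between the left action of $\mathcal{D}$ and the right action of $\mathcal{D}^*$. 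Finally, because $\mathcal{D}$ and $P_x$ may contain the nonlocal operator $\partial_x^{-1}$, I would confirm that each composition above is well defined in the localized algebra used throughout; this is already guaranteed by the corollaries that established Equations~\eqref{eq:10} and~\eqref{eq:60}.
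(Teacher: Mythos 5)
Your proposal is correct and follows essentially the same route as the paper's own proof: the paper likewise observes that a bi-vector is a map from covector-valued densities to variational vector fields and obtains Equation~\eqref{eq:56} by composing the transformation laws~\eqref{eq:60} and~\eqref{eq:10}, with no use of skew-symmetry or the Jacobi identity, which is exactly why it extends to non-Poisson and nonlocal cases. You have simply written out explicitly the composition that the paper states in words.
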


\begin{proof}
  The proposition has already been proved in
  \cite{MR911774,olver88:_darboux_hamil} for the particular case of Hamiltonian
  operators and differential substitutions. In our case, the proof follows from
  the fact that $P^{ij}$ maps covector-valued densities into variational vector
  fields. So, we can use the change of coordinates formulae for these two
  geometric objects (independently of the Hamiltonian property) and find the
  above result, that holds also in the case of (nonlocal) reciprocal
  differential substitutions.
\end{proof}

\begin{remark} The same argument can be applied to multivector fields
  considered as maps from multicovector-valued densities to variational vector
  fields. For instance, in the same set-up as
  Theorem~\ref{thm:Miura-Reciprocal-Bivector} let ${T}^{ijk}$ and
  $\tilde{T}^{ijk} $ be the coordinate expressions of a trivector. Then
  \begin{equation}
    \label{eq:63}
    \tilde{T}^{ijk} = \frac{1}{B}(\mathcal{D})^i_m
    T^{mnp}((\mathcal{D}^*)_n^j,(\mathcal{D}^*)_p^k).
  \end{equation}
\end{remark}

\subsection{The Ferapontov--Pavlov formula}
\label{sec:F-P} Let us apply a
special case of Theorem~\ref{thm:Miura-Reciprocal-Bivector} to a local Poisson
bi-vector of order $\deg_{\D_x}=1$ and a reciprocal transformation in $\cR$
that only changes the independent variable. This should reproduce the
Ferapontov--Pavlov formula first derived in~\cite[Section
3]{ferapontov03:_recip_hamil} (based
on~\cite{ferapontov-ConformallyFlatMetrics95}).

Consider the change of $x$ given by
\begin{align} \label{eq:change-x} \D_x=B\D_y, \qquad \D_y ^{-1} B^{-1} =
  \D_x^{-1} \
\end{align}
as an element of $\cR_I$, that is, we assume that $B=B(u^j)$. Let a local
Poisson bracket of differential degree $1$ be given by the operator
\begin{align} \label{eq:operator-P} P^{ij}\coloneqq g^{ij} \D_x + \Gamma^{ij}_k
  u^k_x, \qquad (P^*)^{ji} = -P^{ij}
\end{align}

\begin{convention} \label{conv:ux} Throughout the computations in this Section
  it is important for us to distinguish between $\D_x u^k$ and $\D_y u^k$, so
  we use the notation $u^k_x$ and $u^k_y$ rather than $u^{k,1}$.
\end{convention}

\begin{proposition} \label{prop:F-P} The action of the reciprocal
  transformation~\eqref{eq:change-x} on the operator~\eqref{eq:operator-P}
  produces a weakly non-local operator of localizable shape, whose local part
  is given explicitly as
  \begin{equation} \label{eq:F-P-loc} g^{ij}B^2 \D_y + \Gamma^{ij}_k B^2 u^k_y
    -\frac 12 g^{i\ell} B^2 \left( g_{\ell m}\frac{\D B^{-2}}{\D u^k} + g_{km}
      \frac{\D B^{-2}}{\D u^\ell} -g_{\ell k} \frac{\D B^{-2}}{\D u^m}\right)
    g^{mj} B^2 u^k_y
  \end{equation}
  and the non-local part is equal to
  \begin{equation} \label{eq:F-P-nonloc} \left(P^{i\ell} \left(\frac{\D B}{\D
          u^\ell}\right) -\frac{1}{2} u^i_y \frac{\D B}{\D u^k} g^{k\ell}
      \frac{\D B}{\D u^\ell} \right) \D_y^{-1} u^j_y + u^i_y \D_y^{-1} \left(
      P^{jk}\left(\frac{\D B}{\D u^k}\right) -\frac{1}{2}\frac{\D B}{\D u^k}
      g^{k\ell} \frac{\D B}{\D u^\ell} u^j_y \right).
  \end{equation}
\end{proposition}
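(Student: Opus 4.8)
The plan is to specialize the general transformation formula~\eqref{eq:56} of Proposition~\ref{thm:Miura-Reciprocal-Bivector} to the present situation and then separate the local and non-local contributions. Since the reciprocal transformation~\eqref{eq:change-x} leaves the dependent variables unchanged, we have $Q^i=u^i$, so $(\ell_{Q^j})_i=\delta^j_i$ and $\D_x Q^j=u^j_x$; moreover $B=B(u^j)$ depends on the fields only, whence $(\ell_B)_i=\D B/\D u^i$ is a zeroth-order (multiplication) operator. Writing $b_i:=\D B/\D u^i$, formula~\eqref{eq:19} reduces to the pseudo-differential operators
\begin{equation*}
  \mathcal{D}^h_i = B\delta^h_i - u^h_x\,\D_x^{-1}\,b_i,
  \qquad
  (\mathcal{D}^*)^k_j = B\delta^k_j + b_j\,\D_x^{-1}\,u^k_x,
\end{equation*}
the second being the formal adjoint of the first, obtained from $(\D_x^{-1})^*=-\D_x^{-1}$ and the self-adjointness of multiplication operators.

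First I would substitute these into $P^{hk}_y=\frac1B\,\mathcal{D}^h_i\,P^{ij}_x\,(\mathcal{D}^*)^k_j$ with $P^{ij}_x=g^{ij}\D_x+\Gamma^{ij}_k u^k_x$, and expand the triple composition into groups according to how many factors $\D_x^{-1}$ they carry: one purely local term $B\,P^{hk}_x\!\circ\! B$, two terms carrying a single $\D_x^{-1}$ (pairing one non-local factor of $\mathcal{D}$ or $\mathcal{D}^*$ with the local part of the other), and one term carrying two factors of $\D_x^{-1}$. For the single-$\D_x^{-1}$ terms the strategy is to commute the inner multiplication operators through $\D_x^{\pm1}$ using the elementary identities $\D_x\circ f=f_x+f\D_x$ and $\D_x^{-1}f\D_x=f-\D_x^{-1}f_x$; each such term then splits into a genuinely local piece (a multiplication operator) plus a remainder of weakly non-local shape $u^h_x\D_x^{-1}\circ(\,\cdot\,)$ or $(\,\cdot\,)\D_x^{-1}u^k_x$.

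The main obstacle, and the step I would treat most carefully, is the term carrying two inverse derivatives,
\begin{equation*}
  -\,u^h_x\,\D_x^{-1}\,\big(b_i P^{ij}_x b_j\big)\,\D_x^{-1}\,u^k_x ,
\end{equation*}
which a priori is not of weakly non-local (single-tail) shape. Here I would use the skew-symmetry $(P^*)^{ji}=-P^{ij}$ from~\eqref{eq:operator-P}: it forces the scalar operator $R:=b_iP^{ij}_xb_j$ to be skew-adjoint, and since its symbol is $c\,\D_x$ with $c:=b_i g^{ij}b_j=\frac{\D B}{\D u^i}g^{ij}\frac{\D B}{\D u^j}$, skew-adjointness pins it down completely to $R=c\,\D_x+\tfrac12 c_x$. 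Feeding this back and applying the integration-by-parts identity $\D_x^{-1}c_x\D_x^{-1}=c\,\D_x^{-1}-\D_x^{-1}c$ collapses the double tail to the single-tail expression $-\tfrac12 u^h_x\D_x^{-1}(c\,u^k_x)-\tfrac12 u^h_x c\,\D_x^{-1}u^k_x$; this is exactly what produces the $-\tfrac12 u^i_y\,c$ contributions to both tails in~\eqref{eq:F-P-nonloc}.

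Finally, I would convert everything to the $y$-coordinates via $\D_x=B\D_y$, $u^k_x=Bu^k_y$ and $\D_x^{-1}=\D_y^{-1}\circ B^{-1}$ (the second relation in~\eqref{eq:change-x}), and collect terms. Matching the coefficient of $\D_y$ gives the metric $g^{ij}B^2$; the remaining local multiplication terms, after recognizing the second derivatives $\D^2 B/\D u^i\D u^m$ and reorganizing, reproduce the transformed connection term in~\eqref{eq:F-P-loc} (equivalently, the Levi-Civita symbols of the conformally rescaled metric $g_{ij}B^{-2}$); and the surviving $\D_y^{-1}$ terms assemble into $V^i\D_y^{-1}u^j_y+u^i_y\D_y^{-1}V^j$ with $V^i=P^{i\ell}(b_\ell)-\tfrac12 u^i_y\,c$, which is~\eqref{eq:F-P-nonloc}. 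The only genuinely delicate bookkeeping, beyond the skew-symmetry reduction above, is checking that the bare local pieces extracted from the single-tail terms combine with the leftover contribution of $B\,P^{hk}_x\!\circ\! B$ into precisely the symmetric Christoffel-type expression displayed in~\eqref{eq:F-P-loc}; I expect this to follow from a direct, if slightly tedious, comparison of coefficients.
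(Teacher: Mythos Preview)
Your proposal is correct and follows essentially the same route as the paper's proof: both start from the specialization of Proposition~\ref{thm:Miura-Reciprocal-Bivector} to $Q^i=u^i$, expand the product into the four summands, and use skew-symmetry of $P^{ij}$ to reduce the double-tail term. The only cosmetic difference is that the paper writes the scalar operator $b_iP^{ij}b_j$ directly in the symmetrized form $\tfrac12(\D_x\circ c + c\,\D_x)$ and cancels the two $\D_x^{-1}$'s immediately, whereas you write it as $c\,\D_x+\tfrac12 c_x$ and then invoke $\D_x^{-1}c_x\D_x^{-1}=c\,\D_x^{-1}-\D_x^{-1}c$; these are the same identity.
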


\begin{remark} Note that
  $\Gamma^{ij}_k B^2 -\frac 12 g^{i\ell} B^2 \left( g_{\ell m}\pdS{
      B^{-2}}{u^k} + g_{km} \pdS{ B^{-2}}{ u^\ell} -g_{\ell k} \pdS{ B^{-2}}{
      u^m}\right) g^{mj} B^2$ is exactly the covariant Christoffel symbol for
  the metric $g^{ij} B^2$, so we indeed reproduce the Ferapontov--Pavlov
  formula in~\cite{ferapontov03:_recip_hamil}.
\end{remark}

\begin{proof}[Proof of Proposition~\ref{prop:F-P}]
  By Theorem~\ref{thm:Miura-Reciprocal-Bivector} the bi-vector $P^{ij}$ is
  transformed under the substitution~\eqref{eq:change-x} to
  \begin{align}
    B^{-1} \left(B\delta^i_k - u^i_x \D_x^{-1} \frac{\D B}{\D u^k} \right) P^{k\ell} 
    \left(B\delta^j_\ell + \frac{\D B}{\D u^\ell} \D_x^{-1} u^j_x  \right).
  \end{align}
  Expanding the brackets, we have the following four summands (we intentionally
  keep derivatives in $x$ instead of $y$ as long as possible):
  \begin{align}
    -B^{-1} u^i_x \D_x^{-1} \frac{\D B}{\D u^k} P^{k\ell} \frac{\D B}{\D u^\ell} \D_x^{-1} u^j_x  & = -\frac{1}{2} B^{-1} u^i_x \D_x^{-1} \left( \D_x \frac{\D B}{\D u^k} g^{k\ell} \frac{\D B}{\D u^\ell} + \frac{\D B}{\D u^k} g^{k\ell} \frac{\D B}{\D u^\ell} \D_x\right) \D_x^{-1} u^j_x \\ \notag 
                                                                                                  &=
                                                                                                    -\frac{1}{2} B^{-1} u^i_x \frac{\D B}{\D u^k} g^{k\ell} \frac{\D B}{\D u^\ell} \D_x^{-1} u^j_x -\frac{1}{2} B^{-1} u^i_x \D_x^{-1} \frac{\D B}{\D u^k} g^{k\ell} \frac{\D B}{\D u^\ell}  u^j_x \\ \notag 
                                                                                                  & =
                                                                                                    -\frac{1}{2}  u^i_y \frac{\D B}{\D u^k} g^{k\ell} \frac{\D B}{\D u^\ell} \D_y^{-1} u^j_y -\frac{1}{2} u^i_y \D_y^{-1} \frac{\D B}{\D u^k} g^{k\ell} \frac{\D B}{\D u^\ell}  u^j_y \ ;
    \\
    B^{-1} B\delta^i_k P^{k\ell} \frac{\D B}{\D u^\ell} \D_x^{-1} u^j_x 
                                                                                                  & = P^{i\ell} \left(\frac{\D B}{\D u^\ell}\right) \D_x^{-1} u^j_x + g^{i\ell }\frac{\D B}{\D u^\ell} u^j_x \\ \notag 
                                                                                                  & 
                                                                                                    =  P^{i\ell} \left(\frac{\D B}{\D u^\ell}\right) \D_y^{-1} u^j_y + g^{i\ell }\frac{\D B}{\D u^\ell} B u^j_y \ ;
    \\
    -B^{-1} u^i_x \D_x^{-1} \frac{\D B}{\D u^k} P^{k\ell} B\delta^j_\ell 
                                                                                                  & = -B^{-1} u^i_x \D_x^{-1} (P^*)^{k j}\left(\frac{\D B}{\D u^k}\right) B -B^{-1} u^i_x \frac{\D B}{\D u^k} g^{kj} B
    \\ \notag 
                                                                                                  & =
                                                                                                    u^i_y \D_y^{-1} P^{jk}\left(\frac{\D B}{\D u^k}\right)  - u^i_y \frac{\D B}{\D u^k} g^{kj} B \ ; 
    \\
    B^{-1} B\delta^i_k  P^{k\ell} B\delta^j_\ell
                                                                                                  &=  P^{ij} B \ .
  \end{align}
  Thus the non-local term is given by Equation~\eqref{eq:F-P-nonloc}, and the
  local term is given by
  \begin{align}
    & P^{ij} B + g^{i\ell }\frac{\D B}{\D u^\ell} B u^j_y - u^i_y \frac{\D B}{\D u^k} g^{kj} B \\ \notag &
                                                                                                           =
                                                                                                           g^{ij}B^2 \D_y + g^{ij} B \frac{\D B}{\D u^k} u^k_y + \Gamma^{ij}_k B^2 u^k_y - \frac 12 g^{i\ell } B^4 \frac{\D B^{-2}}{\D u^\ell}  u^j_y
                                                                                                           +\frac 12 u^i_y \frac{\D B^{-2}}{\D u^k} g^{kj} B^4,
  \end{align}
  where the latter expression is equal to~\eqref{eq:F-P-loc}.
\end{proof}

\subsection{Weakly non-local bi-vectors of localizable shape}
The goal of this Section is to prove that the space of weakly non-local
bi-vectors of localizable shape is closed under the action of reciprocal
differential substitutions. We narrow the scope to the Miura-type substitutions
$\cR$ as in Definition~\ref{def:Miura-reciprocal-trans}.

Let us consider the effect of a reciprocal transformation of the form
\eqref{eq:change-x} on a general weakly nonlocal bi-vector of localizable
shape:
\begin{equation} \label{eq:bi-vector-loc-shape} P^{ij}=\sum_{d=1}^\infty
  \epsilon^{d-1} \left(\sum_{s=0}^d P^{ij}_{d,d-s} \D_x^s + u^{i,1} \D_x^{-1}
    V_d^j + V_d^i \D_x^{-1} u^{j,1}\right)=P_{loc}^{ij}+P_{nonloc}^{ij},
\end{equation}
where $P^{ij}_{d,d-s}\in\cA_{d-s}$ and $V^i_d \in \cA_d$. Note that both
$P_{loc}^{ij}$ and $P_{nonloc}^{ij}$ define skew-symmetric bi-vectors, that is
$(P_{loc}^*)^{ij} = - P_{loc}^{ji}$ and
$(P_{nonloc}^*)^{ij} = - P_{nonloc}^{ji}$.

\begin{proposition} Consider a Miura-reciprocal transformation in $\cR$ given
  by $dy = Bdx$, $w^i = Q^i$. Under this transformation any weakly non-local
  bi-vector $P^{ij}$ of localizable shape~\eqref{eq:bi-vector-loc-shape} is
  transformed into a weakly non-local bi-vector of localizable shape.
\end{proposition}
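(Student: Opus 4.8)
The plan is to apply Theorem~\ref{thm:Miura-Reciprocal-Bivector} directly: the transformed operator is $P^{hk}_y=\frac1B\mathcal D^h_i\,P^{ij}\,(\mathcal D^*)^k_j$ with $\mathcal D$ as in Equation~\eqref{eq:19}, so the whole task is to check that substituting the localizable form~\eqref{eq:bi-vector-loc-shape} for $P^{ij}$ again produces a localizable operator in the $y$-coordinates. First I would rewrite $\mathcal D$ in a form adapted to the target coordinates. Since $w^i=Q^i$ and $\D_x=B\D_y$ give $\D_xQ^j=B\,w^{j,1}$, one obtains
\begin{equation*}
\tfrac1B\mathcal D^h_i=(\ell_{Q^h})_i-w^{h,1}\,\D_x^{-1}(\ell_B)_i,
\qquad
(\mathcal D^*)^k_j=(\ell_{Q^k})^*_j\,B+(\ell_B)^*_j\,\D_x^{-1}\,B\,w^{k,1}.
\end{equation*}
In this form the two nonlocal pieces already carry the derivative factors $w^{h,1}$ and $w^{k,1}$ on the correct outer and inner sides, which are precisely the factors appearing in a localizable tail.

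Next I would substitute $P^{ij}=P^{ij}_{loc}+u^{i,1}\D_x^{-1}V^j+V^i\D_x^{-1}u^{j,1}$, expand the triple product, and sort the resulting terms by the number of nested $\D_x^{-1}$ they contain. The terms with no $\D_x^{-1}$ assemble into a differential operator and produce the new local part $P^{hk}_{loc,y}$. For the terms with a single $\D_x^{-1}$ I would use the two elementary integration-by-parts rules: for a differential operator $E=\sum_s E_s\D_x^s$ one has $E\circ\D_x^{-1}=(\text{local})+E_0\,\D_x^{-1}$ and $\D_x^{-1}\circ E=(\text{local})+\D_x^{-1}\circ(\text{mult.\ by }E^*(1))$, together with the conversions $\D_x^{-1}=\D_y^{-1}\circ B^{-1}$ and $B\,w^{k,1}=u^{k,1}$. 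These turn every single-inverse contribution into a local piece plus a tail of the exact shape $w^{h,1}\D_y^{-1}\tilde V^k$ or $\tilde V^h\D_y^{-1}w^{k,1}$, as required.

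The genuine content, and what I expect to be the main obstacle, is the treatment of the terms carrying two or three nested $\D_x^{-1}$, which a priori violate localizability and must therefore collapse. The key device is that whenever $(\ell_B)_i$ is composed with a tail $u^{i,1}\D_x^{-1}(\cdots)$ the surviving nonlocal residue collects the coefficient $\sum_\sigma\partial_{u^{i,\sigma}}B\,u^{i,\sigma+1}=\D_xB$, after which the inner inverse telescopes through $\D_x^{-1}(\D_xB\cdot\Phi)=B\Phi-\D_x^{-1}(B\,\D_x\Phi)$ (and its formal adjoint on the $\mathcal D^*$ side). The residual doubly-nonlocal contributions must then cancel, and here I would invoke the skew-symmetry $(P^*)^{ij}=-P^{ji}$, which relates the two tail pieces $u^{i,1}\D_x^{-1}V^j$ and $V^i\D_x^{-1}u^{j,1}$ and pairs the leftovers coming from the $\mathcal D$-side with those from the $\mathcal D^*$-side so that all nested inverses disappear, leaving only a local part and a single-$\D_x^{-1}$ tail. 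Organising this bookkeeping carefully—tracking the various $B$, $B^{-1}$ factors through the $\D_x^{-1}=\D_y^{-1}B^{-1}$ conversions—is the delicate step.

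Finally, I would keep in mind a conceptual check that explains why these cancellations are forced, and which I would use as a guide (or as an alternative proof). All the nonlocality of a localizable bi-vector is encoded by the single odd object $\zeta=\D_x^{-1}(-u^{i,1}\theta_i)$ introduced above, localizable shape being exactly the condition that the $\theta$-density in $\hat\cA$ be at most linear in $\zeta$. Under $dy=Bdx$, $w^i=Q^i$ the odd covectors transform by $\mathcal D^*$, and a short computation using $u^{i,1}=B\,w^{i,1}$ and $\D_x^{-1}=\D_y^{-1}B^{-1}$ shows that $\zeta$ is sent to the analogous generator built from $w^{i,1}$ and the transformed odd covectors, modulo a differential polynomial. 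Hence the property ``linear in $\zeta$'' is preserved, which is precisely the assertion that the set of weakly non-local bi-vectors of localizable shape is closed under $\cR$; this is the structural reason behind the term-by-term cancellations and confirms that the direct computation must close up.
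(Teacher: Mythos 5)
Your proposal is correct and takes essentially the same route as the paper's own proof: both apply Proposition~\ref{thm:Miura-Reciprocal-Bivector} directly, expand the triple product and sort terms by the depth of nested $\partial_x^{-1}$, use the observation that $(\ell_B)_i$ composed with a tail $u^{i,1}\partial_x^{-1}(\cdots)$ leaves the non-local residue $B_x\partial_x^{-1}$ which is then telescoped through $\partial_x^{-1}B_x\partial_x^{-1}$, and finally invoke the mutual adjointness of the leftover local operators on the $\mathcal{D}$- and $\mathcal{D}^*$-sides (forced by skew-symmetry of the tails) together with the integration-by-parts rules $E\circ\partial_x^{-1}=E(1)\partial_x^{-1}+\mathrm{loc}$, $\partial_x^{-1}\circ E=\partial_x^{-1}\circ E^*(1)+\mathrm{loc}$ to collapse everything into localizable shape. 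The only organizational difference is that the paper first splits $P=P_{loc}+P_{nonloc}$ and disposes of the local part by repeating the Ferapontov--Pavlov computation of Proposition~\ref{prop:F-P}, whereas you treat all terms uniformly; your concluding $\zeta$-formalism check mirrors the paper's remark that the statement alternatively follows from the arguments of Liu--Zhang.
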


\begin{remark} In principle, this proposition follows from the arguments
  of~\cite{LiuZhang:JacobiStructures}
  and~\cite{ferapontov03:_recip_hamil}. However, it can also be directly
  obtained using Theorem~\ref{thm:Miura-Reciprocal-Bivector}.
\end{remark}

\begin{proof} 
  Repeating \emph{mutatis mutandis} the proof of Proposition~\ref{prop:F-P} one
  can check that the local part $P_{loc}^{ij}$ produces a weakly non-local
  operator of localizable shape (the only thing that matters for that
  computation is skew-symmetry of the bi-vector defined by $P_{loc}^{ij}$). So
  let us focus on the non-local part
  $P_{nonloc}^{ij} = u^{i}_x \D_x^{-1} V^j + V^i \D_x^{-1} u^{j}_x$, where
  $V^i = \sum_{d=1}^\infty \epsilon^{d-1} V^i_d$ and we use
  Convention~\ref{conv:ux} here and below in computations.  We have:
  \begin{align} \label{eq:PnonlocTrans} & B^{-1} \left(B
      \pd{Q^i}{u^{k,s}}\D_x^s - Q^i_x \D_x^{-1} \frac{\D B}{\D u^{k,s}}\D_x^s
    \right)\circ \left(u^{k}_x \D_x^{-1} V^l + V^k \D_x^{-1} u^{l}_x\right) \\
    \notag & \qquad \qquad \left((-\D_x)^t\circ \pd{Q^j}{u^{l,t}} B +
      (-\D_x)^t\circ \pd{B}{u^{l,t}} \D_x^{-1} Q^j_x \right)
  \end{align}
  (we omit the summation over $s$ and $t$ for brevity).  We
  compute~\eqref{eq:PnonlocTrans} as follows. First, note that
  \begin{align} \label{eq:WNLLoc-First} & \pd{Q^i}{u^{k,s}}\D_x^s \circ u^{k}_x
    \D_x^{-1} V^l (-\D_x)^t\circ \pd{Q^j}{u^{l,t}} B = Q^i_x \D_x^{-1}
    (\ell_{Q^j})_l(V^l) B + \mathsf{loc} ; \\ \notag & \pd{Q^i}{u^{k,s}}\D_x^s
    \circ V^{k} \D_x^{-1} u^l_x (-\D_x)^t\circ \pd{Q^j}{u^{l,t}} B =
    (\ell_{Q^i})_k(V^k) \D_x^{-1} Q^j_x B + \mathsf{loc}.
  \end{align}
  Here $ \mathsf{loc}$ are the terms where we collect some purely local
  operators. Furthermore,
  \begin{align} \label{eq:WNLLoc-Second} - \frac 1B Q^i_x \D_x^{-1} \frac{\D
      B}{\D u^{k,s}}\D_x^s \circ u^{k}_x \D_x^{-1} V^l (-\D_x)^t\circ
    \pd{Q^j}{u^{l,t}} B & = - \frac 1B Q^i_x \D_x^{-1} B_x \D_x^{-1}
    (\ell_{Q^j})_l(V^l) B - \frac 1B Q^i_x \D_x^{-1} O^j_{BuVQ} B; \\ \notag -
    \frac 1B Q^i_x \D_x^{-1} \frac{\D B}{\D u^{k,s}}\D_x^s \circ V^k \D_x^{-1}
    u^{l}_x (-\D_x)^t\circ \pd{Q^j}{u^{l,t}} B & = - \frac 1B Q^i_x \D_x^{-1}
    (\ell_B)_k(V^k) \D_x^{-1} {Q^j}_x B - \frac 1B Q^i_x \D_x^{-1} O^j_{BVuQ}
    B; \\ \notag \pd{Q^i}{u^{k,s}}\D_x^s \circ u^{k}_x \D_x^{-1} V^l
    (-\D_x)^t\circ \pd{B}{u^{l,t}} \D_x^{-1} Q^j_x & = Q^i_x \D_x^{-1}
    (\ell_B)_l(V^l) \D_x^{-1} Q^j_x + O^i_{QuVB} \D_x^{-1} Q^j_x; \\ \notag
    \pd{Q^i}{u^{k,s}}\D_x^s \circ V^k \D_x^{-1} u^{l}_x (-\D_x)^t\circ
    \pd{B}{u^{l,t}} \D_x^{-1} Q^j_x &= (\ell_{Q^i})_l(V^l) \D_x^{-1}
    B_x\D_x^{-1} Q^j_x + O^i_{QVuB}\D_x^{-1} Q^j_x.
  \end{align}
  Here $O^j_{BuVQ}$, $O^j_{BVuQ}$, $O^i_{QuVB}$, and $O^i_{QVuB}$ are some
  scalar local operators, whose main property is that
  $(O^j_{BuVQ})^* = - O^j_{QVuB}$ and $(O^j_{BVuQ})^* = - O^j_{QuVB}$. We omit
  their explicit formulas. Finally,
  \begin{align} \label{eq:WNLLoc-Third} - \frac 1B Q^i_x \D_x^{-1} \frac{\D
      B}{\D u^{k,s}}\D_x^s \circ u^{k}_x \D_x^{-1} V^l (-\D_x)^t\circ
    \pd{B}{u^{l,t}} \D_x^{-1} Q^j_x & = - \frac 1B Q^i_x \D_x^{-1} B_x
    \D_x^{-1} (\ell_{B})_l(V^l) \D_x^{-1} Q^j_x \\ \notag & \quad - \frac 1B
    Q^i_x \D_x^{-1} O_{BuVB} \D_x^{-1} Q^j_x; \\ \notag - \frac 1B Q^i_x
    \D_x^{-1} \frac{\D B}{\D u^{k,s}}\D_x^s \circ V^k \D_x^{-1} u^{l}_x
    (-\D_x)^t\circ \pd{B}{u^{l,t}} \D_x^{-1} Q^j_x & =- \frac 1B Q^i_x
    \D_x^{-1} (\ell_{B})_l(V^l) \D_x^{-1} B_x \D_x^{-1} Q^j_x \\ \notag & \quad
    - \frac 1B Q^i_x \D_x^{-1} O_{BVuB} \D_x^{-1} Q^j_x,
  \end{align}
  where $O_{BuVB}$ and $O_{BVuB}$ are scalar local operators such that
  $O_{BuVB}^*=-O_{BVuB}$. We omit their explicit formulas, but we use below
  that $O_{BuVB}+ O_{BVuB} = -\tilde O^* \D_x - \D_x\circ \tilde O$ for some
  local operator $\tilde O$.

  Now we collect the terms together. Firstly, we list all terms with $B_x$ that
  emerged in~\eqref{eq:WNLLoc-Second} and~\eqref{eq:WNLLoc-Third}:
  \begin{align} \label{eq:WNLLoc-Forth} - \frac 1B Q^i_x \D_x^{-1} B_x
    \D_x^{-1} (\ell_{Q^j})_l(V^l) B & = - Q^i_x \D_x^{-1} (\ell_{Q^j})_l(V^l) B
    + \frac 1B Q^i_x \D_x^{-1} (\ell_{Q^j})_l(V^l) B^2 \\ \notag
    (\ell_{Q^i})_l(V^l) \D_x^{-1} B_x\D_x^{-1} Q^j_x & = (\ell_{Q^i})_l(V^l)
    B\D_x^{-1} Q^j_x - (\ell_{Q^i})_l(V^l) \D_x^{-1} Q^j_x B \\ \notag - \frac
    1B Q^i_x \D_x^{-1} B_x \D_x^{-1} (\ell_{B})_l(V^l) \D_x^{-1} Q^j_x & = -
    Q^i_x \D_x^{-1} (\ell_{B})_l(V^l) \D_x^{-1} Q^j_x + \frac 1B Q^i_x
    \D_x^{-1} B (\ell_{B})_l(V^l) \D_x^{-1} Q^j_x \\ \notag - \frac 1B Q^i_x
    \D_x^{-1} (\ell_{B})_l(V^l) \D_x^{-1} B_x \D_x^{-1} Q^j_x & = - \frac 1B
    Q^i_x \D_x^{-1} (\ell_{B})_l(V^l) B \D_x^{-1} Q^j_x + \frac 1B Q^i_x
    \D_x^{-1} (\ell_{B})_l(V^l) \D_x^{-1} Q^j_x B
  \end{align}
  Note some cancellations: the non-local terms in~\eqref{eq:WNLLoc-First}
  cancel with the corresponding summands in the first and the second line
  of~\eqref{eq:WNLLoc-Forth}, two non-local terms in the second and third line
  of~\eqref{eq:WNLLoc-Second} cancel with the two terms in the third and forth
  line of~\eqref{eq:WNLLoc-Forth}, are there are two terms in the latter lines
  that cancel each other. So, modulo the purely local
  terms,~\eqref{eq:PnonlocTrans} is equal to the sum of the following four
  expressions:
  \begin{align}
    \frac 1B Q^i_x \D_x^{-1} (\ell_{Q^j})_l(V^l) B^2 + 	(\ell_{Q^i})_l(V^l) B\D_x^{-1} Q^j_x & = w^i_y \D_y^{-1} (\ell_{Q^j})_l(V^l) B + 	(\ell_{Q^i})_l(V^l) B\D_y^{-1} w^j_y;
    \\ \notag 
    \frac 1B Q^i_x \D_x^{-1} (O^j_{QVuB})^* B + O^i_{QVuB}\D_x^{-1} Q^j_x
                                                                                             & = \frac 1B Q^i_x \D_x^{-1} O^j_{QVuB}(1) B + O^i_{QVuB}(1) \D_x^{-1} Q^j_x + \mathsf{loc}
    \\ \notag & 
                = w^i_y \D_y^{-1} O^j_{QVuB}(1)  + O^i_{QVuB}(1) \D_y^{-1} w^j_y + \mathsf{loc};
    \\ \notag 
    \frac 1B Q^i_x \D_x^{-1} (O^j_{QuVB})^* B + O^i_{QuVB} \D_x^{-1} Q^j_x & = \frac 1B Q^i_x \D_x^{-1} O^j_{QuVB}(1) B + O^i_{QuVB}(1) \D_x^{-1} Q^j_x+ \mathsf{loc}
    \\ \notag &  = w^i_y \D_y^{-1} O^j_{QuVB}(1)  + O^i_{QuVB}(1) \D_y^{-1} w^j_y + \mathsf{loc};
    \\ \notag 
    - \frac 1B Q^i_x \D_x^{-1} O_{BuVB} \D_x^{-1} Q^j_x	- \frac 1B Q^i_x \D_x^{-1} O_{BVuB} \D_x^{-1} Q^j_x
                                                                                             & = \frac 1B Q^i_x \D_x^{-1}(\tilde O^* \D_x + \D_x\circ \tilde O) \D_x^{-1} Q^j_x	 
    \\ \notag
                                                                                             & =  \frac 1B Q^i_x   \D_x^{-1} \tilde O(1)Q^j_x	
                                                                                               + \frac 1B Q^i_x  \tilde O(1) \D_x^{-1} Q^j_x + \mathsf{loc}
    \\ \notag
                                                                                             & =  w^i_y   \D_y^{-1} 
                                                                                               \frac 1B\tilde O(1) Q^j_x	
                                                                                               + \frac 1B Q^i_x  \tilde O(1) \D_y^{-1} w^j_y + \mathsf{loc},	 	 
  \end{align}
  which is manifestly a weakly non-local operator of localizable shape.
\end{proof}

\section{Schouten bracket for weakly non-local operators of localizable shape}

\label{sec:LV-LZ}

The goal of this Section is to compare two ways to encode weakly non-local
Poisson structures of localizable shape: the one given in
\cite{LiuZhang:JacobiStructures} (by design only working for the localizable
shape case) and \cite{LORENZONI2020103573} (it is working for general weakly
non-local case, but we specialize it for the localizable shape). In principle,
the identification of these two approaches follows from the uniqueness property
of the bracket, c.f.~\cite[Theorem 2.4.1]{LiuZhang:JacobiStructures}, but we
want to present an explicit computation for this identification.

\subsection{The two approaches}
In both approaches the weakly non-local $p$-vectors of localizable shape are
encoded as
\begin{align}
  \label{eq:26}
  \int P= \int P_L+\zeta P_N,
\end{align}
where $ P_L\in \hat{\cA}^p$, $P_N\in \hat{\cA}^{p-1}$, and
\begin{equation}
  \label{eq:MainPropertyofZeta}
  \partial_x\zeta = -u^{i,1}\theta_i.
\end{equation}
The difference in two approaches is the meaning of $\zeta$. In the approach
of~\cite{LiuZhang:JacobiStructures}, $\zeta$ is a new dependent variable such
that $\deg_{\D_x} \zeta = 0$ and $\deg_{\theta}\zeta =1$. The new space of
multivector densities is defined as $\cS\coloneqq \hat{\cA}[\zeta]$, equipped
with the operator
\begin{align}
  \D_x & = -u^{i,1}\theta_i \D_\zeta + \sum u^{i,d+1} \D_{u^{i,d}} + \theta_i^{d+1}\D_{\theta_i^{d}},
\end{align}
and the space of weakly non-local multivectors of localizable shape is defined
as $\cE\coloneqq \cS/\D_x\cS$.

In the approach of~\cite{LORENZONI2020103573}, $\zeta$ is not a new dependent
variable, but rather an expression in the existing dependent variables (still
of differential degree $\deg_{\D_x} \zeta = 0$ and multivector degree
$\deg_{\theta}\zeta =1$), such that Equation~\eqref{eq:MainPropertyofZeta} is
satisfied for the standard operator
\begin{align}
  \tilde{\D}_x & =  \sum u^{i,d+1} \D_{u^{i,d}} + \theta_i^{d+1}\D_{\theta_i^{d}}.
\end{align}
For instance, one can find such a function in $\hat{\cA}((\frac 1{u^{1,1}}))$,
cf.~\cite{DubrovinLiuZhang:QuasiTriv}. To this end, one looks for a unique
solution $\tilde{\D}_x \zeta = - u^{i,1}\theta_i$ of the form
$\zeta = \sum_{i=1}^\infty \tfrac{f_i} {(u^{1,1})^{i}}$, with
$f_i\in \hat{\cA}$ such that $\D_{u^{1,1}} f_i = 0$.

Once the objects are defined, we have two different formulae for the Schouten
bracket in these two approaches:
\begin{itemize}
\item The formula in the approach of~\cite{LORENZONI2020103573} is
  \begin{equation}
    \Big[\int P,\int Q\Big]=\int (-1)^{\deg_\theta P}\tilde\delta_{u^i}P\tilde\delta_{\theta_i}{Q} +
    \tilde\delta_{\theta_i}P\tilde\delta_{u^i}{Q}.
  \end{equation}
  Recall that $\zeta$ is regarded as a function of
  $(u^i_\sigma,\theta_i^{\sigma})$ in the variational derivatives (which are
  denoted by $\tilde\delta_{u^i}$ and $\tilde\delta_{\theta_i}$ for that
  reason).
\item The formula in the approach of~\cite{LiuZhang:JacobiStructures} is
  \begin{equation}
    \Big[\int P,\int Q\Big]=\int (-1)^{\deg_\theta P}\delta_{u^i}{P}\delta_{\theta_i}{Q} +
    \delta_{\theta_i}{P}\delta_{u^i}{Q}
    + (-1)^{\deg_\theta P}\hat{E}(P)\partial_{\zeta}Q +
    \partial_{\zeta}P\hat{E}(Q)
  \end{equation}
  Here $\zeta$ is regarded as an extra dependent variable, and the operator
  $\hat{E}$ is defined as
  \begin{align}
    \hat E & = \sum_{\substack{s\geq 1 \\ t\geq 0}} \Big(
    u^{i,s} (-\D_x)^t \D_{u^{i,s+t}} + \theta_i^{s} (-\D_x)^t \D_{\theta_i^{s+t}} 
    \Big) - 1 + \theta_i \delta_{\theta_i}.
  \end{align}
\end{itemize}

\subsection{Identification of the two approaches} We prove the following:

\begin{theorem} The identity map $\hat{\cA}[\zeta]\to \hat{\cA}[\zeta]$ induces
  the isomorphism of the Lie algebras of local multivector fields defined by the
  Schouten brackets in these two approaches.
\end{theorem}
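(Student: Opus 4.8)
The plan is to prove that the two bracket formulae coincide term by term under the identity map, by reducing the whole comparison to the variational calculus of the single nonlocal symbol $\zeta$. Both brackets are bilinear and restrict to the ordinary Schouten bracket on the purely local subalgebra $\hat\cA^\bullet\subset\hat\cA[\zeta]$ — there $\partial_{\zeta}$ kills everything, the operator $\hat E$ contributes nothing, and no chain-rule correction is produced — so the identity map is manifestly a morphism on that part. By bilinearity it therefore suffices to analyse the cases in which at least one argument is $\zeta$-linear. Writing a localizable density as $P=P_L+\zeta P_N$ with $P_L,P_N\in\hat\cA$, one has $\partial_{\zeta}P=P_N$ and $\partial_{\zeta}^2P=0$, so only the mixed case $P=\zeta P_N$, $Q\in\hat\cA$ (together with its mirror, handled by graded antisymmetry) and the doubly nonlocal case $P=\zeta P_N$, $Q=\zeta Q_N$ remain to be checked.

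The first key point is that the total derivative $\tilde\D_x$ of~\cite{LORENZONI2020103573}, evaluated by the chain rule on an expression in $(u^{j,\sigma},\theta_j^\sigma,\zeta)$, agrees symbolically with the operator $\D_x=-u^{i,1}\theta_i\D_\zeta+\tilde\D_x$ of~\cite{LiuZhang:JacobiStructures}, precisely because $\tilde\D_x\zeta=-u^{i,1}\theta_i$. Hence the only discrepancy between the two variational derivatives is the chain-rule term through $\zeta$, and for $F=F_L+\zeta F_N$ one finds
\[
  \tilde\delta_{u^i}F-\delta_{u^i}F=\sum_{s\geq0}(-\D_x)^s\big(F_N\,\partial_{u^{i,s}}\zeta\big),\qquad
  \tilde\delta_{\theta_i}F-\delta_{\theta_i}F=\sum_{s\geq0}(-\D_x)^s\big(F_N\,\partial_{\theta_i^s}\zeta\big).
\]
Integrating by parts against a local density $H$, each correction pairs as $\int(\tilde\delta_{u^i}F-\delta_{u^i}F)H=\int F_N\,(\ell_\zeta)_i(H)$, where $(\ell_\zeta)_i=\sum_s\partial_{u^{i,s}}\zeta\,\partial_x^s$ is the Fréchet derivative of $\zeta$ in the $u^i$-direction, and likewise with $(\ell^\theta_\zeta)_i$ in the odd directions. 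Thus the entire comparison collapses to understanding these two operators, which I would compute directly from the defining relation $\D_x\zeta=-u^{i,1}\theta_i$, never touching the localization series. Using $\ell_{\partial_x f}=\partial_x\circ\ell_f$ gives $\partial_x\circ(\ell_\zeta)_i=-\theta_i\partial_x$ and $\partial_x\circ(\ell^\theta_\zeta)_i=-u^{i,1}$, hence, with the convention $C=0$ fixing the $\partial_x^{-1}$-ambiguity,
\[
  (\ell_\zeta)_i=-\partial_x^{-1}\circ\theta_i\circ\partial_x,\qquad (\ell^\theta_\zeta)_i=-\partial_x^{-1}\circ u^{i,1}.
\]

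Substituting these into the expansion of the bracket of~\cite{LORENZONI2020103573} and subtracting the $\delta\delta$-part of the bracket of~\cite{LiuZhang:JacobiStructures}, the mixed-case difference reduces to $\int\partial_{\zeta}P\big[(-1)^{\deg_\theta P}(\ell_\zeta)_i(\delta_{\theta_i}Q)+(\ell^\theta_\zeta)_i(\delta_{u^i}Q)\big]$, which I expect to coincide with the remaining summand $\int\partial_{\zeta}P\,\hat E(Q)$. In the $\theta$-free sector this is exactly the standard homotopy identity $E_0(Q)-Q=-\partial_x^{-1}\!\big(u^{i,1}\delta_{u^i}Q\big)$ underlying the operator $E_0=\sum_{s\geq1,t\geq0}\big(u^{i,s}(-\D_x)^t\D_{u^{i,s+t}}+\theta_i^s(-\D_x)^t\D_{\theta_i^{s+t}}\big)$; in general the local piece $-\theta_i\delta_{\theta_i}Q$ coming from $(\ell_\zeta)_i$ supplies (up to sign) the summand $\theta_i\delta_{\theta_i}$ of $\hat E$, while the genuinely nonlocal $\partial_x^{-1}$-tails produced in the $u$- and $\theta$-directions must cancel against one another, leaving a local answer in $\cE=\cS/\D_x\cS$.

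The doubly nonlocal case $P=\zeta P_N$, $Q=\zeta Q_N$ is treated by the same substitution; here products of two corrections appear, but these are skew under $(P_N,Q_N)$ and turn out to be $\D_x$-exact, hence vanish in $\cE$, consistently with the fact that the $\hat E$-terms are linear in $\partial_{\zeta}$ and produce no doubly nonlocal output. The main obstacle throughout is the anomalous calculus of the nonlocal $\zeta$: the identity $\delta_{u^i}\circ\partial_x=0$ fails on $\zeta$ (indeed $\tilde\delta_{u^i}(\tilde\D_x\zeta)=\tilde\delta_{u^i}(-u^{j,1}\theta_j)=\theta_i^1\neq0$), so every manipulation must be justified modulo $\D_x\cS$ and its boundary contributions carefully tracked. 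Concretely, the delicate step is the bookkeeping of the Koszul signs $(-1)^{\deg_\theta}$ in the odd sector: a short check on a model density such as $Q=u^{1,1}\theta_1$ (where $\hat E(Q)=Q$) already shows that it is exactly these signs that make the nonlocal $\partial_x^{-1}$-pieces of the $u$- and $\theta$-corrections cancel and the local remainder match $\hat E(Q)$, whereas the $\theta$-free examples show the summands $E_0-1$ emerging via the homotopy identity. Verifying that $E_0$, $-1$ and $\theta_i\delta_{\theta_i}$ assemble in precisely the right combination, with all signs, is the heart of the computation.
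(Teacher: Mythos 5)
Your proposal is correct and follows essentially the same route as the paper's proof: the paper likewise writes $P=P_L+\zeta P_N$, expresses the discrepancy $\tilde\delta-\delta$ through the jet-derivatives of $\zeta$ (your adjoint Fr\'echet operators $-\partial_x^{-1}\circ\theta_i\circ\partial_x$ and $-\partial_x^{-1}\circ u^{i,1}$ appear there as the identities $\sum_\sigma(-\partial_x)^\sigma\left(\partial_{\theta_i^\sigma}\zeta\, Q_N\right)=u^{i,1}\partial_x^{-1}(Q_N)$ and $\sum_\sigma(-\partial_x)^\sigma\left(\partial_{u^{i,\sigma}}\zeta\, P_N\right)=-\partial_x\left(\theta_i\partial_x^{-1}(P_N)\right)$), matches the terms linear in these corrections against $\hat E$ via $\D_x\hat E=-u^{i,1}\delta_{u^i}+\theta_i\D_x\delta_{\theta_i}+u^{i,1}\theta_i\delta_\zeta$, and cancels the terms quadratic in the corrections by a final explicit identity. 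The only differences are organizational --- you split into local/mixed/doubly-nonlocal cases by bilinearity where the paper runs one global computation with all Koszul signs explicit --- so the sign bookkeeping you defer as ``the heart of the computation'' is precisely what the paper's displayed chain of equalities carries out.
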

 
\begin{proof}
  We represent any density $P\in \hat{\cA}[\zeta]$ as $P=P_L+\zeta P_N$ and
  consider $\zeta$ to be a nonlocal function. Note that
  \begin{align}
    \label{eq:34}
    \tfdS{P}{u^i} & = \fdS{P}{u^i}
                    + ( - \partial_x)^\sigma\left(\pdS{\zeta}{u^{i,\sigma}}P_N\right)
    \\ \notag 
                  & = \fdS{P_L}{u^i} +
                    ( - \partial_x)^\sigma\left(\zeta\pdS{P_N}{u^{i,\sigma}}\right) +
                    ( - \partial_x)^\sigma\left(\pdS{\zeta}{u^{i,\sigma}}P_N\right),
    \\
    \tfdS{P}{\theta_i} & =  \fdS{P}{\theta_i}
                         + ( - \partial_x)^\sigma\left(\pdS{\zeta}{\theta_i^\sigma}P_N\right)
    \\ \notag 
                  & =\fdS{P_L}{\theta_i} -
                    ( - \partial_x)^\sigma\left(\zeta\pdS{P_N}{\theta_i^\sigma}\right) +
                    ( - \partial_x)^\sigma\left(\pdS{\zeta}{\theta_i^\sigma}P_N\right),
  \end{align}
  where we used that
  \begin{align}
    \label{eq:34b}
    \fdS{P}{u^i} =& \fdS{P_L}{u^i} +
                    ( - \partial_x)^\sigma
                    \left(\zeta\pdS{P_N}{u^{i,\sigma}}\right),
    \\
    \fdS{P}{\theta_i} =& \fdS{P_L}{\theta_i} -
                         ( - \partial_x)^\sigma
                         \left(\zeta\pdS{P_N}{\theta_i^\sigma}\right).
  \end{align}
%
  Using these formulas, we obtain
  \begin{align}
    \label{eq:31}
    \tfdS{P}{u^i}\tfdS{Q}{\theta_i} & =
                                      \left(\fdS{P}{u^i}
                                      + ( - \partial_x)^\sigma\left(\pdS{\zeta}{u^{i,\sigma}}P_N\right)\right)
                                      \left( \fdS{Q}{\theta_i}
                                      + ( - \partial_x)^\sigma\left(\pdS{\zeta}{\theta_i^\sigma}Q_N\right)\right)
    \\ \notag 
                                    & =\fdS{P}{u^i} \fdS{Q}{\theta_i} + \fdS{P}{u^i}( -
                                      \partial_x)^\sigma\left(\pdS{\zeta}{\theta_i^\sigma}Q_N\right)
                                      + ( - \partial_x)^\sigma\left(\pdS{\zeta}{u^{i,\sigma}}P_N\right)
                                      \fdS{Q}{\theta_i}
    \\ \notag 
                                    &\quad 
                                      + ( - \partial_x)^\sigma\left(\pdS{\zeta}{u^{i,\sigma}}P_N\right)
                                      ( - \partial_x)^\sigma\left(\pdS{\zeta}{\theta_i^\sigma}Q_N\right);
    \\
    \tfdS{P}{\theta_i}\tfdS{Q}{u^i} & =
                                      \left(\fdS{P}{\theta_i}
                                      + ( - \partial_x)^\sigma\left(\pdS{\zeta}{\theta_i^\sigma}P_N\right)\right)
                                      \left( \fdS{Q}{u^i}
                                      + ( - \partial_x)^\sigma\left(\pdS{\zeta}{u^{i,\sigma}}Q_N\right)\right)
    \\ \notag 
                                    & =\fdS{P}{\theta_i} \fdS{Q}{u^i} + \fdS{P}{\theta_i}( -
                                      \partial_x)^\sigma\left(\pdS{\zeta}{u^{i,\sigma}}Q_N\right)
                                      + ( - \partial_x)^\sigma\left(\pdS{\zeta}{\theta_i^\sigma}P_N\right)
                                      \fdS{Q}{u^i}
    \\ \notag 
                                    & \quad 
                                      + ( - \partial_x)^\sigma\left(\pdS{\zeta}{\theta_i^\sigma}P_N\right)
                                      ( - \partial_x)^\sigma\left(\pdS{\zeta}{u^{i,\sigma}}Q_N\right).
  \end{align}

  If we want to treat $\zeta$ as a new dependent variable, we have
  $\hat{E}(P)\partial_{\zeta}Q = \hat{E}(P)Q_N$ (and similarly for the other
  summand in the formula), so we have to prove that
  \begin{align}
    \label{eq:ProofLVLZ-1}
    & \int (-1)^{\deg_{\theta}P}\hat{E}(P)Q_N + P_N\hat{E}(Q)
    \\ \notag &=
                \int (-1)^{\deg_\theta P}\Big( \fdS{P}{u^i}( -
                \partial_x)^\sigma\left(\pdS{\zeta}{\theta_i^\sigma}Q_N\right)
                + ( - \partial_x)^\sigma\left(\pdS{\zeta}{u^{i,\sigma}}P_N\right)
                \fdS{Q}{\theta_i}
    \\ \notag  & \qquad \qquad \qquad \quad 
                 + ( - \partial_x)^\sigma\left(\pdS{\zeta}{u^{i,\sigma}}P_N\right)
                 ( - \partial_x)^\sigma\left(\pdS{\zeta}{\theta_i^\sigma}Q_N\right) \Big)
    \\ \notag & \qquad
                +\Big( \fdS{P}{\theta_i}( -
                \partial_x)^\sigma\left(\pdS{\zeta}{u^{i,\sigma}}Q_N\right)
                + ( - \partial_x)^\sigma\left(\pdS{\zeta}{\theta_i^\sigma}P_N\right)
                \fdS{Q}{u^i}
    \\
    & \notag  \qquad \qquad
      + ( - \partial_x)^\sigma\left(\pdS{\zeta}{\theta_i^\sigma}P_N\right)
      ( - \partial_x)^\sigma\left(\pdS{\zeta}{u^{i,\sigma}}Q_N\right) \Big).
  \end{align}
  Let us use the following property of the operator $\hat{E}$:
  \begin{align}
    \D_x\hat E = -u^{i,1} \delta_{u^i} + \theta_i \D_x \delta_{\theta_i} + u^{i,1}\theta_i \delta_\zeta; 
  \end{align}
  So, we obtain
  \begin{align}
    \label{eq:hatEPQN}
    \int \hat{E}(P)Q_N & =
                         \int
                         \partial_x^{-1}\left(-u^{i,1}\fdS{P}{u^i}
                         +\theta_{i}\partial_x\fdS{P}{\theta_i}
                         + u^{i,1}\theta_iP_N\right)Q_N
    \\ \notag 
                       & = \int - \left(-u^{i,1}\fdS{P}{u^i}
                         +\theta_{i}\partial_x\fdS{P}{\theta_i}
                         + u^{i,1}\theta_iP_N\right) \partial_x^{-1}(Q_N),
    \\ \label{eq:PNhatEQ}
    \int P_N\hat{E}(Q) & =
                         \int 
                         P_N\partial_x^{-1}\left(-u^{i,1}\fdS{Q}{u^i}
                         +\theta_{i}\partial_x\fdS{Q}{\theta_i}
                         +  u^{i,1}\theta_iQ_N\right)
    \\ \notag 
    =& \int - \partial_x^{-1}(P_N)\left(-u^{i,1}\fdS{Q}{u^i}
       +\theta_{i}\partial_x\fdS{Q}{\theta_i}
       +  u^{i,1}\theta_iQ_N\right).
  \end{align}
  Substituting Equations~\eqref{eq:hatEPQN} and~\eqref{eq:PNhatEQ}
  into~\eqref{eq:ProofLVLZ-1}, we see that the statement of the theorem reduces
  to the following equality:
  \begin{align}
    \label{eq:ProofLVLZ-2}
    & \int (-1)^{\deg_{\theta}P+1} \left(-u^{i,1}\fdS{P}{u^i}
      +\theta_{i}\partial_x\fdS{P}{\theta_i}
      + u^{i,1}\theta_iP_N\right) \partial_x^{-1}(Q_N)
    \\ \notag 
    &  \quad       - \partial_x^{-1}(P_N)\left(-u^{i,1}\fdS{Q}{u^i}
      +\theta_{i}\partial_x\fdS{Q}{\theta_i}
      +  u^{i,1}\theta_iQ_N\right)
    \\ \notag &=
                \int (-1)^{\deg_\theta P}\Big( \fdS{P}{u^i}( -
                \partial_x)^\sigma\left(\pdS{\zeta}{\theta_i^\sigma}Q_N\right)
                + ( - \partial_x)^\sigma\left(\pdS{\zeta}{u^{i,\sigma}}P_N\right)
                \fdS{Q}{\theta_i}
    \\ \notag  & \qquad \qquad \qquad \quad 
                 +( - \partial_x)^\sigma\left(\pdS{\zeta}{u^{i,\sigma}}P_N\right)
                 ( - \partial_x)^\sigma\left(\pdS{\zeta}{\theta_i^\sigma}Q_N\right) \Big)
    \\ \notag & \qquad
                +\Big( \fdS{P}{\theta_i}( -
                \partial_x)^\sigma\left(\pdS{\zeta}{u^{i,\sigma}}Q_N\right)
                + ( - \partial_x)^\sigma\left(\pdS{\zeta}{\theta_i^\sigma}P_N\right)
                \fdS{Q}{u^i}
    \\
    & \notag  \qquad \qquad 
      + ( - \partial_x)^\sigma\left(\pdS{\zeta}{\theta_i^\sigma}P_N\right)
      ( - \partial_x)^\sigma\left(\pdS{\zeta}{u^{i,\sigma}}Q_N\right) \Big).
  \end{align}

  In order to prove this equality, our strategy is move $\partial_x^{-1}$ in
  $\zeta = \partial_x^{-1}(-u^{i,1}\theta_i)$ to the other factor ($P_N$ or
  $Q_N$) using integration by parts. We have:
  \begin{align}
    \label{eq:52}
    \fdS{P}{u^i}( -
    \partial_x)^\sigma\left(\pdS{\zeta}{\theta_i^\sigma}Q_N\right) & =
                                                                     \fdS{P}{u^i}u^{i,1}\partial_x^{-1}(Q_N),
    \\
    ( - \partial_x)^\sigma\left(\pdS{\zeta}{u^i_\sigma}P_N\right)
    \fdS{Q}{\theta_i} & =
                        - \partial_x\left(\theta_i\partial_x^{-1}(P_N)\right)
                        \fdS{Q}{\theta_i}
    \\
    ( - \partial_x)^\sigma\left(\pdS{\zeta}{u^i_\sigma}P_N\right)
    ( - \partial_x)^\sigma\left(\pdS{\zeta}{\theta_i^\sigma}Q_N\right) & =
                                                                         - \partial_x\left(\theta_i
                                                                         \partial_x^{-1}(P_N)\right)
                                                                         u^{i,1}\partial_x^{-1}(Q_N)
    \\
    ( - \partial_x)^\sigma\left(\pdS{\zeta}{\theta_i^\sigma}P_N\right)
    \fdS{Q}{u^i} & =
                   u^{i,1}\partial_x^{-1}(P_N)\
                   \fdS{Q}{u^i}
    \\
    \fdS{P}{\theta_i}( - \partial_x)^\sigma
    \left(\pdS{\zeta}{u^i_\sigma}Q_N\right) & =
                                              -\fdS{P}{\theta_i} \partial_x
                                              \left(\theta_i\partial_x^{-1}(Q_N)\right)
    \\
    ( - \partial_x)^\sigma\left(\pdS{\zeta}{\theta_i^\sigma}P_N\right)
    ( - \partial_x)^\sigma\left(\pdS{\zeta}{u^i_\sigma}Q_N\right) & =
                                                                    -\left(u^{i,1}\partial_x^{-1}(P_N)\right)
                                                                    \partial_x\left(\theta_i\partial_x^{-1}(Q_N)\right)
  \end{align}
  Substituting the above expressions into the equality~\eqref{eq:ProofLVLZ-2}
  that we shall prove, we are led to the simplified equality:
  \begin{align}
    & \int (-1)^{\deg_\theta P+1}\left(\theta_{i}\partial_x\fdS{P}{\theta_i}
      + u^{i,1}\theta_iP_N\right)\partial_x^{-1}(Q_N) 
      - \partial_x^{-1}(P_N)\left(
      \theta_{i}\partial_x\fdS{Q}{\theta_i}
      +  u^{i,1}\theta_iQ_N\right)
    \\ \notag 
    & = \int (-1)^{\deg_\theta P+1}\Big(\partial_x(\theta_i\partial_x^{-1}(P_N))
      \fdS{Q}{\theta_i}
      + \partial_x(\theta_i\partial_x^{-1}(P_N))
      u^{i,1}\partial_x^{-1}(Q_N)\Big)
    \\ \notag & \qquad 
                - \Big( \fdS{P}{\theta_i} \partial_x(\theta_i\partial_x^{-1}(Q_N))
                +u^{i,1}\partial_x^{-1}(P_N)\partial_x(\theta_i\partial_x^{-1}(Q_N))\Big).
  \end{align}
  Integrating by parts the summands containing $\partial_x\fdS{P}{\theta_i}$,
  $\partial_x\fdS{Q}{\theta_i}$ we obtain the further simplification of the
  equality~\eqref{eq:ProofLVLZ-2} (note that
  $\deg_\theta P_N=\deg_\theta P-1$):
  \begin{align}
    \label{eq:512}
    & \int (-1)^{\deg_\theta P+1}u^{i,1}\theta_iP_N\partial_x^{-1}(Q_N) - \partial_x^{-1}(P_N)u^{i,1}\theta_iQ_N
    \\ \notag 
    & = \int (-1)^{\deg_\theta P+1 } \partial_x(\theta_i\partial_x^{-1}(P_N))
      u^{i,1}\partial_x^{-1}(Q_N)
      - u^{i,1}\partial_x^{-1}(P_N)\partial_x(\theta_i\partial_x^{-1}(Q_N))
  \end{align}
  Expanding the total derivatives on the right-hand side we easily see that the
  above equality is an identity. This completes the proof of the theorem.
\end{proof}


\section{Pencils of weakly non-local bi-vectors of localizable shape}

\label{sec:pencils}

In this Section we compute the bi-Hamiltonian cohomology for a semi-simple
pencil of weakly non-local Poisson bi-vectors of localizable shape of
differential order $\deg_{\D_x}=1$ satisfying the extra condition: the pencil
of these bi-vectors should be localizable (or, equivalently, they should be
simultaneously localizable) with respect to the Miura-reciprocal group. As a
result of this computation and some further arguments we prove the following
theorem:

\begin{theorem} \label{thm:LocalLeadingTermOfAPencil}
	
  Let $P_1$ and $P_2$ be two of commuting non-local Poisson bi-vectors of
  localizable shape. We assume that $P_1$ and $P_2$ have dispersive expansion
  given by $P_a=\sum_{i=1}^\infty \epsilon^{i-1} P_{a,i}$,
  $\deg_{\D_x} P_{a,i} = i$, $a=1,2$, $i=1,2,\dots$.

  If the leading terms of degree $\deg_{\D_x}=1$, $P_{1,1}$ and $P_{2,1}$, are
  simultaneously localizable under the action of the Miura-reciprocal group and
  form a semi-simple Poisson pencil, then the full dispersive brackets $P_1$
  and $P_2$ are simultaneously localizable under the action of the
  Miura-reciprocal group.
\end{theorem}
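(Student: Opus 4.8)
The plan is to factor a general Miura-reciprocal transformation as an element of $\cR_I$ followed by an element of $\cR_{II}$, and to handle these two steps separately. First I would use the hypothesis that the leading terms $P_{1,1}$ and $P_{2,1}$ are simultaneously localizable to produce a transformation $\phi\in\cR_I$ sending the pair $(P_{1,1},P_{2,1})$ to a pair of local Poisson bi-vectors of hydrodynamic type. Applying $\phi$ to the full brackets $P_1,P_2$, the closure result of Section~\ref{sec:diff-subst} (Proposition~\ref{thm:Miura-Reciprocal-Bivector} together with the proposition that the class of weakly non-local bi-vectors of localizable shape is preserved by $\cR$) guarantees that $\phi(P_1),\phi(P_2)$ are again weakly non-local of localizable shape, now with local leading terms. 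This reduces the theorem to the following statement: a semi-simple pencil of weakly non-local bi-vectors of localizable shape whose degree-$1$ part is already local is localizable by a transformation of $\cR_{II}$, which, being close to the identity, automatically preserves the local leading term.

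For the reduced statement I would run an order-by-order deformation argument in $\epsilon$, working in the $\zeta$-extended $\theta$-formalism of Section~\ref{sec:LV-LZ}, i.e. in the complex $\cE=\cS/\D_x\cS$ with $\cS=\hat\cA[\zeta]$. Writing the pencil as the leading local hydrodynamic pencil $P_{\lambda,1}=P_{1,1}-\lambda P_{2,1}$ plus a deformation $\sum_{i\geq 2}\epsilon^{i-1}P_{\lambda,i}$, the obstruction to removing the degree-$i$ term by an infinitesimal $\cR_{II}$-transformation lives in the bi-Hamiltonian cohomology of $P_{\lambda,1}$ computed in $\cE$, and the residual freedom after removing it lives in the adjacent cohomology group. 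The argument therefore hinges on computing, for a local semi-simple hydrodynamic pencil acting on the \emph{enlarged} complex $\cE$, the groups that play the role of $BH^2$ and $BH^3$ in the purely local theory of~\cite{CarletPosthumaShadrin:DeformSS,CKS}.

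The main obstacle is precisely this cohomology computation on $\cE$ rather than on $\hat\cF$. The new feature is the $\zeta$-dependent summand $\zeta\hat\cA\subset\cS$: since $\deg_\theta\zeta=1$ and $\zeta^2=0$, each multivector splits as $P_L+\zeta P_N$, and the Schouten differential $[P_{\lambda,1},-]$ mixes the two summands through the $\hat E$- and $\partial_\zeta$-terms of the Liu--Zhang bracket. I would filter $\cE$ by the $\zeta$-degree, i.e. separate the local and non-local parts, and analyze the associated graded differential, comparing it with the local Schouten differential whose cohomology is already understood. The technical heart is to show that the non-local summand contributes no new classes in the relevant degrees: one expects $BH^3_d$ to vanish for $d$ large enough that the deformation is unobstructed, and $BH^2_d$ to vanish for $d\geq 3$ while $BH^2_2$ stays parametrized by $N$ smooth functions of one variable. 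Semi-simplicity is essential here, as it lets one diagonalize $P_{\lambda,1}$ and reduce to a product of one-field computations exactly as in the local case; the identification of the two bracket formalisms proved in Section~\ref{sec:LV-LZ} is what makes these computations legitimate.

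Granting the cohomology computation, the conclusion follows as in the purely local classification. The vanishing statements show that every localizable-shape deformation of $P_{\lambda,1}$ can be brought by $\cR_{II}$ into a normal form recorded only by its $N$ central invariants, which can moreover be read off the symbol of the pencil. Since a purely local semi-simple pencil realizing any prescribed tuple of central invariants exists by~\cite{CarletPosthumaShadrin:DeformSS,CKS}, our pencil is $\cR_{II}$-equivalent to such a local representative, and in particular is localizable. Composing with the transformation $\phi\in\cR_I$ from the first step exhibits the original $P_1,P_2$ as simultaneously localizable under $\cR$, which is the assertion of the theorem.
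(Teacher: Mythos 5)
Your overall architecture does match the paper's: reduce to a local leading term via $\cR_I$ (using closure of the localizable shape under $\cR$), classify $\cR_{II}$-orbits of deformations via bi-Hamiltonian cohomology in the $\zeta$-extended complex $\cE=\cS/\D_x\cS$, and compare with the purely local theory. One minor error first: the central invariants sit in differential degree $3$, not $2$. The paper's Theorem~\ref{thm:BiHomCoh} gives $BH^2_d(\cE,d_1,d_2)=0$ for $d=2$ and $d\geq 4$, with $BH^2_3\cong\bigoplus_{i=1}^N C^\infty(\mathbb{R},u^i)$; degree-$2$ (order $\epsilon$) deformations can always be removed, and the invariants first appear at order $\epsilon^2$.

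The genuine gap is in your concluding paragraph. The cohomology computation yields a bijection $c\colon T^{nl}/\cR_{II}\to\bigoplus_{i=1}^N C^\infty(\mathbb{R},u^i)$, and the local theory of~\cite{CarletPosthumaShadrin:DeformSS,CKS} yields a bijection $\tilde c\colon T^{loc}/\cG_{II}\to\bigoplus_{i=1}^N C^\infty(\mathbb{R},u^i)$. These are two \emph{different} parametrizations, defined through cohomology classes in two different complexes ($\cE$ versus the local complex). The existence of a local pencil ``realizing any prescribed tuple of central invariants'' is a statement about $\tilde c$; to conclude that your non-local pencil is $\cR_{II}$-equivalent to a local one, you need that local pencils realize every value of $c$, i.e., that $c\circ\iota$ is surjective --- equivalently, that the comparison map from the local $BH^2_3$ to $BH^2_3(\cE)$ induced by inclusion is surjective. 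This does not follow from the two groups being abstractly isomorphic, and it is exactly what the paper proves as a separate step (Proposition~\ref{prop:ciota}): by an explicit cocycle manipulation, using $H^1_3(\hat{\cA}[\lambda],D^{loc})=0$ to trade the $\zeta$-component $a^1\zeta$ of a cocycle $\int a^2+a^1\zeta$ for purely local terms, every class in $H^2_3(\cE)$ is shown to admit a purely local representative. Your parenthetical remark that the invariants ``can be read off the symbol of the pencil'' could in principle substitute for this, but then you would have to prove that the symbol formula computes the non-local classification parameter $c$ and is the same formula in both settings; the paper treats that as a separate (and later) result, not as an ingredient of this proof. As written, your final step assumes precisely the compatibility that has to be established.
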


In order to prove this theorem, we have to make a few preliminary computations
with bi-Hamiltonian cohomology, following the ideas
in~\cite{LiuZhang:JacobiStructures} subsequent steps
in~\cite{CarletPosthumaShadrin:DeformSS,CKS}.

\subsection{Bi-Hamiltonian cohomology}


\subsubsection{Setup for a deformation problem} 
Recall that following Liu and Zhang~\cite{LiuZhang:JacobiStructures} we denote
$\cS \coloneqq \hat\cA[\zeta]$, with $\D_x\colon\cS\to\cS$ given by
$\D_x = -u^{i,1}\theta_i \D_\zeta + \sum u^{i,d+1} \D_{u^{i,d}} +
\theta_i^{d+1} \D_{\theta_i^{d}}$, and $\cE\coloneqq \cS/\D_x\cS$.

Let $P_1, P_2\in S^2_1$ such that $\int P_1$ and $\int P_2$ form a pencil of
Poisson structures (possibly non-local, but then they are automatically weakly
non-local of localizable shape, since it is the only type of non-locality
accommodated in the space $\cE$), that is, we assume that
\begin{align}
  \Big[\int P_2-\lambda P_1, \int P_2-\lambda P_1\Big] = 0
\end{align}

Recall that there is a group $\mathcal{R}_I$ of the Miura-reciprocal
transformations of the 1st kind acting on them, see
Equation~\eqref{eq:Miura-reciprocal-1st-kind}. We assume that the pencil
$\int P_2-\lambda P_1$ is localizable under the action of $\mathcal{R}_I$. We
also assume that the pencil formed by $P_1$ and $P_2$ is semi-simple, which
together with the assumption of localizability implies that the we can choose
the coordinates $x,u^1,\dots,u^N$ such that the densities $P_1$ and $P_2$ of
the bivectors $\int P_1$ and $\int P_2$ take the form
\begin{align}
  P_1 & = \Big(\sum_{i=1}^N f^i \theta_i\theta_i^1\Big)
        + \Gamma^{ij}_{1,k} u^{k,1} \theta_i\theta_j; \\
  P_2 & =\Big( \sum_{i=1}^N u^i f^i \theta_i\theta_i^1\Big)
        + \Gamma^{ij}_{2,k} u^{k,1} \theta_i\theta_j. 
\end{align} 

We are interested to classify the equivalence classes of the higher order
dispersive deformations of the Poisson pencil $\int P_2-\lambda P_1$ in $\cE$
with respect to the Miura-reciprocal transformations of the 2nd kind,
$\cR_{II}$. Let $d_i\coloneqq \ad_{P_i} \colon \cE\to \cE$, $i=1,2$. Then the
deformation problem is controlled by the bi-Hamiltonian cohomology
$BH^{p}_d(\cE,d_1,d_2)$ of cohomological degree $p=2$ and $p=3$ and of
differential degrees $d\geq 2$ and $d\geq 4$, respectively. It is a rather
standard argument, see e.~g.~\cite[Proposition
3.3.5]{LiuZhang:JacobiStructures}. The only extra bit that one needs in our
case, that is, the space $\cE$ and the group $\cR_{II}$ of Miura-reciprocal
transformations of the 2nd kind, in comparison with the usual local case, that
is, the space $\hat\cF$ and the group $\cG_{II}$ of Miura transformations of
the 2nd kind, is the identification of the action of the Lie algebra of
$\cR_{II}$ on weakly non-local bi-vectors (or, more generally, multivectors)
of localizable shape with the adjoint action of $\cE^1$ on $\cE^2$ (resp.,
$\cE$). This is established in~\cite[Theorems 2.5.7 and
2.6.5]{LiuZhang:JacobiStructures}

\subsubsection{Bi-Hamiltonian cohomology computation} We prove the following

\begin{theorem} \label{thm:BiHomCoh} We have:
  \begin{align}
    BH^2_d(\cE,d_1,d_2) & \cong \begin{cases}
      0, & d=2 \text{ and } d\geq 4;\\
      \bigoplus_{i=1}^N C^\infty(\mathbb{R},u^i), & d=3. 
    \end{cases} \\
    BH^3_d(\cE,d_1,d_2) & \cong 0, \qquad d \geq 4.
  \end{align}
\end{theorem}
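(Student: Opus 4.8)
The plan is to adapt the computation of bi-Hamiltonian cohomology for semisimple \emph{local} pencils from~\cite{CarletPosthumaShadrin:DeformSS,CKS} to the enlarged complex $\cE=\cS/\D_x\cS$, treating $\zeta$ as one extra odd generator of bidegree $(\deg_{\D_x},\deg_\theta)=(0,1)$ and tracking its contribution throughout. Following the standard setup I would work in the semisimple canonical coordinates in which $P_1$ and $P_2$ have the diagonal hydrodynamic form displayed above. Semisimplicity is what makes the leading computation with respect to $d_1=\ad_{P_1}$ split into $N$ independent scalar problems indexed by $i=1,\dots,N$; once $H(\cE,d_1)$ is understood, the induced differential $d_2=\ad_{P_2}$ on it produces the bi-Hamiltonian cohomology, and the central invariants appear as its surviving classes in differential degree $d=3$.

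The one genuinely new structural feature is the odd generator $\zeta$. Since $\zeta$ is odd, every element of $\cS$ is at most linear in $\zeta$, so I would filter $\cS$, and hence $\cE$, by the $\zeta$-degree, which takes only the values $0$ and $1$. I would then compute the action of $d_1$ and $d_2$ with respect to this two-step filtration directly from the Liu--Zhang bracket: because $\partial_\zeta P_a=0$ for $a=1,2$, only the standard local part and the single term $\hat E(P_a)\partial_\zeta$ survive, and it is precisely the nonlocal relation $\D_x\zeta=-u^{i,1}\theta_i$ that couples the two columns of the filtration. On the associated graded the differentials reduce to their $\zeta$-preserving parts, whose $\zeta$-degree-$0$ block is exactly the local complex $\hat\cF$ equipped with its usual bi-Hamiltonian differentials.

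On the $E_1$-page I would therefore identify the $\zeta$-free block with the local bi-Hamiltonian cohomology $BH^p_d(\hat\cF,d_1,d_2)$, which is known from~\cite{CarletPosthumaShadrin:DeformSS,CKS} to vanish for $(p,d)=(2,2)$, for $(2,d)$ with $d\geq 4$, and for $(3,d)$ with $d\geq 4$, and to equal $\bigoplus_{i=1}^N C^\infty(\mathbb{R},u^i)$ for $(p,d)=(2,3)$. It then remains to prove that the $\zeta$-linear block contributes nothing in these bidegrees, using the relation $\D_x\zeta=-u^{i,1}\theta_i$ to trade $\zeta$-linear representatives for $\zeta$-free ones modulo $\D_x\cS$, and to check that the spectral sequence degenerates, so that no higher differential or extension can alter the answer in the range $p\in\{2,3\}$. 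This produces exactly the stated values of $BH^2_d(\cE,d_1,d_2)$ and $BH^3_d(\cE,d_1,d_2)$.

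The main obstacle is controlling the $\zeta$-linear sector together with the $\hat E$-term in the bracket, and proving that the coupling induced by $\D_x\zeta=-u^{i,1}\theta_i$ renders the $\zeta$-degree-$1$ part acyclic in the relevant bidegrees. The delicate point is that the contracting homotopy must be chosen compatibly with \emph{both} $d_1$ and $d_2$, so that it respects the full bicomplex rather than only one differential; this is what upgrades a degree-by-degree vanishing statement to a genuine quasi-isomorphism between $\cE$ and $\hat\cF$ in the range $p\in\{2,3\}$. Once this quasi-isomorphism is established, the computation is reduced to the already-settled local case.
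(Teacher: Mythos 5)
Your overall strategy---split off the $\zeta$-direction and reduce to the known semisimple local computations---is the same idea that drives the paper's proof, and your outline can be completed; but the implementation is genuinely different. The paper never filters $\cE$ itself. It first replaces bi-Hamiltonian cohomology by the cohomology of the $\lambda$-complex via the isomorphism~\eqref{eq:iso-BH-bipol} (valid for $d\geq 2$), and then works one level up, on the density complex $\cS[\lambda]=\hat\cA[\lambda]\oplus\hat\cA[\lambda]\zeta$: after Lemma~\ref{lem:PropertiesD_P} guarantees that $D_2-\lambda D_1$ is a well-defined differential there, it is split into a $\zeta$-degree-preserving part $D^{loc}$ and the coupling $D^\zeta=(\hat{E}(P_2)-\lambda\hat{E}(P_1))\D_\zeta$; since $D^{loc}$ honestly preserves the $\zeta$-grading of $\cS[\lambda]$, its cohomology is just two copies of the CKS \emph{density} cohomology $H^p_d(\hat\cA[\lambda])$, the coupling is handled by a two-page bicomplex spectral sequence (Lemma~\ref{lem:CohS}), and the passage to functionals is done at the very end by the long exact sequence~\eqref{eq:long-exact} of $0\to\cS[\lambda]/\mathbb{R}\to\cS[\lambda]\to\cE[\lambda]\to 0$. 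Your route trades that long exact sequence for a filtration of the quotient $\cE$, which is \emph{not} split: because $\D_x$ on $\cS$ contains $-u^{i,1}\theta_i\D_\zeta$, the subspace $\D_x\cS$ mixes $\zeta$-degrees, so $\mathrm{gr}_0\cE$ is not exactly $\hat\cF$ but $\hat\cF$ modulo the one-dimensional span of $\int u^{i,1}\theta_i$ (harmless in the relevant bidegrees), while $\mathrm{gr}_1\cE\cong\hat\cF\zeta$. Correspondingly, your known input is the \emph{functional}-level cohomology $H^p_d(\hat\cF[\lambda])$, and not only the headline $p=2,3$ groups of~\cite{CarletPosthumaShadrin:DeformSS,CKS}: since the $\zeta$-linear block sits in $\theta$-degree $p-1$, you also need $H^0_d(\hat\cF[\lambda])=0$ for $d\geq1$ and $H^1_d(\hat\cF[\lambda])=0$ for $d\geq2$, which do follow from the CKS density results via the local analogue of~\eqref{eq:long-exact} but must be derived or quoted. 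What the paper's route buys is that all $\zeta$-bookkeeping happens where it is a direct sum; what yours buys is a more direct picture of $\cE$ as two copies of $\hat\cF$ glued along $\zeta$, and in fact your $E_1$-differential computation is essentially the argument the paper uses later, in Proposition~\ref{prop:ciota}, to produce purely local representatives.

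Two points in your write-up need repair before this closes up. First, your proposed mechanism for disposing of the $\zeta$-linear block is not correct as stated: ``trading $\zeta$-linear representatives for $\zeta$-free ones modulo $\D_x\cS$'' is possible exactly when the coefficient $b$ of $\zeta$ is a total $\tilde{\D}_x$-derivative, whereas in general $\mathrm{gr}_1\cE\cong\hat\cF\zeta\neq0$; nor is it the coupling term $\hat{E}(P_a)\D_\zeta$ that renders this block acyclic. What actually kills its contribution in the theorem's bidegrees is purely local vanishing of its $E_1$-entry $H^{p-1}_d(\hat\cF[\lambda])$ (namely $H^1_{d}=0$ for $d\geq 2$ when $p=2$, and $H^2_{d}=0$ for $d\geq4$ when $p=3$), together with $H^0_{d}=0$ for $d\geq 1$ to rule out incoming differentials hitting the $\zeta$-free block at $(p,d)=(2,3)$. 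Second, a filtration spectral sequence computes the cohomology of a single differential, not bi-Hamiltonian cohomology directly; you must run your filtration on the $\lambda$-complex $(\cE[\lambda],d_2-\lambda d_1)$ and invoke~\eqref{eq:iso-BH-bipol}, rather than ``first compute $H(\cE,d_1)$, then the induced $d_2$'' as in your opening paragraph, and then there is no delicate issue of a homotopy ``compatible with both $d_1$ and $d_2$''---working $\mathbb{R}[\lambda]$-linearly takes care of both differentials at once. With these substitutions your outline becomes a correct proof.
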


\begin{proof}
  For the proof we use that for $d\geq 2$ we have \cite[Lemma 4.4]{MR3123540}:
  \begin{align} \label{eq:iso-BH-bipol} BH^{p}_d(\cE,d_1,d_2)\cong
    H^p_d(\cE[\lambda], d_2-\lambda d_1).
  \end{align}
  In order to compute $H^p_d(\cE[\lambda], d_2-\lambda d_1)$, we recall the
  definition of $D_i\coloneqq D_{P_i}\colon\cS\to\cS$
  from~\cite{LiuZhang:JacobiStructures}:
  \begin{align}
    D_{P_i} &\coloneqq \hat E(P_i) \D_\zeta + \sum_{s=0}^\infty \D_x^s\big(\delta_{u^j} P_i \big)\D_{\theta_j^s}
              +\D_x^s\big(\delta_{ \theta_j} P_i \big)\D_{u^{j,s}}, && i=1,2.
  \end{align}
  Note that $[\D_x, D_i] = 0$ (by direct computation). We prove that it is a
  homological vector field (which is not true in general, for a non-local
  bi-vector $P_i$):

  \begin{lemma}\label{lem:PropertiesD_P} For a purely local bivector $\int P$
    the operator $D_{P}$ does not depend on the choice of a purely local
    density $P$. Moreover, for purely local densities of the bivectors
    $P,Q\in \hat \cA^2$ and for any $T\in \cS$ we have:
    \begin{align} \label{eq:lemma-DP-action} \int D_{P}(T) & = \Big[\int P,
      \int T\Big]
    \end{align}
    and
    \begin{align}
      \label{eq:lemma-DP-commutation}
      [D_{P},D_{Q}] &= D_{[P,Q]},
    \end{align}
    where
    $[P ,Q ] = \delta_{\theta_i} P \delta_{u^i} Q +\delta_{u^i} P
    \delta_{\theta_i} Q $.

    In particular, for a purely local density $P$ of a Poisson bivector
    $\int P$ we have $D_{P}^2=0$ on $\cS$.
  \end{lemma}
  \begin{remark}
    The statements of Lemma~\ref{lem:PropertiesD_P} do not hold for not purely
    local densities.
  \end{remark}

  \begin{proof}[Proof of Lemma~\ref{lem:PropertiesD_P}] Firstly, we check the
    $D_{P}$ does not depend on the choice of a local density $P$. To this end,
    we remind the definitions and basic properties of $\hat E$ and $\D_x$. We
    have:
    \begin{align}
      \D_x & = -u^{i,1}\theta_i \D_\zeta + \sum u^{i,d+1} \D_{u^{i,d}} + \theta_i^{d+1} \D_{\theta_i^{d}}; \label{eq:def-Partial_x-NL}
      \\
      \hat E & = \sum_{\substack{s\geq 1 \\ t\geq 0}} \Big(
      u^{i,s} (-\D_x)^t \D_{u^{i,s+t}} + \theta_i^{s} (-\D_x)^t \D_{\theta_i^{s+t}} 
      \Big) - 1 + \theta_i \delta_{\theta_i};
      \\
      \D_x\hat E & = -u^{i,1} \delta_{u^i} + \theta_i \D_x \delta_{\theta_i} + u^{i,1}\theta_i \delta_\zeta; 
      \\
      \hat E \D_x & = -u^{i,1}\theta_i \D_\zeta;
      \\
      \delta_{u^i} \D_x & = \D_x \theta_i\D_\zeta, &&  i=1,\dots,N; 
      \\
      \delta_{\theta_i} \D_x & = -u^{i,1}\D_\zeta, && i=1,\dots,N.
    \end{align}
    With the last three equations we immediately see that for any local
    $X\in \hat\cA$
    \begin{align}
      & \hat E(\D_x X) \D_\zeta + \sum_{s=0}^\infty \Big(\D_x^s\big(\delta_{u^j} \D_x X \big)\D_{\theta_j^s}
	+\D_x^s\big(\delta_{\theta_j} \D_x X \big)\D_{u^{j,s}}\Big) =
      \\ \notag 
      & -u^{i,1}\theta_i \D_\zeta X \D_\zeta + \sum_{s=0}^\infty\Big( \D_x^s\big(\D_x (\theta_i\D_\zeta X )\big)\D_{\theta_j^s}
	+\D_x^s\big(-u^{i,1}\D_\zeta X \big)\D_{u^{j,s}}\Big)
      = 0,
    \end{align}
    since $ \D_\zeta=0$, which implies the first assertion of the lemma.
	
    Now, Equation~\eqref{eq:lemma-DP-action} is obvious from the definition of
    the Schouten bracket. So we focus on
    Equation~\eqref{eq:lemma-DP-commutation}. Let us compute the coefficient of
    $\partial_\zeta$ on the left hand side. Using the vanishing of $\D_\zeta$
    derivatives, we have:
    \begin{align}
      & \sum_{s=0}^\infty \Big(\D_x^s\big(\delta_{u^j} P  \big)\D_{\theta_j^s} 
	+\D_x^s\big(\delta_{\theta_j} P  \big)\D_{u^{j,s}}\Big) 
        \hat E( Q ) =
      \\ \notag
      & \D_x^{-1} \sum_{s=0}^\infty \Big(\D_x^s\big(\delta_{u^j} P  \big)\D_{\theta_j^s} 
        +\D_x^s\big(\delta_{\theta_j} P  \big)\D_{u^{j,s}}\Big) 
        (-u^{i,1} \delta_{u^i} + \theta_i \D_x \delta_{\theta_i})( Q ) =
      \\ \notag &
                  \D_x^{-1}  \Big(\delta_{u^j} P   \D_x (\delta_{\theta_j} Q )
                  - \D_x(\delta_{\theta_j} P ) \delta_{u^j} Q  \Big) 
      \\ \notag & 
                  + \D_x^{-1} (-u^{i,1})  \sum_{s=0}^\infty \Big(\D_x^s\big(\delta_{u^j} P  \big)\D_{\theta_j^s} 
                  +\D_x^s\big(\delta_{\theta_j} P  \big)\D_{u^{j,s}}\Big) \delta_{u^i} Q 
      \\ \notag & 
                  + \D_x^{-1} (-\theta_i\D_x)  \sum_{s=0}^\infty \Big(\D_x^s\big(\delta_{u^j} P  \big)\D_{\theta_j^s} 
                  +\D_x^s\big(\delta_{\theta_j} P  \big)\D_{u^{j,s}}\Big) \delta_{\theta_i} Q 
    \end{align}
    Adding to the latter expression the same one with interchanged $P $ and
    $Q $ and using that for purely local densities
    \begin{align}
      & \sum_{s=0}^\infty \Big(\D_x^s\big(\delta_{u^j} P  \big)\D_{\theta_j^s} 
	+\D_x^s\big(\delta_{\theta_j} P  \big)\D_{u^{j,s}}\Big) \delta_{u^i} Q 
	+ \sum_{s=0}^\infty \Big(\D_x^s\big(\delta_{u^j} Q  \big)\D_{\theta_j^s} 
	+\D_x^s\big(\delta_{\theta_j} Q  \big)\D_{u^{j,s}}\Big) \delta_{u^i} P 
      \\ \notag 
      & = \delta_{u^i} \sum_{s=0}^\infty \Big(\D_x^s\big(\delta_{u^j} P  \big)\D_{\theta_j^s} Q 
	+\D_x^s\big(\delta_{\theta_j} P  \big)\D_{u^{j,s}}Q \Big) 
        = \delta_{u^i} \sum_{s=0}^\infty \Big(\delta_{u^j} P  \delta_{\theta_j} Q 
	+\delta_{\theta_j} P  \delta_{u^{j}}Q \Big) 
    \end{align}
    and
    \begin{align}
      & \sum_{s=0}^\infty \Big(\D_x^s\big(\delta_{u^j} P  \big)
        \D_{\theta_j^s} 
        +\D_x^s\big(\delta_{\theta_j} P  \big)\D_{u^{j,s}}\Big)
        \delta_{\theta_i} Q 
        + \sum_{s=0}^\infty \Big(\D_x^s\big(\delta_{u^j} Q  \big)\D_{\theta_j^s} 
        +\D_x^s\big(\delta_{\theta_j} Q  \big)\D_{u^{j,s}}\Big)
        \delta_{\theta_i} P 
      \\ \notag 
      & = -\delta_{\theta^i} \sum_{s=0}^\infty
        \Big(\D_x^s\big(\delta_{u^j} P  \big)\D_{\theta_j^s} Q 
	+\D_x^s\big(\delta_{\theta_j} P  \big)\D_{u^{j,s}}Q \Big) 
        = -\delta_{\theta_i} \sum_{s=0}^\infty
        \Big(\delta_{u^j} P  \delta_{\theta_j} Q 
	+\delta_{\theta_j} P  \delta_{u^{j}}Q \Big) ,
    \end{align}
    we obtain that the coefficient of $\D_\zeta$ on the left hand side of
    Equation~\eqref{eq:lemma-DP-commutation} is equal to
    \begin{align}
      \D_x^{-1} (-u^{i,1} \delta_{u^i}+\theta_i\D_x \delta_{\theta_i})
      [P ,Q ] = \hat E \big([P ,Q ]\big),
    \end{align}
    which is the coefficient of $\D_\zeta$ on the right hand side of
    Equation~\eqref{eq:lemma-DP-commutation}. The coefficients of all other
    components of the vector fields on the left hand side of
    Equation~\eqref{eq:lemma-DP-commutation} are computed in a very similar
    way.
  \end{proof}

  Lemma~\ref{lem:PropertiesD_P} implies that $D_2-\lambda D_1$ is a
  differential on $\cS[\lambda]$, and we have a short exact sequence
  \begin{align}
    \xymatrix{
    0 \ar[r] & \dfrac{\cS[\lambda]}{\mathbb{R}} 
               \ar[r]^{\D_x} 
               \ar@(u,ru)^{D_2-\lambda D_1}
    & \cS[\lambda]\ar[r]^{\int}
      \ar@(u,ru)^{D_2-\lambda D_1}
    & \cE[\lambda] \ar[r] 
      \ar@(u,ru)^{d _2-\lambda d _1}& 0 
                                      }
  \end{align}
  and it implies a long exact sequence in the cohomology which reads
  \begin{align} \label{eq:long-exact} \xymatrix{
      H^{p}_{d-1}(\cS[\lambda]/\mathbb{R}, D_2-\lambda D_1) \ar[r] &
      H^{p}_d(\cS[\lambda], D_2-\lambda D_1) \ar[r] & H^{p}_d(\cE[\lambda], d
      _2-\lambda d _1)
                                                      \ar@(d,u)[dll] \\
    H^{p+1}_{d}(\cS[\lambda]/\mathbb{R}, D_2-\lambda D_1) \ar[r] &
                                                                   H^{p+1}_{d+1}(\cS[\lambda],
                                                                   D_2-\lambda
                                                                   D_1) & }
  \end{align}

  \begin{lemma} \label{lem:CohS} We have
    \begin{align}
      H^p_d(\cS[\lambda], D_2-\lambda D_1) \cong
      \begin{cases}
        0 & p\leq d \text{ and } (p,d)\not= (3,3),(0,0) \\
        \mathbb{R}[\lambda] & p=0,d=0;\\
        \bigoplus_{i=1}^N C^\infty(\mathbb{R}, u^i) & p=3,d=3.
      \end{cases}
    \end{align}
    Also, $H^3_2(\cS[\lambda], D_2-\lambda D_1) \cong 0$.
  \end{lemma}
  \begin{proof} This lemma can be derived from~\cite[Theorems 2.12 and
    2.13]{CKS}. Indeed, Lemma~\ref{lem:PropertiesD_P} in particular implies
    that we have a bicomplex $(\cS[\lambda],D^{loc},D^{\zeta})$ with the
    differentials given by
    $D^\zeta\coloneqq (\hat{E}(P_2) - \lambda \hat E(P_1))\partial_\zeta$ and
    $D^{loc}\coloneqq D_2-\lambda D_1 - D^\zeta$. We start a spectral sequence
    associated with this bicomplex. Obviously, it converges on the second
    page. The computation of the first page splits as
    \begin{align}
      H^p_d(\cS[\lambda],D^{loc})& \cong H^p_d(\cA[\lambda],D^{loc})\oplus H^p_d(\cA[\lambda]\zeta,D^{loc}) \\ \notag 
                                 &  \cong H^p_d(\cA[\lambda],D^{loc})\oplus H^{p-1}_d(\cA[\lambda],D^{loc}),
    \end{align}
    which implies all desired vanishings (for $p\leq d$ the only non-trivial
    cohomology groups are
    $H^{0}_0(\cA[\lambda],D^{loc})\cong \mathbb{R}[\lambda]$ and
    $H^{3}_3(\cA[\lambda],D^{loc})\cong \bigoplus_{i=1}^N C^\infty(\mathbb{R},
    u^i)$~\cite[Theorems 2.12 and 2.13]{CKS}).

    Since both $H^{i}_i(\cS[\lambda],D^{loc})=0$ for $i=1,2,4$, and the induced
    differential on the first page has the $(p,d)$-degree $(1,1)$, we conclude
    that
    \begin{align}
      & H^0_0(\cS[\lambda], D_2-\lambda D_1) \cong H^0_0(\cS[\lambda],D^{loc}) \cong 	\mathbb{R}[\lambda]; \\
      & H^3_3(\cS[\lambda], D_2-\lambda D_1) \cong H^3_3(\cS[\lambda],D^{loc}) \cong 	\bigoplus_{i=1}^N C^\infty(\mathbb{R}, u^i).
    \end{align}
  \end{proof}

  \begin{remark} Almost the same statement holds for the cohomology of
    $\cS[\lambda]/\mathbb{R}$, the only difference is
    $H^0_0(\cS[\lambda]/\mathbb{R}, D_2-\lambda D_1) \cong 0$.
  \end{remark}

  Now we can complete the computation of the cohomology
  $H^p_d(\cE[\lambda], d_2-\lambda d_1)$ for $p<d$ and $p=2,d=2$ using the long
  exact sequence~\eqref{eq:long-exact}. The relevant pieces of this long exact
  sequence are
  \begin{align}
    \xymatrix{
    0=
    H^{p}_d(\cS[\lambda], D_2-\lambda D_1) \ar[r] &
                                                    H^{p}_d(\cE[\lambda], d^{loc}_2-\lambda d^{loc}_1) 
                                                    \ar@(d,u)[dl] \\
    H^{p+1}_{d}(\cS[\lambda]/\mathbb{R}, D_2-\lambda D_1) =0 & 
                                                               },
  \end{align}
  for $p<d$ and $(p+1,d)\not=(3,3)$, which implies the vanishing for $p<d$,
  $(p,d)\not=(2,3)$, and $p=2,d=2$. Moreover, we have
  \begin{align}
    \xymatrix{
    0=H^{2}_3(\cS[\lambda], D_2-\lambda D_1) \ar[r] &
                                                      H^{2}_3(\cE[\lambda], d^{loc}_2-\lambda d^{loc}_1) 
                                                      \ar@(d,u)[dl] \\
    H^{3}_{3}(\cS[\lambda]/\mathbb{R}, D_2-\lambda D_1)\cong 	\bigoplus_{i=1}^N C^\infty(\mathbb{R}, u^i) \ar[r] & H^{3}_{4}(\cS[\lambda], D_2-\lambda D_1) =0 &
                                                                                                                                                                   },
  \end{align}
  which gives the answer for $(p,d)=(2,3)$. Now, the special cases of these
  computations for $p=2$, $d\geq 2$ and $p=3$, $d\geq 4$ imply all statements
  of Theorem~\ref{thm:BiHomCoh}.
\end{proof}


An immediate corollary of Theorem~\ref{thm:BiHomCoh} is the following:

\begin{corollary}\label{cor:DefSSPencil} Let $\int P_2-\lambda P_1$ be a
  semi-simple pencil of local Poisson bivectors of differential order $1$. We
  consider the higher order dispersive extensions of $\int P_2-\lambda P_1$ in
  the realm of weakly non-local Poisson pencils of localizable shape, that is,
  we consider Poisson pencils
  $\int \sum_{d=1}^\infty \epsilon^{d-1}(P_{2,d}-\lambda P_{1,d})\in\cE$ such
  that $\deg_{\D_x}( P_{2,d}-\lambda P_{1,d}) = d$ and
  $\int P_{2,1}-\lambda P_{1,1} = \int P_2-\lambda P_1$.
	
  The space of orbits of the action of the group $\cR_{II}$ (the group of
  Miura-reciprocal transformation of the 2nd kind) onto the set of these
  dispersive extensions is isomorphic to the space
  $\bigoplus_{i=1}^N C^\infty(\mathbb{R},u^i)$.
\end{corollary}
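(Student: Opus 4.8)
The plan is to read the statement off Theorem~\ref{thm:BiHomCoh} by the now-standard order-by-order deformation argument, exactly as in the purely local situation treated in~\cite{DubrovinLiuZhang:QuasiTriv,CarletPosthumaShadrin:DeformSS,CKS}. The only genuinely new inputs are that the relevant cohomology is computed in $\cE$ rather than in $\hat\cF$, and that the infinitesimal action of the Lie algebra of $\cR_{II}$ on weakly non-local bivectors of localizable shape coincides with the adjoint action of $\cE^1$, as recalled in the setup following~\cite[Theorems 2.5.7 and 2.6.5]{LiuZhang:JacobiStructures}.

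First I would set up the dictionary between the geometric problem and the cohomological one. A dispersive extension is written $\int(P_2-\lambda P_1)$ with $P_a=\sum_{d\geq 1}\epsilon^{d-1}P_{a,d}$, $\deg_{\D_x}P_{a,d}=d$, and fixed leading term; the Poisson-pencil condition is the Maurer--Cartan equation $[\int(P_2-\lambda P_1),\int(P_2-\lambda P_1)]=0$ for the differential $d_2-\lambda d_1$ on $\cE[\lambda]$ induced by the leading term, whose cohomology is $BH^\bullet_d(\cE,d_1,d_2)$ via Equation~\eqref{eq:iso-BH-bipol}. An infinitesimal element of $\cR_{II}$ of differential degree $d-1$, i.e.\ an element of $\cE^1_{d-1}$, modifies the degree-$d$ bivector of the pencil precisely by the coboundary $(d_2-\lambda d_1)$ of its generator; hence the tangent space to an $\cR_{II}$-orbit is the image of this coboundary, and the classification splits degree by degree into $BH^2$ (nontrivial deformations) and $BH^3$ (obstructions).

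Then I would run the induction on $d$, producing an element of $\cR_{II}$ that brings the pencil to a normal form. Having normalized below degree $d$, the degree-$(d+1)$ component of the Maurer--Cartan equation takes the form $(d_2-\lambda d_1)(P_{2,d}-\lambda P_{1,d})=R_d$, where $R_d$ is a $3$-cocycle of differential degree $d+1$ assembled from the Schouten brackets of the already-normalized lower-degree terms, with $R_2=0$. For $d\geq 3$ the class of $R_d$ lies in $BH^3_{d+1}(\cE,d_1,d_2)=0$ (indeed $d+1\geq 4$, and the first nonempty obstruction occurs at $d=3$), so the equation is solvable; the residual freedom in $P_{2,d}-\lambda P_{1,d}$ is a $2$-cocycle of differential degree $d$. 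Since $BH^2_d(\cE,d_1,d_2)=0$ for $d=2$ and $d\geq 4$, I can gauge this term to a fixed normal form for every $d\neq 3$, whereas for $d=3$ the surviving freedom is exactly the class in $BH^2_3(\cE,d_1,d_2)\cong\bigoplus_{i=1}^N C^\infty(\mathbb{R},u^i)$, the tuple of central invariants, which cannot be gauged away. Composing the degreewise gauge transformations into a single element of $\cR_{II}$ then shows that the orbit is determined solely by this $BH^2_3$-class: injectivity of the resulting map uses the vanishing of $BH^2_d$ for $d\neq 3$ together with the equality of the degree-$3$ classes, while surjectivity onto $\bigoplus_{i=1}^N C^\infty(\mathbb{R},u^i)$ follows from the unobstructedness just used, or, more concretely, from the local case, where every tuple is realized by a purely local deformation~\cite{CarletPosthumaShadrin:DeformSS,CKS} sitting inside our larger space of extensions.

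The main obstacle is not the cohomological bookkeeping, which is completely dictated by Theorem~\ref{thm:BiHomCoh}, but the passage from the infinitesimal statement to the group statement: one must check that the degreewise generators in $\cE^1$ exponentiate to, and compose into, a bona fide element of $\cR_{II}$ acting on $\cE$, and that each step preserves both the weakly non-local localizable shape and the Poisson property. This is precisely where the identification of the $\cR_{II}$-action with the $\cE^1$-adjoint action, together with the well-definedness of $D_{P_i}$ and the homological identity $D_{P_i}^2=0$ from Lemma~\ref{lem:PropertiesD_P}, carries the argument over verbatim from the local case; granting these, the corollary is immediate.
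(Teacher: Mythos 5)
Your proposal is correct and follows essentially the same route as the paper: the paper states the corollary as an immediate consequence of Theorem~\ref{thm:BiHomCoh}, having already explained in its setup that the deformation/orbit problem is controlled by $BH^2_d(\cE,d_1,d_2)$ ($d\geq 2$) and $BH^3_d(\cE,d_1,d_2)$ ($d\geq 4$) via the standard order-by-order argument of~\cite[Proposition 3.3.5]{LiuZhang:JacobiStructures}, with the identification of the infinitesimal $\cR_{II}$-action with the adjoint action of $\cE^1$ supplied by~\cite[Theorems 2.5.7 and 2.6.5]{LiuZhang:JacobiStructures}. You merely spell out the inductive gauge-fixing and obstruction bookkeeping that the paper leaves implicit, which matches its intended proof exactly.
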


This result is strikingly similar to the corresponding statement in the local
case, cf.~\cite[Theorem 1]{CarletPosthumaShadrin:DeformSS}, see
also~\cite{LiuZhang-deformss,LiuZhang:biham,DubrovinLiuZhang:QuasiTriv}.
However, in the local case both the space $\hat \cF$ where the deformations of
$\int P_2-\lambda P_1$ are allowed as well as the group $\cG_{II}$ acting on
them are much smaller than in Corollary~\ref{cor:DefSSPencil}. Our next goal is
to compare these two situations.

\subsection{Comparison with the purely local deformations}

Within this section it is important to have a notation that distinguishes
between the operator $\D_x$ as given by Equation~\eqref{eq:def-Partial_x-NL} on
the space $\cS = \hat\cA[\zeta]$ and its purely local version
$\tilde{\D}_x \coloneqq \D_x + u^{i,1}\theta_i \D_\zeta$ defined both on $\cS$
and on $\cA$. Note that on $\cS$ the operator $\tilde{\D}_x$ commutes with
multiplication by $\zeta$.

Let $T^{nl}$ denote the space of dispersive weakly non-local Poisson pencils of
localizable shape
$\int \sum_{d=1}^\infty \epsilon^{d-1}(P_{2,d}-\lambda P_{1,d})\in\cE$ with the
fixed leading term $\int P_{2,1}-\lambda P_{1,1} = \int P_{2}-\lambda P_{1}$
that is purely local and semi-simple.  Let $T^{loc}$ denote the space
of dispersive local Poisson pencils
$\int \sum_{d=1}^\infty \epsilon^{d-1}(P_{2,d}-\lambda P_{1,d})\in\hat\cF$ with
the same fixed leading term
$\int P_{2,1}-\lambda P_{1,1} = \int P_{2}-\lambda P_{1}$.

The group $\cR_{II}$ acts on $T^{nl}$ and the group $\cG_{II}$ acts on
$T^{loc}$. Moreover, there is a natural embedding $I \colon T^{loc} \to T^{nl}$
that is $\cG_{II}$-equivariant ($\cG_{II}$ acts on $T^{nl}$ as a subgroup of
$\cR_{II}$). The map $I$ induces a map of the sets of orbits
$\iota\colon T^{loc}/\cG_{II} \to T^{nl}/\cR_{II}$.

\begin{proposition} The map $\iota$ is injective.
\end{proposition}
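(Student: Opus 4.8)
The plan is to reduce injectivity of $\iota$ to the fact that \emph{both} quotients are classified by the same complete invariant, the $N$-tuple of central invariants in $\bigoplus_{i=1}^N C^\infty(\mathbb{R},u^i)$, and that this invariant is insensitive to whether a \emph{local} pencil is viewed in $\hat\cF$ or in $\cE$. On the non-local side, Corollary~\ref{cor:DefSSPencil} provides a bijection $T^{nl}/\cR_{II}\cong \bigoplus_{i=1}^N C^\infty(\mathbb{R},u^i)$; write $c^{nl}\colon T^{nl}\to \bigoplus_{i=1}^N C^\infty(\mathbb{R},u^i)$ for the induced map, which is $\cR_{II}$-invariant and separates $\cR_{II}$-orbits. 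On the local side the classification of~\cite{CarletPosthumaShadrin:DeformSS} furnishes an analogous $\cG_{II}$-invariant $c^{loc}\colon T^{loc}\to \bigoplus_{i=1}^N C^\infty(\mathbb{R},u^i)$ that separates $\cG_{II}$-orbits. Granting the compatibility $c^{nl}\circ I=c^{loc}$, injectivity follows at once: if $\iota([p])=\iota([q])$, i.e.\ $I(p)$ and $I(q)$ lie in one $\cR_{II}$-orbit, then $c^{loc}(p)=c^{nl}(I(p))=c^{nl}(I(q))=c^{loc}(q)$, so $p$ and $q$ lie in one $\cG_{II}$-orbit and $[p]=[q]$.

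It thus remains to establish $c^{nl}\circ I=c^{loc}$, and I would argue this through the cohomological description underlying both classifications. In the proof of Theorem~\ref{thm:BiHomCoh} the group $BH^2_3(\cE,d_1,d_2)$ is obtained from the connecting homomorphism of the long exact sequence~\eqref{eq:long-exact}, landing in $H^3_3(\cS[\lambda]/\mathbb{R},D_2-\lambda D_1)$; by Lemma~\ref{lem:CohS} and the $\zeta$-grading splitting $\cS=\hat\cA\oplus\hat\cA\zeta$ this group is carried entirely by the $\zeta^0$ summand, so that $H^3_3(\cS[\lambda]/\mathbb{R})\cong H^3_3(\hat\cA[\lambda]/\mathbb{R})$. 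The local classification of~\cite{CarletPosthumaShadrin:DeformSS} identifies $BH^2_3(\hat\cF)$ with precisely this same $H^3_3(\hat\cA[\lambda]/\mathbb{R})$ via the analogous local long exact sequence. The plan is then to match the two connecting homomorphisms on a purely local representative $I(p)$, for which the non-local tail vanishes ($P_N=0$), so that the central invariants extracted from the diagonal $(ii)$-components of the degree-$2$ and degree-$3$ parts of the deformation by the standard formula are read off from a symbol that is literally the same in $\hat\cF$ and in $\cE$.

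The computation that makes this precise, and which I expect to be the main obstacle, is the verification that the non-local connecting homomorphism restricted to local cocycles coincides with the local one. The two relevant facts are: first, for local input the operator $D_i=\hat E(P_i)\D_\zeta+\sum_s \D_x^s(\delta_{u^j}P_i)\D_{\theta_j^s}+\D_x^s(\delta_{\theta_j}P_i)\D_{u^{j,s}}$ loses its $\D_\zeta$-term, since $\D_\zeta$ annihilates elements of $\hat\cA$, and therefore acts exactly as the local differential; second, $\D_x$ and $\tilde\D_x$ agree on $\hat\cA$. Consequently $(D_2-\lambda D_1)(I(p))$ is again local and $\tilde\D_x$-exact, and one must check that a $\D_x$-primitive in $\cS$ may be chosen local modulo the irrelevant constants of $\cS[\lambda]/\mathbb{R}$, i.e.\ that the $\D_x^{-1}$ implicit in the connecting map contributes nothing to the $\zeta$-part of the resulting degree-$(3,3)$ class. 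Once this bookkeeping is settled the two connecting homomorphisms agree on $I(T^{loc})$, which yields $c^{nl}\circ I=c^{loc}$ and hence, by the first paragraph, the injectivity of $\iota$.
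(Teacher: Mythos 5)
Your proposal is correct in outline but follows a genuinely different route from the paper. The paper's own proof is a two-line citation argument: it invokes \cite[Theorem 1.3]{LiuZhang:JacobiStructures} (two local pencils with semi-simple hydrodynamic leading term related by a reciprocal transformation have equal central invariants) together with \cite[Theorem 2]{CarletPosthumaShadrin:DeformSS} (the central invariants $\tilde c$ are a complete invariant for the $\cG_{II}$-action on $T^{loc}$); then $\iota(x)=\iota(y)$ forces $\tilde c(x)=\tilde c(y)$, hence $x=y$. You avoid the Liu--Zhang invariance theorem entirely and instead derive the needed invariance from the paper's own cohomological machinery: you combine Corollary~\ref{cor:DefSSPencil} with the local classification and reduce injectivity to the compatibility $c^{nl}\circ I=c^{loc}$, proved by matching the connecting homomorphisms of the local and non-local long exact sequences on purely local representatives. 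That compatibility does hold, and your identification of the key checks is right: for local $a$ the $\D_\zeta$-term of $D_i$ drops out, so $(D_2-\lambda D_1)a=D^{loc}a$; if moreover $D^{loc}a=\D_x(b_L+b_N\zeta)$ with local left-hand side, the $\zeta$-component gives $\tilde{\D}_x b_N=0$, whence $b_N=0$ in positive differential degree, so the primitive is automatically local and the two connecting maps agree; and Lemma~\ref{lem:CohS} (through the vanishing $H^2_3(\hat\cA[\lambda],D^{loc})=0$ of CKS) identifies $H^3_3(\cS[\lambda]/\mathbb{R},D_2-\lambda D_1)$ with the purely local group $H^3_3(\hat\cA[\lambda]/\mathbb{R},D^{loc})$, so both classifications land in the same copy of $\bigoplus_{i=1}^N C^\infty(\mathbb{R},u^i)$. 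One further point worth making explicit is that a $\cG_{II}$-normalization of a local pencil (killing the degree-$2$ part) is in particular a legitimate $\cR_{II}$-normalization, so both invariants are read off one and the same representative, which is what makes comparing connecting homomorphisms sufficient. As for what each route buys: yours is self-contained within the paper's cohomological framework and proves the stronger statement $c\circ\iota=\tilde c$, which --- since $\tilde c$ is surjective and $c$ is a bijection --- simultaneously yields surjectivity of $\iota$, thereby subsuming Proposition~\ref{prop:ciota} and the corollary after it; the cost is that the deferred ``bookkeeping'' is real work, essentially a lighter variant of the computation the paper carries out in the proof of Proposition~\ref{prop:ciota}, whereas the paper's citation argument for injectivity is immediate.
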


\begin{proof} This proposition immediately follows from~\cite[Theorem
  1.3]{LiuZhang:JacobiStructures} and~\cite[Theorem
  2]{CarletPosthumaShadrin:DeformSS}. By the latter result in the local case,
  we have an isomorphism of sets
  $\tilde{c}\colon T^{loc}/\cG_{II} \to \bigoplus_{i=1}^N C^\infty(\mathbb{R},
  u^i)$ (these are the so-called central invariants in the local case). On the
  other hand, \cite[Theorem 1.3]{LiuZhang:JacobiStructures} states that for any
  $x,y\in T^{loc}/\cG_{II}$ such that $\iota(x)=\iota(y)$ we have
  $\tilde{c}(x)=\tilde{c}(y)$. Hence, $x=y$, and $\iota$ is surjective.
\end{proof}

Corollary~\ref{cor:DefSSPencil} implies that there is a $\cR_{II}$ invariant
map $C\colon T^{nl} \to \bigoplus_{i=1}^N C^\infty(\mathbb{R}, u^i)$ that
descends to a bijection
$c\colon T^{nl}/\cR_{II} \to \bigoplus_{i=1}^N C^\infty(\mathbb{R}, u^i)$. We
have the following

\begin{proposition} \label{prop:ciota} The composition
  $c\circ \iota \colon T^{loc}/\cG_{II} \to \bigoplus_{i=1}^N
  C^\infty(\mathbb{R}, u^i)$ is surjective.
\end{proposition}

\begin{proof} Basically, we want to show that any cohomology class in
  $H^2_3(\cE)$ has a representative with a purely local density. Let
  $\int a^2+a^1\zeta$ represent a class in $H^2_3(\cE)$,
  $a^2\in \hat\cA^2_3[\lambda]$ and $a^1\in \hat\cA^1_3[\lambda]$. This means
  that
  \begin{align}
    (D_2-\lambda D_1) (a^2+a^1\zeta )\in \D_x(S^3_3[\lambda]),
  \end{align}	
  or, in other words, that there exist $b^3\in \hat\cA^3_3[\lambda]$ and
  $b^2\in \hat\cA^2_3[\lambda]$ such that
  \begin{align}
    D^{loc}(a^2)-(\hat{E}(P_2)-\lambda \hat{E}(P_1)) (a^1) + D^{loc}(a^1)\zeta = \tilde{\D_x}(b^3) + u^{i,1}\theta_i b^2 + \tilde{\D_x} (b^2)\zeta. 
  \end{align}
  Since $H^1_3(\hat{\cA}[\lambda], D^{loc})=0$~\cite[Theorem 2.13]{CKS}, there
  exist $e^0\in \hat{\cA}^0_2[\lambda]$ and $f^1\in \hat{\cA}^1_2[\lambda]$
  such that $D^{loc} e^0 = a^1 + \tilde{\D_x}(f^1)$.  Then,
  \begin{align}
    (D_2-\lambda D_1) (e^0\zeta) & = a^1\zeta + \tilde{\D_x}(f^1)\zeta -(\hat{E}(P_2)-\lambda \hat{E}(P_1)) (e^0)
    \\ \notag 
                                 & = a^1\zeta -(\hat{E}(P_2)-\lambda \hat{E}(P_1)) (e^0) -u^{i,1}\theta_i f^1+\D_x(f^1\zeta),
  \end{align}
  which implies that
  \begin{align}
    (d_2-\lambda d_1) \int e^0\zeta = \int a^1\zeta -(\hat{E}(P_2)-\lambda \hat{E}(P_1)) (e^0) -u^{i,1}\theta_i f^1
  \end{align}
  Thus, the cocycle $\int a^2+a^1\zeta$ is cohomologous to
  $\int a^2+(\hat{E}(P_2)-\lambda \hat{E}(P_1)) (e^0) + u^{i,1}\theta_i f^1$,
  which gives a pure local deformation for $\int P_2-\lambda P_1$.
\end{proof}

Taking into account that that $c$ is a bijection, an immediate corollary of
Proposition~\ref{prop:ciota} is the following:

\begin{corollary} The map $\iota$ is surjective (and hence a bijection). In
  particular, every orbit of the action of $\cR_{II}$ on $T^{nl}$ contains a
  purely local representative.
\end{corollary}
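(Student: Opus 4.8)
The plan is to combine the three facts already at our disposal: that $c\colon T^{nl}/\cR_{II} \to \bigoplus_{i=1}^N C^\infty(\mathbb{R},u^i)$ is a bijection (obtained from Corollary~\ref{cor:DefSSPencil}), that $c\circ\iota$ is surjective (Proposition~\ref{prop:ciota}), and that $\iota$ is injective (the Proposition proved just above). These assemble into a one-line diagram chase through the triangle $T^{loc}/\cG_{II}\xrightarrow{\iota}T^{nl}/\cR_{II}\xrightarrow{c}\bigoplus_{i=1}^N C^\infty(\mathbb{R},u^i)$, so there is essentially no hard analytic step; the only thing to keep straight is the bookkeeping of which arrow is known to be injective and which surjective.

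First I would deduce surjectivity of $\iota$. Since $c\circ\iota$ is surjective, its image is all of the target, whence
\[
c\bigl(\iota(T^{loc}/\cG_{II})\bigr) = (c\circ\iota)(T^{loc}/\cG_{II}) = \bigoplus_{i=1}^N C^\infty(\mathbb{R},u^i).
\]
Applying the inverse of the bijection $c$ (equivalently, using that $c$ is injective) gives $\iota(T^{loc}/\cG_{II}) = T^{nl}/\cR_{II}$, i.e.\ $\iota$ is surjective. Together with the injectivity of $\iota$ established above, this shows that $\iota$ is a bijection, which is the first assertion of the corollary.

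Finally I would unwind the geometric meaning of this formal conclusion. By construction $\iota$ is induced by the $\cG_{II}$-equivariant embedding $I\colon T^{loc}\to T^{nl}$, so an orbit in $T^{nl}/\cR_{II}$ lies in the image of $\iota$ precisely when it meets $I(T^{loc})$, that is, when it contains a purely local representative. Surjectivity of $\iota$ therefore says exactly that every $\cR_{II}$-orbit on $T^{nl}$ contains a purely local representative, establishing the ``in particular'' clause.

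I do not expect any genuine obstacle at this stage: all the analytic and cohomological content has already been absorbed into Corollary~\ref{cor:DefSSPencil} and Proposition~\ref{prop:ciota}, and what remains is a purely formal consequence of the commutativity of the triangle together with the bijectivity of $c$ and the injectivity of $\iota$. If one wanted to be maximally self-contained, the single point worth double-checking is that $c$ is genuinely a \emph{bijection} and not merely a surjection — but this is precisely what Corollary~\ref{cor:DefSSPencil} asserts, so no extra work is needed.
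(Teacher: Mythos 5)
Your proposal is correct and follows essentially the same route as the paper: the paper likewise deduces surjectivity of $\iota$ immediately from the bijectivity of $c$ (Corollary~\ref{cor:DefSSPencil}) and the surjectivity of $c\circ\iota$ (Proposition~\ref{prop:ciota}), combines it with the injectivity of $\iota$ from the preceding proposition, and reads off the existence of purely local representatives from the definition of $\iota$. No discrepancies; your unwinding of the ``in particular'' clause is exactly the intended argument.
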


It is just a different way to state
Theorem~\ref{thm:LocalLeadingTermOfAPencil}, so this corollary also completes
the proof of Theorem~\ref{thm:LocalLeadingTermOfAPencil}

\subsection{Roots of the characteristic polynomial of the symbol} In the purely
local case the central invariants, besides a purely cohomological definition,
can be computed directly from a representative of a deformation
(see~\cite{DubrovinLiuZhang:QuasiTriv} for details).  More precisely, one has
to compute the eigenvalues of the symbol of a representative of a deformation,
which behave as scalars with respect to the Miura group action.
In this section we extend this viewpoint to the invariants of the
Miura-reciprocal group.

First, we recall the construction from~\cite{DubrovinLiuZhang:QuasiTriv}. Let
$\int \sum_{d=1}^\infty \epsilon^{d-1}(P_{2,d}-\lambda P_{1,d})\in T^{loc}$,
and the densities are expanded as
$\sum_{s=0}^d (P_{2,d,s}-\lambda P_{1,d,s})^{ij}\theta_{i}\theta_{j}^{d-s}$,
$d\geq 1$, such that
\begin{align}
  \sum_{s=0}^d (P_{2,d,s}-\lambda P_{1,d,s})^{ij} \D_x^{d-s} = -\sum_{s=0}^d (-\D_x)^{d-s} \circ (P_{2,d,s}-\lambda P_{1,d,s})^{ji}
\end{align}
Consider the symbol of the densities of the bi-vector
$\int \sum_{d=1}^\infty \epsilon^{d-1}(P_{2,d}-\lambda P_{1,d})$, that is, the
sum
$\sum_{d=1}^\infty \epsilon^{d-1}(P_{2,d,0}-\lambda P_{1,d,0})^{ij}=
\sum_{d=1}^\infty (-\epsilon)^{d-1}(P_{2,d,0}-\lambda P_{1,d,0})^{ji}$. The
construction of the Miura group invariants from the eigenvalues of the symbol
is based on the following lemma:

\begin{lemma} Under the group of Miura transformations $\cG$ the symbol
  transforms linearly as a pencil of bi-linear forms:
  \begin{align}
    \sum_{d=1}^\infty \epsilon^{d-1}(P_{2,d,0}-\lambda P_{1,d,0})^{ij} \mapsto 
    \sum_{d=0}^\infty \epsilon^{d}\frac{\D w_d^i}{\D u^{k,d}}
    \sum_{d=1}^\infty \epsilon^{d-1}(P_{2,d,0}-\lambda P_{1,d,0})^{k\ell}
    \sum_{d=1}^\infty (-\epsilon)^{d}\frac{\D w_d^j}{\D u^{\ell,d}}
  \end{align}
  (here $w^i=\sum_{d=0}^\infty \epsilon^d w^i_d$, $w^i_d\in \cA_d$,
  $i=1,\dots,N$, are the new coordinates).
  Hence, the eigenvalues of this pencil behave as scalar with respect to the
  action of the Miura group.
\end{lemma}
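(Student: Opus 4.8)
The plan is to read the lemma off from the change-of-coordinates formula for bi-vectors established in Proposition~\ref{thm:Miura-Reciprocal-Bivector} by passing to principal symbols. For a Miura transformation in $\cG$ one has $H_0=1$ and $H_k=0$ for $k\geq 1$, hence $B=1$ and $\ell_B=0$; so the operator $\mathcal D$ of Equation~\eqref{eq:19} collapses to $\mathcal D^h_i=(\ell_{w^h})_i=\sum_\sigma \frac{\D w^h}{\D u^{i,\sigma}}\D_x^\sigma$, and the transformation law~\eqref{eq:56} becomes the purely differential conjugation $P^{hk}_y=(\ell_{w^h})_i\circ P^{ij}\circ(\ell_{w^k})^*_j$. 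One checks immediately that this preserves skew-adjointness, so the output is again the skew-adjoint operator attached to a bi-vector, and its symbol (the top-$\D_x$ coefficient, which lies in $\cA_0$ and is therefore unambiguous) is well defined.

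Next I would fix the bookkeeping of the two gradings. Each of the three operators above is a sum of terms $c\,\D_x^s$ with $c\in\cA_m$; the total differential degree $m+s$ equals the $\epsilon$-power for $\ell_{w^h}$ and $(\ell_{w^k})^*$, and equals the $\epsilon$-power $+1$ for $P$. Composition of such terms conserves the total differential degree, while the commutators $[\D_x^s,c]$ produced when moving coefficients across derivatives only \emph{lower} the $\D_x$-order. Consequently, within a fixed $\epsilon$-power the coefficient of the highest power of $\D_x$ in a product is exactly the product of the top-$\D_x$ coefficients of the three factors, with no contribution from any commutator. This is the usual multiplicativity of the principal symbol; verifying it carefully in the present doubly graded setting is the one genuinely technical (if routine) step, and I expect it to be the main obstacle.

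It then remains to compute the three principal symbols. For $\ell_{w^i}$ (with $w^i=\sum_d\epsilon^d w^i_d$, $w^i_d\in\cA_d$) the top-$\D_x$ part of the $\epsilon^d$-component comes from $\sigma=d$ and equals $\epsilon^d\frac{\D w^i_d}{\D u^{k,d}}\D_x^d$; for $(\ell_{w^j})^*=\sum_\sigma(-\D_x)^\sigma\circ\frac{\D w^j}{\D u^{\ell,\sigma}}$ it equals $\epsilon^d\frac{\D w^j_d}{\D u^{\ell,d}}(-\D_x)^d$; and for $P$ it is, by definition of the symbol, $\sum_d\epsilon^{d-1}(P_{2,d,0}-\lambda P_{1,d,0})^{k\ell}\D_x^d$. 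Since in every factor the power of $\D_x$ occurring in the principal symbol is pinned down by the $\epsilon$-power, all contributions to a given $\epsilon^{D-1}$ automatically sit at $\D_x$-order $D$, so I may set $\D_x=1$ without losing information; the sign $(-\D_x)^d$ then turns $\epsilon^d$ into $(-\epsilon)^d$. Multiplying the three resulting $\epsilon$-series reproduces verbatim the claimed transformation of the symbol.

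Finally, the eigenvalue statement follows formally. Setting $S(\epsilon)^i_k:=\sum_d\epsilon^d\frac{\D w^i_d}{\D u^{k,d}}$, which is invertible because $S(0)^i_k=\frac{\D w^i_0}{\D u^k}$ is the Jacobian of the leading diffeomorphism, the transformation reads $A\mapsto S(\epsilon)\,A\,S(-\epsilon)^{T}$ for the symbol matrix $A^{ij}=\sum_d\epsilon^{d-1}(P_{2,d,0}-\lambda P_{1,d,0})^{ij}$. The identical congruence applies simultaneously to the $P_1$- and $P_2$-parts, so writing $A=A_2-\lambda A_1$ with $A^{ij}_a=\sum_d\epsilon^{d-1}P_{a,d,0}^{ij}$ we get $A_2-\lambda A_1\mapsto S(\epsilon)(A_2-\lambda A_1)S(-\epsilon)^T$. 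Hence $\det(A_2-\lambda A_1)$ is multiplied by the scalar $\det S(\epsilon)\det S(-\epsilon)$, and equivalently $A_1^{-1}A_2$ undergoes the similarity $S(-\epsilon)^{-T}(A_1^{-1}A_2)\,S(-\epsilon)^{T}$; in either formulation the generalized eigenvalues of the pencil are left unchanged, i.e.\ they behave as scalars under the action of $\cG$.
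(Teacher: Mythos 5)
Your proof is correct. There is nothing in the paper to compare it against directly: the paper states this lemma without proof, recalling the construction from~\cite{DubrovinLiuZhang:QuasiTriv}, so your argument fills in a step the authors delegate to the literature. Your route — specializing Proposition~\ref{thm:Miura-Reciprocal-Bivector} to $B=1$, $\ell_B=0$ so that the transformation law collapses to $P^{hk}_y=(\ell_{w^h})_i\circ P^{ij}\circ(\ell_{w^k})^*_j$, and then extracting leading coefficients — is the natural one and is consistent with how the paper itself handles the reciprocal part of the symbol computation in the Proposition that follows the lemma (where only the terms that can contribute to the symbol are tracked). Your bookkeeping is the right justification of symbol multiplicativity here: writing every term as $c\,\D_x^s$ with $c\in\cA_m$, the sum $m+s$ is conserved under composition while every commutator $[\D_x^s,c]$ strictly lowers the $\D_x$-order, so within a fixed $\epsilon$-power the coefficient of the top power of $\D_x$ is the product of the three top coefficients, each of which lies in $\cA_0$ and is pinned down by the $\epsilon$-power of its factor. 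The eigenvalue statement then follows from the congruence $A\mapsto S(\epsilon)\,A\,S(-\epsilon)^{T}$ exactly as you argue, since $S(0)$ is the invertible Jacobian of the leading diffeomorphism. One further point in your favor: your formula sums the third factor from $d=0$, whereas the paper's display starts it at $d=1$; yours is the correct version (with the paper's range the identity Miura transformation would annihilate the symbol), so you have implicitly corrected a typo in the statement.
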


There are $N$ roots $\lambda_i$, $i=1,\dots,N$, of the $\lambda$-polynomial
\begin{equation}\label{chpol}
  \mathrm{det}\left(\sum_{d=1}^\infty \epsilon^{d-1}(P_{2,d,0}-\lambda P_{1,d,0})\right)
\end{equation}
which are the formal power series in $\epsilon$ with the coefficients given by
smooth functions in $u^1,\dots,u^N$, with the leading terms in $\epsilon$ given
by $m_i = u^i + O(\epsilon)$. These eigenvalues are further used to derive the
closed formulas for the central invariants of a pencil
$\int \sum_{d=1}^\infty \epsilon^{d-1}(P_{2,d}-\lambda P_{1,d})\in T^{loc}$.

In the weakly non-local case of localizable shape, the densities of
$\int \sum_{d=1}^\infty \epsilon^{d-1}(P_{2,d}-\lambda P_{1,d})\in T^{nl}$ can
be uniquely expanded as
$\sum_{s=0}^d (P_{2,d,s}-\lambda P_{1,d,s})^{ij}\theta_{i}\theta_{j}^{d-s} +
(Q_{2,d}-\lambda Q_{1,d})^i \theta_i\zeta $, $d\geq 1$, such that
\begin{align}
  \sum_{s=0}^d (P_{2,d,s}-\lambda P_{1,d,s})^{ij} \D_x^{d-s} = -\sum_{s=0}^d (-\D_x)^{d-s} \circ (P_{2,d,s}-\lambda P_{1,d,s})^{ji}.
\end{align}
(this expansion we call the ``normal form'' below).

\begin{proposition} Let
  $\int \sum_{d=1}^\infty \epsilon^{d-1}(P_{2,d}-\lambda P_{1,d})\in T^{nl}$
  and let
  \[\lambda^i=r^i+\epsilon^2\lambda^i_2+ \epsilon^4\lambda^i_4+\cdots\]
  be the $\lambda$-roots of the characteristic polynomial \eqref{chpol}.
  The quantities
  \[c^i_{2k}=\frac{\lambda^i_{2k}}{(f^i)^k},\qquad k=0,1,2,...\] (where $f^i$
  are the diagonal entries of the first metric in canonical coordinates) are
  invariant under the action of $\cR$.
\end{proposition}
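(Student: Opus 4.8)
The plan is to verify the invariance on a generating set of $\cR$, splitting a general Miura-reciprocal transformation into its Miura part (an element of $\cG$ acting only on the dependent variables through a perturbative series) and a purely reciprocal part $dy=B\,dx$, $w^i=u^i$. For the Miura part the invariance is essentially already available: by the preceding Lemma the symbol $\sum_{d\geq 1}\epsilon^{d-1}(P_{2,d,0}-\lambda P_{1,d,0})$ transforms as a pencil of bilinear forms, $S(\lambda)\mapsto M(\epsilon)\,S(\lambda)\,M(-\epsilon)^{T}$ with $M(\epsilon)^i_k=\sum_{d}\epsilon^{d}\,\frac{\D w^i_d}{\D u^{k,d}}$, so that $\det S(\lambda)$ acquires only the $\lambda$-independent factor $\det M(\epsilon)\det M(-\epsilon)$; hence the roots $\lambda^i$, and therefore all the coefficients $\lambda^i_{2k}$, are unchanged, the first metric transforms tensorially, and the $c^i_{2k}$ behave as scalars. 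Thus the genuine content of the Proposition is concentrated in the reciprocal generators.

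For a reciprocal transformation $dy=B\,dx$, $w^i=u^i$, I would compute the effect on the symbol from Proposition~\ref{thm:Miura-Reciprocal-Bivector}, exactly as in the proof of Proposition~\ref{prop:F-P} but retaining only the top-order part in $\D_y$. Writing $B=B_0+\epsilon B_1+\cdots$ with $B_k\in\cA_k$ and $B_0=H_0(u)$, the operator $\mathcal{D}^j_i=B\delta^j_i-u^j_x\D_x^{-1}(\ell_B)_i$ (here $(\ell_{u^j})_i=\delta^j_i$) is multiplication by $B$ up to a strictly nonlocal, lower-order tail, so substituting $\D_x=B\D_y$ and using $P^{hk}_y=\tfrac1B\mathcal{D}^h_iP^{ij}_x(\mathcal{D}^*)^k_j$ sends the degree-$d$ leading coefficient to $B^{\,d+1}P_{a,d,0}$. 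The crucial bookkeeping point is that only the dispersionless part $B_0$ survives in the leading symbol: since $P_{a,d,0}\in\cA_0$, every summand of $B^{\,d+1}P_{a,d,0}$ of positive differential degree still multiplies $\theta_k^{d}$ and therefore lands in a strictly sub-leading slot $P_{a,d+m,m}$, $m\geq 1$, of a higher-degree term rather than in a leading coefficient. Hence $P_{a,d,0}\mapsto B_0^{\,d+1}P_{a,d,0}$ and the symbol transforms as
\begin{equation}
  S(\lambda)\ \longmapsto\ B_0^{2}\,S(\lambda)\big|_{\epsilon\mapsto\epsilon B_0},
\end{equation}
i.e.\ the reciprocal transformation amounts to the rescaling $\epsilon\mapsto\epsilon B_0$ together with multiplication by the $\lambda$-independent scalar $B_0^{2}$.

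From this the conclusion is immediate. The factor $B_0^{2}$ does not affect the roots of the characteristic polynomial, so $\lambda^i(\epsilon)\mapsto\lambda^i(\epsilon B_0)$; since $B_0$ is independent of $\epsilon$ this preserves the even-power expansion and yields $\lambda^i_{2k}\mapsto B_0^{2k}\lambda^i_{2k}$. The degree-$1$ leading coefficient is the first metric, so by the $d=1$ case (Proposition~\ref{prop:F-P}) $f^i\mapsto B_0^{2}f^i$; moreover the canonical coordinates $r^i=\lambda^i_0$ are fixed by $\epsilon\mapsto\epsilon B_0$, so the canonical form is preserved and $f^i$ remains well defined. Therefore
\begin{equation}
  c^i_{2k}=\frac{\lambda^i_{2k}}{(f^i)^{k}}\ \longmapsto\ \frac{B_0^{2k}\lambda^i_{2k}}{(B_0^{2}f^i)^{k}}=\frac{\lambda^i_{2k}}{(f^i)^{k}},
\end{equation}
which is the asserted invariance; combined with the $\cG$-invariance of the first paragraph this gives invariance under all of $\cR$.

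The step I expect to be the main obstacle is precisely the claim that only $B_0$ enters the leading symbol. This requires controlling two sources of corrections: the dispersive part $B-B_0$ of $B$, and the nonlocal tail $-u^j_x\D_x^{-1}(\ell_B)_i$ of $\mathcal{D}$ (together with the nonlocal terms produced when transforming $P^{ij}_{nonloc}$). Both must be shown to feed only into sub-leading, higher-differential-degree coefficients of the normal form, so that the top-order symbol sees nothing but the local, dispersionless rescaling by $B_0$. This is the same mechanism that underlies the closure of the localizable shape under $\cR$ established above, and the degree count sketched in the second paragraph is meant to make it precise.
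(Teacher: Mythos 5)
Your proposal is correct and takes essentially the same route as the paper: reduce to the Miura part (disposed of by the preceding Lemma) plus a pure reciprocal part, show that the symbol transforms as $S(\lambda)\mapsto B_0^2\, S(\lambda)\big|_{\epsilon\mapsto B_0\epsilon}$ because all positive-differential-degree and nonlocal contributions land in sub-leading coefficients of the normal form, and then cancel $\lambda^i_{2k}\mapsto B_0^{2k}\lambda^i_{2k}$ against $f^i\mapsto B_0^2 f^i$ from the Ferapontov--Pavlov formula. The only cosmetic difference is that the paper splits the reciprocal part into transformations of the 1st kind ($B=B_0(u)$, which produces the rescaling) and of the 2nd kind ($B=1+O(\epsilon)$, which it shows leaves the symbol untouched), whereas you treat a general dispersive $B$ in a single degree count.
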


\begin{proof}
  Taking into account the above lemma we focus on pure reciprocal
  transformations.  The action of reciprocal transformations of \emph{1st kind}
  on the coefficients of the symbols can be easily obtained using the same
  arguments used in the proof of Proposition \eqref{prop:F-P}. Indeed
  \begin{eqnarray*}
    \tilde P^{ij}_\lambda&=&B^{-1} \left(B\delta^i_k - u^i_x \D_x^{-1} \frac{\D B}{\D u^k}\right) P^{k\ell}_\lambda 
                             \left(B\delta^j_\ell + \frac{\D B}{\D u^\ell} \D_x^{-1} u^j_x  \right)\\
                         &=&BP^{ij}_\lambda+P^{il}_\lambda\frac{\D B}{\D u^\ell} \D_x^{-1} u^j_x -\frac{1}{B}u^i_x \D_x^{-1} \frac{\D B}{\D u^k}P^{kj}_\lambda- u^i_x \D_x^{-1} \frac{\D B}{\D u^k} P^{k\ell}_\lambda\frac{\D B}{\D u^\ell} \D_x^{-1} u^j_x .
  \end{eqnarray*}
  The second, the third and the fourth terms cannot contribute to the symbol of
  $\tilde P^{ij}_\lambda$, while in the first term the only contributions come
  from
  \[B\sum_{d=1}^\infty \epsilon^{d-1}(P_{2,d,0}-\lambda
    P_{1,d,0})\partial_x^d=B\sum_{d=1}^\infty \epsilon^{d-1}(P_{2,d,0}-\lambda
    P_{1,d,0})B^d\partial_y^d=B^2\sum_{d=1}^\infty
    (B\epsilon)^{d-1}(P_{2,d,0}-\lambda P_{1,d,0})\partial_y^d\] that implies
  \[\sum_{d=1}^\infty \epsilon^{d-1}(P_{2,d,0}-\lambda P_{1,d,0})^{ij}\to
    B^2\sum_{d=1}^\infty(B\epsilon)^{d-1}(P_{2,d,0}-\lambda P_{1,d,0})^{ij}.\]
  This means that
  \[\lambda^i\to r^i+(B\epsilon)^2\lambda^i_2+(B\epsilon)^4\lambda^i_4+\cdots\]
  or, equivalently, that
  \[\lambda^i_{2k}\to B^{2k}\lambda^i_{2k}.\]
  The result then follows from the transformation rule for the contravariant
  metric (see \eqref{eq:F-P-loc}): $f^i\to B^2f^i$. In the case of reciprocal
  transformation of \emph{2nd kind} we observe that they do not affect the
  symbol of the pencil. Indeed a bivector transforms according to the following
  rule
  \begin{align}
    \tilde P^{ij}\coloneqq B^{-1} \left(B\delta^i_k - u^i_x \D_x^{-1} \frac{\D B}{\D u^k_{\sigma}}\partial_x^{\sigma}\right) P^{k\ell} 
    \left(B\delta^j_\ell + (-\partial_x)^{\tau}\frac{\D B}{\D u^\ell_\tau} \D_x^{-1} u^j_x  \right).
  \end{align}
  where
  \[B= 1+H=1 +
    \sum^\infty_{k=1}\epsilon^kH_k(u^j,u^j_x,\ldots,u^j_\sigma),\qquad
    H_k\in\mathcal{A}_k.\] Thus we have
  \begin{align}
    \tilde P^{ij}\coloneqq P^{ij}-\frac{1}{B}u^i_x \D_x^{-1} \frac{\D H}{\D u^k_{\sigma}}\partial_x^{\sigma}P^{kj}+\left(\delta^i_k - \frac{1}{B}u^i_x \D_x^{-1} \frac{\D H}{\D u^k_{\sigma}}\partial_x^{\sigma}\right) P^{k\ell} 
    \left(H\delta^j_\ell + (-\partial_x)^{\tau}\frac{\D H}{\D u^\ell_\tau} \D_x^{-1} u^j_x  \right).
  \end{align}
  Since the symbol of the bivector contains only the subset of the coefficients
  which depend only on the $u$'s but not on their $x$-derivatives the second
  term and the third terms above cannot contribute to it. This implies that the
  symbol of each bivector defining the pencil is unaffected by these
  transformations. For Miura reciprocal transformations \eqref{eq:756} the
  transformation rule for the symbol of the pencil is obtained combining the
  Lemma 4.11 with the above rule. It turns out that the symbol of the pencil
  transforms in the following way
  \begin{align}
    \sum_{d=1}^\infty \epsilon^{d-1}(P_{2,d,0}-\lambda P_{1,d,0})^{ij} \mapsto 
    B^2\sum_{d=0}^\infty \epsilon^{d}\frac{\D w_d^i}{\D u^{k,d}}
    \sum_{d=1}^\infty (B\epsilon)^{d-1}(P_{2,d,0}-\lambda P_{1,d,0})^{k\ell}
    \sum_{d=1}^\infty (-\epsilon)^{d}\frac{\D w_d^j}{\D u^{\ell,d}}.
  \end{align}

\end{proof}


\section{Projective-reciprocal invariance of the Doyle--Pot\"emin form}

\label{sec:Projective}

In this Section we make a first step towards the study of the
projective-reciprocal group action.  Consider a local operator of homogeneous
differential order $d+2$, $d\geq 2$ of the form
$P^{ij}=\D_x\circ Q^{ij}\circ\D_x$. We call this presentation of an operator
the \emph{Doyle--Pot\"emin form} (see Subsection~\ref{sec:proj-group-doyle}).

We prove the following theorem:

\begin{theorem}
  The projective group preserves the Doyle--Pot\"emin form of an operator. More
  precisely, the image of a homogeneous skew-symmetric operator of the form
  $\D_x \circ Q^{ij} \circ \D_x$, $\deg_{\D_x}Q^{ij} = d\geq 0$ under the
  action of an element of $\cP$ is a homogeneous skew-symmetric operator of
  the form $\D_x \circ \tilde Q^{ij} \circ \D_x$,
  $\deg_{\D_x}\tilde Q^{ij} = d\geq 0$
\end{theorem}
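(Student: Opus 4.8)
The plan is to feed the projective-reciprocal transformation \eqref{eq:81} into the general change-of-coordinates formula \eqref{eq:56} for a bi-vector and to show that the non-local operator $\mathcal{D}$ of \eqref{eq:19}, once composed with $\D_x$, collapses onto a purely local operator of a special factorized form, so that the two outer $\D_x$'s of the Doyle--Pot\"emin presentation survive the transformation. Since the transformation lies in $\cP\subset\cR_I$, both $B=a^0_ju^j+a^0_0$ and $Q^i=w^i=(a^i_ju^j+a^i_0)/B$ depend only on the fields $u^j$, not on their $x$-derivatives; hence $(\ell_{Q^j})_i=\pd{Q^j}{u^i}$ and $(\ell_B)_i=\pd{B}{u^i}=a^0_i$ are multiplication operators, the latter by a \emph{constant}. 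I will use Convention~\ref{conv:ux} and the relation $\D_x=B\D_y$ throughout.

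First I would record the key algebraic identity. From \eqref{eq:19}, and since $a^0_i$ is a constant, one has $\mathcal{D}^j_i=B\,\pd{Q^j}{u^i}-a^0_i\,(\D_xQ^j)\,\D_x^{-1}$. Composing on the right with $\D_x$ cancels the non-local tail, leaving $\mathcal{D}^j_i\circ\D_x=B\,\pd{Q^j}{u^i}\,\D_x-a^0_i\,\D_x(Q^j)$. Differentiating \eqref{eq:81} gives the projective identity $\pd{Q^j}{u^i}=(a^j_i-a^0_iQ^j)/B$, i.e.\ $B\,\pd{Q^j}{u^i}=a^j_i-a^0_iQ^j$, which lets me recognise the right-hand side as a composition of $\D_x$ with an order-zero multiplication operator,
\[
  \mathcal{D}^j_i\circ\D_x=\D_x\circ(a^j_i-a^0_iQ^j).
\]
Taking the formal adjoint of this identity at once yields the dual relation $\D_x\circ(\mathcal{D}^*)_j^k=(a^k_j-a^0_jQ^k)\circ\D_x$.

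Then I would substitute $P^{ij}_x=\D_x\circ Q^{ij}\circ\D_x$ into \eqref{eq:56} and push the outer derivatives through $\mathcal{D}$ and $\mathcal{D}^*$ via the two relations above, obtaining
\[
  P^{hk}_y=\tfrac1B\,\D_x\circ\big[(a^h_i-a^0_iQ^h)\,Q^{ij}\,(a^k_j-a^0_jQ^k)\big]\circ\D_x .
\]
Finally, using $\D_x=B\D_y$ (hence $\tfrac1B\D_x=\D_y$) on the left factor and $\D_x=B\D_y$ on the right factor, this becomes $P^{hk}_y=\D_y\circ\tilde Q^{hk}\circ\D_y$ with
\[
  \tilde Q^{hk}=(a^h_i-a^0_iQ^h)\circ Q^{ij}\circ(a^k_j-a^0_jQ^k)\circ B ,
\]
which is again a local operator (a composition of order-zero multiplications with the local $Q^{ij}$, all remaining $\D_x$ rewritten through $\D_x=B\D_y$). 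This is exactly the Doyle--Pot\"emin form in the $y$-coordinate. Skew-symmetry of $P_y$ is automatic from Proposition~\ref{thm:Miura-Reciprocal-Bivector}, and the degree is preserved because every transformation in $\cR_I$ respects the $\deg_{\D_x}$-grading (the data $B$, $Q^i$ and the factor $B$ relating $\D_x$ and $\D_y$ all have $\deg_{\D_x}=0$), so that $\deg_{\D_y}\tilde Q^{hk}=\deg_{\D_x}Q^{ij}=d$.

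The step I expect to be the main obstacle is the clean cancellation of the non-local tail, where projectivity is used twice. If $B$ were not affine, then $(\ell_B)_i$ would be a non-constant function and $\D_x^{-1}\circ(\ell_B)_i\circ\D_x$ would not simplify; and even with $B$ affine, the factorization $\mathcal{D}^j_i\circ\D_x=\D_x\circ(\text{order zero})$ requires precisely that $B\,\pd{Q^j}{u^i}+a^0_iQ^j$ be constant, which is exactly the projective identity. Verifying that these two conditions are precisely the defining properties \eqref{eq:81} of $\cP$ — so that the argument genuinely does not extend to a general element of $\cR_I$ — is the conceptual heart of the proof; the remaining manipulations are routine bookkeeping with $\D_x^{-1}$ and adjoints in the style of Proposition~\ref{prop:F-P}.
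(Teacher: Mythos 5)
Your proof is correct and follows essentially the same route as the paper: both plug the projective transformation into the general change-of-coordinates formula \eqref{eq:56}, use the constancy of $a^0_i$ together with the projective identity $B\,\partial Q^j/\partial u^i = a^j_i - a^0_i Q^j$ to collapse the non-local tails, and commute the outer $\D_x$'s through to exhibit $P_y = \D_y\circ\tilde Q\circ\D_y$ with the same explicit $\tilde Q$ (the paper's $\frac{1}{A^0}\partial_{u^k}w^i\,(A^0)^2\,Q^{kl}\,(A^0)^2\,\partial_{u^l}w^j$ agrees with yours after commuting the multiplication operators). Your packaging of the computation as the single operator identity $\mathcal{D}^j_i\circ\D_x=\D_x\circ(a^j_i-a^0_iQ^j)$ plus its formal adjoint is merely a streamlined version of the paper's separate treatment of the left and right factors.
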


\begin{proof} Consider an element of the group $\cP$ given by 	
\begin{align}
		dy & = A^0dx,\\ \notag 
		w^i & =A^i/A^0, &&  i=1,\ldots,N,
\end{align}
where $A^i\coloneqq a^i_j u^j + a^i_0$, $i=0,1,\dots,N$.  Since the functions
$A^i$ and $A^0$ do not depend on the higher jet variables,
Theorem~\ref{thm:Miura-Reciprocal-Bivector} implies that the operator
$P^{ij}=\D_x \circ Q^{ij} \circ \D_x$ in the coordinates $y,w^1,\dots,w^N$ is
represented as
\begin{align}
	\tilde{P}^{ij} & = \frac 1{A^0} \Bigg(A^0 \pdS{\Big(\frac{A^i}{A^0}\Big)}{u^k} -\D_x\Big(\frac{A^i}{A^0}\Big) \D_x^{-1} \circ \pdS{A^0}{u^k}\Bigg)\D_x \circ Q^{kl} \circ \D_x \circ 
	\\ \notag & \qquad 
	\Bigg( \pdS{\Big(\frac{A^j}{A^0}\Big)}{u^l} A^0 + \pdS{A^0}{u^l} \D_x^{-1} \circ\D_x\Big(\frac{A^j}{A^0}\Big) \Bigg)
\end{align}
Now we see that
\begin{align}
	&  \D_x \circ \Bigg( \pdS{\Big(\frac{A^j}{A^0}\Big)}{u^l} A^0 + \pdS{A^0}{u^l} \D_x^{-1} \circ\D_x\Big(\frac{A^j}{A^0}\Big) \Bigg)
	 = \D_x \circ \Bigg( a^j_l - a^0_l \Big(\frac{A^j}{A^0}\Big) + a^0_l \D_x^{-1} \circ\D_x\Big(\frac{A^j}{A^0}\Big) \Bigg)
	\\ \notag & 
	=  \Bigg( a^j_l - a^0_l \Big(\frac{A^j}{A^0}\Big) \Bigg) \D_x 
	= (A^0)^2 \pdS{w^j}{u^l}  \D_y
\end{align}
and analogously
\begin{align}
	& \frac 1{A^0} \Bigg(A^0 \pdS{\Big(\frac{A^i}{A^0}\Big)}{u^k} -\D_x\Big(\frac{A^i}{A^0}\Big) \D_x^{-1} \circ \pdS{A^0}{u^k}\Bigg)\D_x
	= \frac 1{A^0} \Bigg(a^i_k -\Big(\frac{A^i}{A^0}\Big)a^0_k -\D_x\Big(\frac{A^i}{A^0}\Big) \D_x^{-1} \circ a^0_k\Bigg)\D_x
	\\ \notag &
	= \frac 1{A^0} \D_x \circ \Bigg(a^i_k -\Big(\frac{A^i}{A^0}\Big)a^0_k \Bigg) =
	\D_y \circ \frac{1}{A^0} \pdS{w^i}{u^k} (A^0)^2. 
\end{align}
Thus we see that $\tilde P^{ij}$ takes the form $\D_y \circ \tilde Q^{ij} \circ \D_y$, where the operator $ \tilde Q^{ij}$ is equal to 
\begin{align}
\tilde Q^{ij} = \frac 1{A^0} \pdS{w^i}{u^k} (A^0)^2 Q^{kl} (A^0)^2 \pdS{w^j}{u^l} 	
\end{align}
after the substitution $w^i(u^1,\dots,u^N)=A^i/A^0$ and $\D_y = (A^0)^{-1}\D_x$, which makes it manifestly skew-symmetric and homogeneous of the same degree.
\end{proof}

\begin{remark}
  Note that we don't use the Poisson property in the proof (and we don't have
  it in the statement of the theorem). This allows us to apply the
  projective-reciprocal transformation to any homogeneous skew-symmetric
  operators of the Doyle--Pot\"emin form, and the action would preserve the
  form.
\end{remark}

\begin{remark}
  Interesting examples of skew-symmetric operators in the Doyle--Pot\"emin form
  are coming from the theory of Dubrovin--Zhang
  hierarchies~\cite{dubrovin01:_normal_pdes_froben_gromov_witten}. Is it proved
  in~\cite{BuryakPosthumaShadrin:PolynomialBracket,BPS-2} that Dubrovin--Zhang
  hierarchies posses a Poisson bracket given by an operator of the shape
  $\sum_{p=0}^\infty \epsilon^{2p} P^{ij}_{2p+1}$, where
  $P^{ij}_1 = \eta^{ij}\D_x$ for some constant inner product $\eta^{ij}$, and
  for $p\geq 1$ the operators $P^{ij}_{2p+1}$ are homogeneous skew-symmetric
  operators of the shape $\sum_{e=0}^{2p+1} P^{ij}_{2p+1,e}\D_x^{2p+1-e}$,
  where $\deg_{\D_x} P^{ij}_{2p+1,e} = e$, such that $P^{ij}_{2p+1,0} =
  0$. Using that the operators $P^{ij}_{2p+1}$, $p\geq 1$, are skew-symmetric,
  it is easy to show that each of them is of the Doyle--Pot\"emin form.
\end{remark}

\printbibliography

\end{document}